\documentclass[11pt, letterpaper]{article}
\usepackage{fullpage,indentfirst,rotating}
\usepackage{adjustbox}
\usepackage{color}

\addtolength{\textheight}{0.8in}
\usepackage{epsfig}
\usepackage{rotating}
\usepackage{subfigure}
\usepackage{amsfonts}
\usepackage{latexsym}
\usepackage{amsmath}
\usepackage{amssymb}
\usepackage{amsthm}
\usepackage{amstext}
\usepackage{float}
\usepackage[utf8]{inputenc}
\usepackage[english]{babel}
\usepackage{mathbbol}
\usepackage{bigints}
\usepackage{longtable,array}
\usepackage{longtable}
\usepackage{mathtools}
\usepackage{bbm}
\usepackage{booktabs}
\usepackage{newpxmath,newpxtext}
\usepackage{setspace}
\usepackage{xcolor}
\usepackage{blindtext}
\usepackage{sectsty}
\setlength{\marginparwidth}{2cm}
\usepackage{todonotes}
\usepackage{bigints}
\usepackage{multicol}
\usepackage{comment}
\usepackage{enumitem}
\usepackage{subfigure}
\sectionfont{\centering}
\subsectionfont{\centering}
\onehalfspacing

\usepackage{tikz}
\usetikzlibrary{tikzmark,calc}
\usetikzlibrary{arrows.meta, patterns}
\usetikzlibrary{shapes.arrows}
\usetikzlibrary{positioning,tikzmark}

\newcounter{alphasect}
\def\alphainsection{0}

\let\oldsection=\section
\def\section{%
  \ifnum\alphainsection=1%
    \addtocounter{alphasect}{1}
  \fi%
\oldsection}%

\renewcommand\thesection{%
  \ifnum\alphainsection=1%
    \Alph{alphasect}
  \else%
    \arabic{section}
  \fi%
}%

\newenvironment{alphasection}{%
  \ifnum\alphainsection=1%
    \errhelp={Let other blocks end at the beginning of the next block.}
    \errmessage{Nested Alpha section not allowed}
  \fi%
  \setcounter{alphasect}{0}
  \def\alphainsection{1}
}{%
  \setcounter{alphasect}{0}
  \def\alphainsection{0}
}%

\usepackage{booktabs}
\usepackage{multirow}
\usepackage{siunitx}

\usepackage[colorlinks=true, linkcolor=blue, citecolor=blue]{hyperref}
\usepackage{soul}
\usepackage[authoryear]{natbib}
\usepackage[top=1 in, bottom=1 in, left=1 in , right=1 in]{geometry}
\allowdisplaybreaks




\newcommand{\ben}{\begin{enumerate}}
\newcommand{\een}{\end{enumerate}}
\newcommand{\bit}{\begin{itemize}}
\newcommand{\eit}{\end{itemize}}

\newcommand{\indep}[0]{\ensuremath{\perp \! \! \! \perp}}

\newcommand{\bft}{\mathbf{t}}



\providecommand{\keywords}[1]{\textit{Keywords:} #1}
\providecommand{\codes}[1]{\textit{JEL Classification:} #1}

\newtheorem{assum}{Assumption}

\newtheorem{prop}{Proposition}
\newtheorem{theorem}{Theorem}
\newtheorem{lemma}{Lemma}
\newtheorem{fact}{Fact}
\theoremstyle{definition}
\newtheorem{remark}{Remark}
\newtheorem{exmp}{Example}
\newtheorem{Algorithm}{Algorithm}

\title{A Nonparametric Test of Heterogeneous Treatment Effects under Interference 
}

\author{
  Julius Owusu \thanks{School of Economics,  University of Bristol,
  \texttt{Julius.owusu@bristol.ac.uk}}\\
}

\vspace{2cm}

\begin{document}
\doublespacing

\maketitle
\thispagestyle{empty}
\begin{abstract}
Statistical inference of heterogeneous treatment effects (HTEs) across predefined subgroups is challenging when units interact because treatment effects may vary by pre-treatment variables, post-treatment exposure variables (that measure the exposure to other units’ treatment statuses), or both. Thus, the conventional HTEs testing procedures may be invalid under interference. In this paper, I develop statistical methods to infer HTEs and disentangle the drivers of treatment effects heterogeneity in populations where units interact. Specifically, I incorporate clustered interference into the potential outcomes model and propose kernel-based test statistics for the null hypotheses of (i) no HTEs by treatment assignment (or post-treatment exposure variables) for all pre-treatment variables values and (ii) no HTEs by pre-treatment variables for all treatment assignment vectors. I recommend a multiple-testing algorithm to disentangle the source of heterogeneity in treatment effects. I prove the asymptotic properties of the proposed test statistics. Finally, I illustrate the application of the test procedures in an empirical setting using an experimental data set from a Chinese weather insurance program.
\newline
\newline
\keywords{Heterogeneous treatment effects, conditional average treatment effects,
  interference.}
 \codes{C01, C12, C14.}
\end{abstract}

\clearpage
\setcounter{page}{1}
\section{Introduction}\label{Introduction}
The literature on causal inference focuses primarily on identifying and estimating aggregate treatment effects. However, it is increasingly recognized that these aggregate effect metrics, while useful for measuring social welfare, often overlook the variations in treatment effects. These variations are crucial for designing welfare-maximizing treatment assignment rules. For instance, a job search assistance program may show positive aggregate effects on income and welfare. However, it could also lead to significant income and welfare disparities in society, as the impact of the program may vary across individuals.
This observation has sparked a significant and growing research interest in estimating and inferring heterogeneous treatment effects (HTEs). This shift in focus from aggregate treatment effects to HTEs is a significant development in the field of causal inference. The classical approach to infer HTEs involves estimating and comparing conditional average treatment effects (CATEs) of predefined subgroups in a population.
The formal comparison of CATEs requires some hypothesis testing procedure.
\cite{crump2006nonparametric}, \cite{ding2016randomization}, \cite{wager2018estimation}, and  \cite{sant2021nonparametric} are  among the few  existing HTEs testing papers.

This paper introduces two nonparametric tests for inferring HTEs across subgroups while accounting for interference among economic units. According to \cite{cox1958planning}, interference arises when the treatment response of one unit depends on the treatment of other units in the population. In other words, a unit's treatment response is a function of the treatment assignment --- the vector of treatments of all units. Interference may result from physical, virtual, or social connections between members of a population. While the mechanisms behind interference may vary, its presence complicates the inference of HTEs across subgroups. In the presence of interference, treatment effects may depend on pre-treatment variables, post-treatment exposure variables (which capture exposure to other units' treatment statuses), or both.  For example, suppose we find that the effectiveness of a Covid-19 vaccine varies across cities. The vaccine's efficacy for an individual may be influenced by the vaccination rate among their physical contacts. It is, therefore, essential to investigate whether the observed variation in vaccine effectiveness is driven by inherent population differences across cities (e.g., genetic factors) or by differences in vaccination rates, a post-treatment exposure variable.  In this example, variations in vaccination rates may mask the heterogeneity due to natural differences across cities in a standard test that does not control for the vaccination rates.  This highlights that traditional tests for HTEs --- which assume no interference between units --- are likely to produce inaccurate results when interference is present.

To address the challenges of testing HTEs in the presence of interactions between units, I consider a clustered interference setting, where interference is limited to within clusters. 
 In this setting, I introduce kernel-based test statistics for the null hypotheses of (i) constant treatment effects (CTEs) by treatment assignment for all values of the pre-treatment variables and (ii) CTEs by the pre-treatment variables for all treatment assignment vectors. Then, I recommend the \cite{holm1979simple} multiple testing procedure to jointly test the null hypotheses and disentangle the source of heterogeneity in the treatment effects. The proposed test statistics are sums of the weighted $L_1$ norm of the differences in the consistent kernel estimators of CATE that characterize the null hypotheses. 

The proposed testing procedure in this paper provides several practical advantages to policymakers.  First, testing the null hypothesis of CTEs by treatment assignment for all covariates in isolation informs policymakers on whether to scale a program. Suppose we fail to reject the null hypothesis of CTEs across treatment assignments for all values of the pre-treatment variables. In that case, it implies that treatment spillover effects are absent, and a program can be scaled without any negative or positive externalities. Second, the null hypothesis of CTEs by the pre-treatment variables for all treatment assignment vectors in isolation helps to detect HTEs across subgroups defined by pre-treatment variables. Knowledge of a program's treatment effect variation across subgroups can guide its extension to other populations (external validation). Third, jointly testing both null hypotheses helps to disentangle the source of variation in treatment effects. Finding the drivers of the variations in treatment effects in an interconnected human society is crucial in designing welfare-maximizing treatment assignment rules.

The main theoretical results  show that the proposed test statistics have an asymptotically standard normal null distribution. I prove these properties via a poissonization technique formulated by \cite{rosenblatt1975quadratic} and developed by \cite{gine2003bm}. The  technique introduces additional randomness by assuming that the sample size is a Poisson random variable. It enables techniques that exploit Poisson processes' independent increments and infinite divisibility properties. Also, I show that the test statistics have asymptotically valid sizes and power against fixed alternatives and  sequences of local alternatives. Moreover, I propose bootstrap procedures to benchmark the proposed asymptotic-based methods in finite samples. All the asymptotic results are based on a regime where the number of sampled clusters goes to infinity. As I discuss further in Section \ref{Main Asymptotic Results}, an essential feature of this regime is that it permits the safe disregard of within-cluster dependencies for local estimators in large samples.

The main contribution of this paper is twofold. First, I highlight the issue of HTEs testing in the presence of interference. As a solution, I propose testing procedures for HTEs that (i) can accommodate several forms of clustered interference and (ii) can disentangle the sources of heterogeneity. Second, I extend the CATE identification and estimation in the literature to allow for clustered interference. From the technical perspective, I prove the asymptotic properties of the proposed testing procedures in this paper using the modern Poissonization technique of \cite{gine2003bm}.

 The  rest of the paper is organized as follows. I review the existing literature in the following subsection.
 Section \ref{framework} describes the setup, the testing problem, and the test statistics. In Section \ref{Main Asymptotic Results}, I present the main asymptotic properties of the test statistics.  Monte Carlo simulation design and results are in Section \ref{Monte Carlo Simulation}. To illustrate the usage of the proposed tests, in Section \ref{application},  I revisit the Chinese weather insurance policy data set in \cite{cai2015social}  and test for HTEs by a post-treatment variable (treatment ratio) and a pre-treatment variable (the fraction of household income from rice production). My concluding remarks are in Section \ref{Conclusion}. All proofs, useful theorems, lemmas, and other simulation experiments are in the  Appendix.

\subsection{Related Literature}
The nascent literature on the estimation and inference of HTEs continues to grow and spans multiple fields. The present paper falls under the arm of the  literature that studies the inference
of HTEs using tests based on average treatment effects (ATEs) of subgroups defined by pre-treatment variables. \cite*{bitler2006mean} provide an in-depth critique of this approach to testing for HTEs. They argue that heterogeneity of CATEs across subgroups often does not imply individual treatment effect variation unless one assumes constant subgroup treatment effect (CSTE). Nonetheless,  regardless of the CSTE assumption, variations in ATEs of subgroups are crucial in the design of treatment assignment rules where pre-treatment variables are used to set eligibility conditions. In econometrics, tests to detect variation in ATEs across subgroups have been studied by \cite{crump2006nonparametric} and \cite{lee2014multiple}. Both studies abstract from interference and propose nonparametric tests to infer HTEs across predefined subgroups. In contrast, I allow clustered interference and use kernel-based estimators to construct the test statistics. To the best of my knowledge, this is the first paper to test for HTEs under any form of interference.

The problem I consider, testing for HTEs in the presence of clustered interference, is distinct from and complementary to other CATE estimation problems in the literature. \cite{abrevaya2015estimating} and \cite{fan2022estimation} study the identification and kernel-based estimation of CATE for traditional and high dimensional data sets, respectively. By contrast, this paper focuses on testing for HTEs or varying CATEs. Also, the identification strategy and estimation in the current paper account for within-cluster interferences, whereas the above references impose a no-interference assumption. Similarly, \cite{bargagli2020heterogeneous} proposes a Network Causal Tree algorithm that seeks to find and estimate the treatment effect of subpopulations where treatment and spillover effects differ across individual, neighborhood, and network characteristics in clustered network populations. This is again distinct from my goal, which takes the subpopulations of interest as given and studies testing procedures to infer HTEs across these covariates-defined subpopulations.
On the theoretical side, \cite{lee2013testing} and \cite{chang2015nonparametric} establish asymptotic null distributions for the  $L_p$-type functions of kernel-based CATE estimators using the  modern poissonization technique of \cite{gine2003bm}. Allowing for clustered interference in the potential outcomes model requires a  modification of  the estimator of CATE to fit the current framework. In addition, the proposed test statistics in this paper are different, and the theoretical results are  extensions of those in these \cite{chang2015nonparametric}.

Finally, the nonparametric bootstrap algorithms proposed in this paper are similar to existing algorithms in the literature. For instance, \cite{li2009nonparametric} uses a bootstrap algorithm akin to that described in Section \ref{balgorithm 1}  to test for the equality of two density or conditional density functions. Also, \cite{racine1997consistent} employs a residual bootstrap algorithm similar to the wild bootstrap-t procedure proposed in Section \ref{balgorithm 2} to test for the significance of pre-treatment variables in regression models. Despite the similarities, the bootstrap procedures in this paper must account for the within cluster dependencies and, as such, are novelties that fit the current setting.

\section{Framework }\label{framework}
\subsection{Setup} \label{setup}
Consider a setting where units are partitioned into clusters, such as classrooms, villages,  or states. Let $N_c $ denote the number of \textit{finite} units in cluster $c \in \mathcal{I}.$ Suppose that $C$ clusters are randomly drawn from the population of clusters, and let $N$ denote the number of units that make up the observed sample,  i.e., $\sum_{i=1}^{C}N_c= N.$ Furthermore, assume that units in the same cluster interact arbitrarily, leading to clustered or partial interference  \citep{sobel2006randomized}.
Let $T_{ci} \in \{0,1\}$ be the binary treatment condition of unit $i\in \{1,\dots, N\}$ in cluster $c \in \{1, \dots, C\}.$ Treatment  is selected or assigned at the unit level. Thus, different clusters may have different treatment assignment vectors. The treatment vector of the units in cluster $c$ is denoted as $\mathbf{T}_c.$ 
Moreover, we observe the outcome variables $Y_{ci} \in \mathcal{Y}\subset \mathbbm{R},$  and a vector $X_{ci} \in \mathcal{X}\subset \mathbbm{R}^{^{d}}$ of pre-treatment variables. 

Adopting the potential outcomes model of \cite{neyman1923applications} and \cite{rubin1974estimating}, let $Y_{ci}(\mathbf{t}_c)$ represent the potential outcome of unit $i$ in cluster $c$ when the cluster treatment assignment vector $\mathbf{T}_c$ equals  $\mathbf{t}_c.$ Therefore, the potential outcomes of a unit are indexed by her cluster treatment assignment vector.
Effectively, the number of potential outcomes of a unit in cluster $c$ is $2^{^{N_c}}$ (i.e., the number of all possible cluster treatment assignment vectors). Note that in the classical case of no interference, only two potential outcomes exist for each unit. Hence, allowing for (clustered) interference exacerbates the missing data problem of causal inference. As such, a salient element of causal inference in the presence of (clustered) interference is a mapping $\pi(\cdot)$, which summarizes how the (cluster) treatment vectors affect the treatment response outcome. This is commonly referred to as the \textit{exposure mapping} (see \cite{manski2013identification} and  \cite{aronow2017estimating}). Formally, I define  exposure mapping as 
\begin{equation}\label{exposure}
    \pi_{i}: \{0,1 \}^{N_c}  \mapsto \mathbf{\Pi}
\end{equation}
that maps the cluster treatment vector $\mathbf{T}_{c}$  into an exposure variable $\Pi_{ci}:=\pi_{i}(\mathbf{T}_{c}) \in \mathbf{\Pi} \subset \mathbbm{R}$. The definition of $\pi_i$ is application-specific, as it must account for the within-cluster interference structures of  the given problem.

 Next, for a given exposure mapping, I assume that $Y_{ci}(\mathbf{t}_c)=Y_{ci}(t,\pi)$ is the potential outcome for unit $i$ if the cluster treatment vector $\mathbf{t}_c$ is such that $T_{ci} =t$ and  $\pi_{i}(\mathbf{t}_c)=\pi.$ Thus, this assumption asserts that \textit{heterogeneity in treatment effects across treatment assignment vectors is analogous to heterogeneity in treatment effects across exposure variable's values (henceforth exposure values)}. 

 \begin{remark}
It is worth emphasizing that the exposure mapping in \eqref{exposure} depends only on the cluster treatment assignment vectors and  requires no knowledge of the connections between units in the population. Thus, the exposure variable is typically defined at the cluster level --- such as the number of treated units within a given cluster. However, the testing procedures in this paper also accommodate individual-specific exposure variables, including leave-one-out exposure variables --- such as the number of treated units in a cluster excluding the $i^{th}$ unit. As a result, the exposure variable of unit $i$ is a reduced measure of the contagion or direct interference  within  $i$'s cluster.
This is convenient because, in many applied settings, it is difficult to obtain information on links between economic units due to privacy concerns. For instance,  \cite{colpitts2002targeting} reveals that  Canada abandoned a targeting program designed to use individual-level and network information to assign unemployed workers to different job activation programs due to data security concerns. 
 \end{remark}
 The following example provides a plausible specification of exposure mapping defined in \eqref{exposure}.    
 
\begin{exmp}[Distributional clustered interaction]
\cite{manski2013identification} explains that distributional clustered interaction occurs if the outcome of unit $i$ does not depend on the sample size, and it is invariant to permutations of the treatments received by other units in the same cluster.
In essence,  distributional clustered interaction implies that $\pi_i(\mathbf{t}_{c}) \in \{0, 1/N_c, 2/N_c,\dots, 1\}$ represents the  treatment ratio in cluster $c$. 
See \cite{manski2013identification} for other specifications.
 \end{exmp}

I formalize the assumptions that describe the setting as follows.

\begin{assum}[Treatment-invariant clusters]\label{Treatment-invariant neighborhoods}  Let $G$ denote the cluster assignment variable, then for each unit $i,$ $\Pr(G_i|\mathbf{T}_{c})= \Pr(G_i),$ i.e.,
a unit's cluster and  treatment  assignment variables are independent.
\end{assum}
\noindent In other words, Assumption \ref{Treatment-invariant neighborhoods} suggests that the network is a fixed population characteristic, and units do not strategically select into clusters after treatment assignment. 

\begin{assum}[ Clustered Interference]\label{Discrete Network Exposure}
Let $\pi_i(\cdot)$ be an  exposure mapping function, i.e., $\pi_i: \{0,1 \}^{N_c} \mapsto \mathbf{\Pi},$ with $\mathbf{\Pi}=\{\pi_1, \dots,\pi_K:K< C\}$ being a discrete set of finite elements, $\forall c = 1,\dots,C,$ $\forall i = 1,\dots,N_c,$ and $\forall \bft,  \bft' \in \{0,1\}^N$ such that  $t_{_{i}}=t'_{i}$ and  $\pi_i(\bft_{_{c}})=\pi_i(\bft'_{c})$ then $Y_{ci}(\bft) =Y_{ci}(\bft')$
\end{assum}
 \noindent Assumption \ref{Discrete Network Exposure} is embedded with a couple of information and merits a lengthy discussion. First, it restricts the nature of interference, i.e., it imposes intra-cluster interference but no inter-cluster interference. Second, this assumption limits the range of the exposure variable to a discrete finite set of $K<C$ elements; thus, the number of potential outcomes reduces significantly to $2\cdot K$. This part of the assumption also implies that\textit{ units in different clusters can have the same exposure value}, as a result guaranteeing that there are "enough" units for each treatment and exposure value for precise estimation of the CATEs. Finally, Assumption \ref{Discrete Network Exposure} asserts that a unit's exposure to the treatment of other units in her cluster is in a "reduced form," and channels through which interference occurs in clusters are not distinguishable.
Exploring general forms of interference with network information is a possible avenue for future research.


For each $\pi \in \mathbf{\Pi}$  and $x \in \mathcal{X},$ I define  the CATE parameter as\,\, $\tau(x;\pi )\coloneq\mathbbm{E} [Y(1, \pi )|X=x] -\mathbbm{E} [Y(0, \pi )|X=x].$ The  following assumptions are crucial for the identification of the CATEs.
\begin{assum}[Consistency and Unconfoundedness]\label{Unconfoundedness}\,\,\,\,\,\,\,\,\,\,\,\,\,\,\,\,\,\,\,\,\,\,\,\,\,\,\,\,\,\,\,\,\,\,\,\,\,\,\,\,\,\,\,\,\,\,\,\,\,\,\,\,\,\,\,\,\,\,\,\,\,\,\,\,\,\,\,\,\,\,\,\,\,\,\,\,\,\,\,\,\,\,\,\,\,\,\,\,\,\,\,\,\,\,\,\,\,\,\,\,\,\,\,\,\,\,\,\,\,\,\,\,\,\,\,\,\,\,\,\,
(i) There is only a single version of each treatment.\\
(ii) For all $\pi \in \mathbf{\Pi}$, 
\begin{align}
    (T, \Pi)\indep (Y(0,\pi ), Y(1,\pi ))|X
\end{align}
\end{assum}
\begin{assum}[Overlap]\label{Overlap}
For all $\pi \in \mathbf{\Pi}$, $x \in \mathcal{X}$  and for some $\xi>\eta>0$ 
\begin{align}
  \xi<\mathrm{P}(T=1 | X=x )< 1-\xi \,\,\,\text{and}\,\,\,  \eta<\mathrm{P}(T=1, \Pi=\pi | X=x )< \mathrm{P}(T=1 | X=x )-\eta
\end{align}
\end{assum}
Assumptions \ref{Unconfoundedness}(ii) and  \ref{Overlap}  are the extensions of the usual ignorability assumptions imposed on the treatment assignment mechanism under no interference, see \cite{imbens2015causal}.
Assumption \ref{Unconfoundedness}(i) is the standard consistency condition that rules out multiple versions of the same treatment selected or assigned, and \ref{Unconfoundedness}(ii) asserts that conditional on pre-treatment variables, self-selection of \textit{effective treatment}, $(T, \Pi)$ is ruled out. This Assumption  holds when treatment is randomly assigned at the unit level. Assumption \ref{Overlap} is a modified version of the usual overlap or probabilistic assignment condition. It ensures a balance between treated and control units in each subgroup. It is particularly crucial because of the H\'ajek-type estimators of CATE I employ in this paper. 
\begin{prop}\label{identification}
    Suppose Assumptions \ref{Treatment-invariant neighborhoods}-\ref{Overlap} hold, then $\forall \pi \in \mathbf{\Pi} \,\,\,\text{and}\,\, \,x \in \mathcal{X}$
\begin{align}\label{identified CATE}
\tau(x, \pi)
  = \mathbbm{E} [Y|T=1, \Pi=\pi, X=x] -\mathbbm{E} [Y|T=0, \Pi=\pi,X=x].
\end{align}
\end{prop}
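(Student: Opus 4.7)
The plan is to follow the standard selection-on-observables identification argument, adapted to the clustered interference setting, by handling each of the two conditional expectations on the right-hand side of \eqref{identified CATE} separately and subtracting at the end. Throughout, I would fix $\pi \in \mathbf{\Pi}$, $x \in \mathcal{X}$, and $t \in \{0,1\}$, and aim to show $\mathbb{E}[Y \mid T=t, \Pi=\pi, X=x] = \mathbb{E}[Y(t,\pi)\mid X=x]$. Assumption \ref{Overlap} guarantees that the conditioning event $\{T=t,\Pi=\pi,X=x\}$ has positive (indeed, bounded away from zero) probability, so the left-hand side is well defined.

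The first step is to convert the observed-outcome conditional expectation into a potential-outcome conditional expectation. Assumption \ref{Discrete Network Exposure} says that the full cluster treatment vector $\mathbf{t}_c$ enters unit $i$'s potential outcome only through the pair $(t_i, \pi_i(\mathbf{t}_c))$, which justifies the reduced notation $Y_{ci}(t,\pi)$. Combined with the single-version-of-treatment part of Assumption \ref{Unconfoundedness}(i), the observed outcome admits the representation $Y = Y(T,\Pi)$. Hence on the event $\{T=t,\Pi=\pi\}$ we have $Y = Y(t,\pi)$, and therefore
\begin{equation*}
\mathbb{E}[Y \mid T=t, \Pi=\pi, X=x] = \mathbb{E}[Y(t,\pi) \mid T=t, \Pi=\pi, X=x].
\end{equation*}

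The second step is to remove the conditioning on $(T,\Pi)$ using the conditional independence in Assumption \ref{Unconfoundedness}(ii), $(T,\Pi) \indep (Y(0,\pi),Y(1,\pi)) \mid X$. This implies that further conditioning on $(T,\Pi)$ beyond $X$ does not alter the conditional mean of $Y(t,\pi)$, so
\begin{equation*}
\mathbb{E}[Y(t,\pi) \mid T=t, \Pi=\pi, X=x] = \mathbb{E}[Y(t,\pi) \mid X=x].
\end{equation*}
Assumption \ref{Treatment-invariant neighborhoods} supports this step in the background: because cluster assignment is independent of the treatment assignment vector, there are no cluster-level confounders introduced by strategic post-treatment sorting, so $X$ alone suffices for the ignorability argument. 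Subtracting the $t=0$ identity from the $t=1$ identity yields the claimed representation of $\tau(x,\pi)$.

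I do not anticipate a substantive obstacle: the proof is essentially a bookkeeping exercise that shows the ``effective treatment'' $(T,\Pi)$ plays the role traditionally assigned to $T$ alone in the no-interference ignorability argument. The only point that deserves care is a clean justification that Assumption \ref{Discrete Network Exposure} genuinely licenses the reduced notation $Y(t,\pi)$ — that is, that any two cluster treatment vectors $\mathbf{t}_c,\mathbf{t}_c'$ with $t_i = t_i'$ and $\pi_i(\mathbf{t}_c) = \pi_i(\mathbf{t}_c')$ produce identical potential outcomes — so that the consistency step $Y = Y(T,\Pi)$ is unambiguous before invoking unconfoundedness.
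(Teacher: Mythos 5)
Your proposal is correct and follows essentially the same route as the paper: the paper's proof simply writes your consistency step $Y = Y(T,\Pi)$ explicitly as the switching equation $Y=\sum_{k=1}^K \big( Y(0,\pi_k) + [Y(1,\pi_k)-Y(0,\pi_k)]\cdot T\big)\cdot \mathbbm{1}(\Pi=\pi_k)$ (justified by Assumptions \ref{Treatment-invariant neighborhoods}--\ref{Discrete Network Exposure} and \ref{Unconfoundedness}(i)), invokes Assumption \ref{Overlap} for existence of the conditional means, and then drops the conditioning on $(T,\Pi)$ via Assumption \ref{Unconfoundedness}(ii) exactly as you do. Your closing remark about verifying that Assumption \ref{Discrete Network Exposure} licenses the reduced notation $Y(t,\pi)$ is precisely what the paper's displayed representation of $Y$ makes explicit, so there is no gap.
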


\subsection{The Testing Problem}\label{The Testing Problem}
In this section, I formally describe the testing problem. As mentioned above, in the presence of interference, treatment effects may vary by either pre-treatment variables, post-treatment exposure variables, or both. Figure \ref{fig:illustrations} shows some plausible cases in a simple setting where $\mathbf{\Pi}=\{\pi_1,\pi_2\}$ and CATE is linear in a continuous $X.$ Panels (a) shows CTEs by both classes of variables.
In contrast, Panels (b) and (c) show scenarios where one class of variable has HTEs, and the other class has CTEs. Panel (d) depicts the case where both classes of variables have HTEs.
These facts highlight that testing for HTEs in the presence of interference requires testing for heterogeneity across the two classes of variables and the need for methods to disentangle the source of the effect heterogeneity. 

\begin{figure}[h]
\centering
\caption{Treatment Effects Variation by a Continuous Pre-treatment Variable and a Binary Post-treatment Exposure Variable}
\subfigure[CTEs by a pre-treatment variable and an exposure variable.]{
	\begin{tikzpicture}[scale=4.3]
			\draw[thick] (0,0) -- (1.3,0);
			\draw[thick] (0,0) -- (0,1);
			\draw[thick] (0,1/3) -- (1.3,1/3);
            \node[right] at (0.5,0.4) {$\tau(X;\pi_1)=\tau(X;\pi_2)$};
			\draw[-{Latex[width=2mm]}] (1.2,0) -- (1.32,0);
			\node[right] at (1.2,-0.06) {$X$};
			\draw[-{Latex[width=2mm]}] (0,1) -- (0,1.02);
			\node[right] at (-0.34,1.) {$\tau(X;\Pi)$};
		\end{tikzpicture}
  \label{fig:subfig1}
}
\subfigure[CTEs by a pre-treatment variable and HTEs by  an exposure variable. ]{
    	\begin{tikzpicture}[scale=4.3]
			\draw[thick] (0,0) -- (1.3,0);
			\draw[thick] (0,0) -- (0,1);
   			\draw[thick] (0,0.5) -- (1.3,0.5);
            \node[right] at (0.9,0.56) {$\tau(X;\pi_1)$};
			\draw[thick] (0,1/3) -- (1.3,1/3);
            \node[right] at (0.9,0.4) {$\tau(X;\pi_2)$};
			\draw[-{Latex[width=2mm]}] (1.2,0) -- (1.32,0);
			\node[right] at (1.2,-0.06) {$X$};
			\draw[-{Latex[width=2mm]}] (0,1) -- (0,1.02);
			\node[right] at (-0.34,1.) {$\tau(X;\Pi)$};
		\end{tikzpicture}
    \label{fig:subfig2}
}
\subfigure[HTEs by a pre-treatment variable and CTEs by an exposure variable.]{
    	\begin{tikzpicture}[scale=4.3]
			\draw[thick] (0,0) -- (1.3,0);
			\draw[thick] (0,0) -- (0,1);
			\draw[thick] (0,1/3) -- (1.3,0.6);
            \node[rotate=12.8] at (0.9,0.6) {$\tau(X;\pi_1)=\tau(X;\pi_2)$};
			\draw[-{Latex[width=2mm]}] (1.2,0) -- (1.32,0);
			\node[right] at (1.2,-0.06) {$X$};
			\draw[-{Latex[width=2mm]}] (0,1) -- (0,1.02);
			\node[right] at (-0.34,1.) {$\tau(X;\Pi)$};
		\end{tikzpicture}
    \label{fig:subfig3}
}
\subfigure[HTEs by a pre-treatment variable and an exposure variable.]{
    	\begin{tikzpicture}[scale=4.3]
			\draw[thick] (0,0) -- (1.3,0);
			\draw[thick] (0,0) -- (0,1);
   			\draw[thick] (0,0.5) -- (1.3,0.77);
            \node[rotate=13] at (1.1,0.79) {$\tau(X;\pi_1)$};
			\draw[thick] (0,1/3) -- (1.3,0.6);
            \node[rotate=13] at (1.1,0.62) {$\tau(X;\pi_2)$};
            
			\draw[-{Latex[width=2mm]}] (1.2,0) -- (1.32,0);
			\node[right] at (1.2,-0.06) {$X$};
			\draw[-{Latex[width=2mm]}] (0,1) -- (0,1.02);
			\node[right] at (-0.34,1.) {$\tau(X;\Pi)$};
		\end{tikzpicture}
    \label{fig:subfig4}
}
\\
\label{fig:illustrations}
\end{figure}
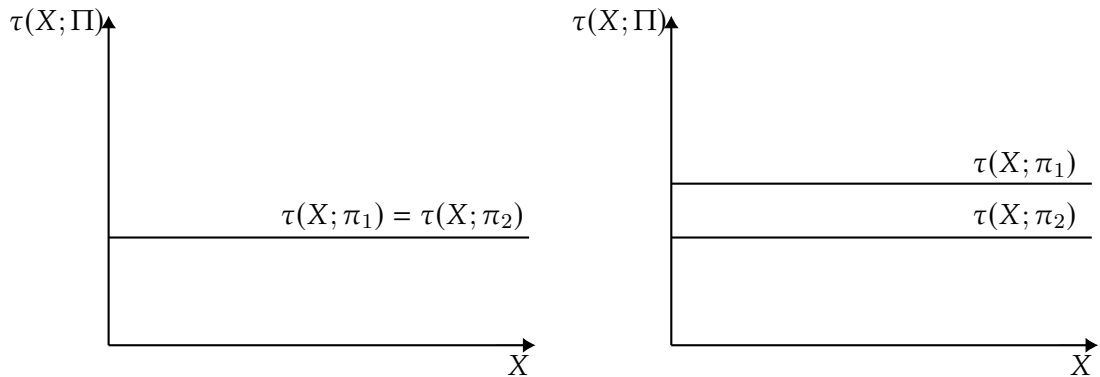
To test for heterogeneity across one class of variables requires controlling for the other class of variables. Formally, I consider the following null hypotheses. The first one is the null hypothesis of constant treatment effects (CTEs) by an exposure variable while controlling for pre-treatment variables:
\begin{align}\label{first  null hypothesis}
\begin{split}
&H^{\Pi}_{0}:
      \,\, \forall x \in \mathcal{X},\,\, \forall \pi, \pi' \in \mathbf{\Pi},\,\,     \tau(x;\pi)=\tau(x;\pi'), \,\, \\
\end{split}
\end{align}
against the alternative hypothesis of HTEs by the exposure variable: 
\begin{align}\label{first alternative hypothesis}
\begin{split}
&H^{\Pi}_{1}:
      \,\, \exists x \in \mathcal{X},\,\,  \exists\, \pi, \pi' \in \mathbf{\Pi},  \,\, \tau(x;\pi)\neq\tau(x;\pi'). \,\, \\
\end{split}
\end{align}
The null hypothesis \eqref{first  null hypothesis} is vital in answering several essential questions in program evaluation. For example,  testing \eqref{first  null hypothesis} when exposure is treatment ratios helps to determine whether a program should be scaled or not; rejecting the null hypothesis implies treatment spillover effects exist and program effects may vary by program scale. Although  \eqref{first  null hypothesis} resembles Hypothesis 2 in \cite{athey2018exact}, they are not the same. The null hypothesis \eqref{first  null hypothesis} is a restriction on the treatment effect that there are no indirect or spillover treatment effects. In contrast, Hypothesis 2 in \cite{athey2018exact} restricts the outcome of no spillovers or interference. Failure to reject Hypothesis 2 in \cite{athey2018exact} implies that we fail to reject \eqref{first  null hypothesis}; however, the converse is not valid.

\begin{remark}
Generally, the null $H_0^{\Pi}$ resembles the null of no or constant treatment effects for all covariates $X$ in settings with multiple treatment effects. Given $K$ possible treatments, to test for no treatment effects at all covariates points, one may have to test for no effects for all pairwise treatments across covariates. The main difference between the setting in this paper and the multiple treatment environment is the possibility of within-cluster dependencies. However, as I thoroughly discuss in Section \ref{Main Asymptotic Results}, such dependencies can be safely ignored in large samples because the CATE function is estimated using the local constant estimators. Hence, with a slight modification of the estimands of interest, the testing procedure developed in this paper for $H_0^{\Pi},$ can be adapted to test for the null of no treatment effects for all covariates $X$ in settings with multiple treatment effects.
\end{remark}

The second hypothesis of interest concerns the null hypothesis of CTEs by pre-treatment variables while controlling for the post-treatment exposure variable:
\begin{align}\label{second  null hypothesis}
\begin{split}
&H^{X}_{0}:
      \,\, \forall \pi \in \mathbf{\Pi}, \forall x, x' \in \mathcal{X},\,\,    \tau(x;\pi)=\tau(x';\pi), \,\, 
\end{split}
\end{align}
against the alternative hypothesis of HTEs by pre-treatment variables:
\begin{align}\label{second alternative hypothesis}
\begin{split}
&H^{X}_{1}:
      \,\, \exists \pi \in \mathbf{\Pi},  \exists \,x, x' \in \mathcal{X}, \,\,   \tau(x;\pi)\neq \tau(x';\pi). \,\, 
\end{split}
\end{align}
Tests for the null hypothesis similar to \eqref{second  null hypothesis} exist in the literature; see \cite{crump2006nonparametric}. However, this paper is the first to allow some form of interference.  
Testing \eqref{second  null hypothesis} is critical in extending existing programs to new populations; rejecting this hypothesis implies that treatment effects vary across subgroups defined by the pre-treatment variables. Therefore, a policymaker should expect different aggregate effects if the program is extended to a new population.

It is worth discussing the importance of distinguishing between the exposure variable (i.e., $\Pi$) and the pre-treatment variables in this framework. First, note that $\Pi$ is a causal variable and, as such, is used to index the potential outcomes. In contrast, the pre-treatment variables are invariant to treatment assignment, i.e., they are a priori variables. Therefore, this distinction is necessary in the potential outcome framework for identification. Secondly, distinguishing between the pre-treatment and treatment exposure variables is imperative to determine the drivers of the heterogeneity in treatment effects. It has vital implications for policymakers. For instance, if the heterogeneity is solely due to the treatment exposure variable, policymakers may have complete control over the distribution of treatment effects in the population via the treatment assignment mechanism. See \cite{han2022statistical} for a statistical treatment assignment rule for homogeneous and heterogeneous populations in the presence of social interaction.

\begin{remark} \label{conclusions 1} An important feature of $H^{\Pi}_{0}$ and $H^{X}_{0}$  is that testing each hypothesis in isolation may not clarify the source of heterogeneity in treatment effects. For example, failure to reject the null hypothesis $H^{\Pi}_{0}$ implies either  \textit{CTEs by exposure} and \textit{HTEs by pre-treatment variables} or  \textit{CTEs by both exposure and pre-treatment variables}. Conversely, failure to reject $H_0^X$ implies either \textit{ CTEs  by the pre-treatment variables} and \textit{HTEs by the exposure} or  \textit{CTEs by both the exposure and pre-treatment variables}. This highlights the necessity of testing both hypotheses simultaneously. In the following subsection, I propose a multiple testing procedure designed to conduct such simultaneous testing.
\end{remark}

\subsection{Test Statistics}\label{Test Statistics}
The proposed test statistic for the null hypothesis $H_0^\Pi$ in \eqref{first  null hypothesis} is  
 \begin{align} 
     \hat{T}_1 \coloneqq  \int_{\mathcal{X}} \sum_{k =1}^K \sum_{j >k}^K \left\{\sqrt{C}|\hat{\tau}(x;\pi_k)-\hat{\tau}(x;\pi_j)|\right\} \hat{w}(x, \pi_k, \pi_j)dx,
 \end{align}
where $\hat{\tau}(x;\pi)$ is a uniform consistent estimator of $\tau(x; \pi)$ and  $\hat{w}(x, \pi, \pi')$ is the uniform consistent estimator of the inverse standard error of $\sqrt{Ch^{d}}(\hat{\tau}(x;\pi)-\hat{\tau}(x;\pi'))$ which is defined as $w(x, \pi, \pi'):= 1/\sqrt{\rho_2(x, \pi)+ \rho_2(x, \pi')-2Ch^{d}Cov(\hat{\tau}(x;\pi),\hat{\tau}(x;\pi')},$ \footnote{Since the network is clustered, $w(x, \pi, \pi'):= 1/\sqrt{\rho_2(x, \pi)+ \rho_2(x, \pi')}$ when the exposure variable is at the cluster level.} with, $\rho_2(x, \pi)$ representing the variance of  $\sqrt{Ch^{d}}\hat{\tau}(x;\pi).$ Also, note that  $d$ is the dimension of $X,$ $h$ is the bandwidth, and recall that  $C$ is the number of sampled clusters.

I propose a nonparametric kernel-based  estimator of $\tau(x; \pi)$ defined as
\begin{align}\label{estimator}
\begin{split}
   \hat{\tau}(x;\pi):=&\frac{1}{Ch^d}\sum_{c=1}^{C} \frac{1}{N_c}\sum_{i=1}^{N_c}  Y_{ci} \cdot \mathbbm{1}(\Pi_{ci}=\pi)\hat{\phi}(T_{ci},x,\pi) K\left(\frac{x-X_{ci}}{h}\right), 
\end{split}   
\end{align}
where
$$\hat{\phi}(T_{ci},x,\pi):=\frac{T_{ci}}{\hat{P}_1(x;\pi)}-\frac{(1-T_{ci})}{\hat{P}_0(x;\pi)},$$
with $$\hat{P}_t(x;\pi):= \frac{1}{Ch^d}\sum_{c=1}^{C}  \frac{1}{N_c}\sum_{i=1}^{N_c} \mathbbm{1}(\Pi_{ci}=\pi)\mathbbm{1}(T_{ci}=t)K\left(\frac{x-X_{ci}}{h}\right),\,\,\,t=0, 1, $$
and $K(\cdot)$ being a $d$-dimensional kernel function.

The test statistic for the null hypothesis $H_0^{X}$ in  \eqref{second  null hypothesis} is defined as
 \begin{align}
     \hat{T}_2 \coloneqq &  \int_{\{x: x \in \mathcal{X}\}}\int_{\{x' :x' \in \mathcal{X}: x\neq x'\}} \sum_{k =1}^K \left\{\sqrt{C}|\hat{\tau}(x;\pi_k)-\hat{\tau}(x';\pi_k)|\right\} \frac{\hat{w}(x, x', \pi_k)}{2}dxdx' \label{teststat 2}\\ 
     =& \int_{\{(x,x') \in \mathcal{X}^2:x<x' \}} \sum_{k =1}^K \left\{\sqrt{C}|\hat{\tau}(x;\pi_k)-\hat{\tau}(x';\pi_k)|\right\} \hat{w}(x, x', \pi_k)dxdx'  
 \end{align}
 where, $\hat{w}(x, x', \pi)$  is a consistent kernel estimator of ${w}(x, x', \pi),$ the inverse standard error of  $\sqrt{Ch^{d}}\cdot(\hat{\tau}(x;\pi)-\hat{\tau}(x';\pi)).$ Mathematically, $w(x, x', \pi):= 1/\sqrt{\rho_2(x, x', \pi)}$ with 
 $\rho_2(x, x', \pi):=\rho_2(x, \pi)+ \rho_2(x', \pi)-2Ch^{d}Cov(\hat{\tau}(x;\pi),\hat{\tau}(x';\pi)).$
 Throughout the paper, I use the first form of $\hat{T}_2$ in \eqref{teststat 2}; however, to simplify notation, I let $\int_{\{x: x \in \mathcal{X}\}}\int_{\{x' :x' \in \mathcal{X}: x\neq x'\}}$ be equal to $\int_{ \mathcal{X}}\int_{  \mathcal{X}}.$ This should not be interpreted as integrating over the product of the covariate space as the set $\{x, x':x\neq x',  x \in \mathcal{X}, x'\in \mathcal{X}\}$ may have non-zero measure.

If we test the null hypotheses $H_0^\Pi$ and $H_0^X$ simultaneously using a multiple testing procedure (MTP), we can disentangle the source of heterogeneity of treatment effects. If we fail to reject both null hypotheses, it implies CTEs by both variable classes. If we reject the null hypothesis $H_0^\Pi$ but fail to reject $H_0^X$, it suggests HTEs by exposure variable and CTEs by pre-treatment variables. In contrast, if we fail to reject the null hypothesis $H_0^\Pi$ but reject  $H_0^X$, this implies CTEs by exposure variable and HTEs by pre-treatment variables. Finally, if we reject both null hypotheses, then it suggests HTEs for both variable classes.   Thus, as mentioned in Remark \ref{conclusions 1}, implementing both tests in a multiple-testing framework is imperative to disentangle the source of treatment effects heterogeneity. 

To control the probability of rejecting at least one null hypothesis, given that they are both true --- referred to as the family-wise error rate (FWER) --- I recommend the step-wise multiple testing procedure of \cite{holm1979simple}. 
Let the $p_{d_1} \leq p_{d_2}$ be the ordered $p$-values, with corresponding null hypotheses $H_{0,{d_1}}$ and $H_{0,{d_2}}.$ Then the Holm step-down algorithm is as follows:
\begin{Algorithm}[Holm Procedure]\
\begin{enumerate}
    \item  If $p_{d_1}>\alpha/2$ fail to reject both $H_{0,{d_1}}$ and $H_{0,{d_2}}$ and stop. If $p_{d_1}\leq \alpha/2$ \,reject $H_{0,{d_1}}$ and test $H_{0,{d_2}}$ at level $\alpha.$
    \item If $p_{d_1}\leq \alpha/2$ but $p_{d_2}>\alpha,$ fail to reject $H_{0,{d_2}}$ and stop. If $p_{d_1}\leq\alpha/2$ and $p_{d_2}\leq\alpha,$ reject $H_{0,{d_2}}.$ 
\end{enumerate}

\end{Algorithm}
 Several similar adjustments for multiple testing exist, but I opt for Holm's method due to its computational simplicity and robustness to dependencies between the two test statistics. See \cite{lehmann2022multiple} for a comparison between Holm's method and other MTPs.

\section{ Asymptotic Results }\label{Main Asymptotic Results}
 In this section, I discuss the asymptotic properties of the proposed test statistics $\hat{T}_1$ and $\hat{T}_2$ when the null hypotheses are true and false. As mentioned in the introduction, I adopt an asymptotic regime where the number of sampled clusters $C$ goes to infinity while the units with each cluster are fixed.

I show that appropriate studentized versions of the test statistics $\hat{T}_1$ and $\hat{T}_2$ converge to the standard normal  distribution under the null hypotheses. Two forms of dependencies complicate the derivation of these asymptotic properties. The first is the within-cluster dependency among observations due to interference. Secondly, the terms in the combinatorial sum or integral may also be correlated for each test statistic.
 
\cite{lin2000nonparametric} show that when cluster sizes are finite, the most asymptotically efficient kernel estimator of the CATEs is obtained by completely ignoring the within-cluster correlation. This is commonly called the \textit{working independence approach}. The intuition is that, for clusters of fixed sizes, the probability that two or more observations from the same cluster having significant kernel weights approaches zero as the bandwidth shrinks to zero (See \cite{wang2003marginal} for more details). 

Thus, the working independence approach provides a useful asymptotic method for handling the within-cluster correlations when the number of  bounded clusters is large. In essence,  it allows one to view observations --- used to estimate the CATEs --- as mutually independent in the asymptotics. Based on the foregoing argument and assuming, without loss of generality, that the clusters are of equal size $N_0,$ I rewrite the CATE kernel estimator as 
\begin{align*}
\begin{split}
   \hat{\tau}(x;\pi):=&\frac{1}{Nh^d}\sum_{i=1}^{N} Y_i \cdot \mathbbm{1}(\Pi_i=\pi)\hat{\phi}(T_i,x,\pi) K\left(\frac{x-X_i}{h}\right), 
\end{split}   
\end{align*}
where
$$\hat{\phi}(T_i,x,\pi):=\frac{T_i}{\hat{P}_1(x;\pi)}-\frac{(1-T_i)}{\hat{P}_0(x;\pi)}, \,\,\,\,\,\,\, N=CN_0,$$
with $$\hat{P}_t(x;\pi):= \frac{1}{Nh^d}\sum_{i=1}^{N} \mathbbm{1}(\Pi_i=\pi)\mathbbm{1}(T_i=t)K\left(\frac{x-X_i}{h}\right),\,\,\,t=0, 1. $$
Furthermore, the  kernel estimator of $w(x, \pi, \pi')$ is defined  as 
 $$\hat{w}(x, \pi, \pi'):= \frac{1}{\sqrt{\hat{\rho}_2(x, \pi)+ \hat{\rho}_2(x, \pi')}},$$
where
$\hat{\rho}_2(x, \pi)$ is the  kernel estimator of $\rho_2(x, \pi)$ (ignoring all covariance terms)  defined as 
$$\hat{\rho}_2(x, \pi):=(\hat{\mu}_1(x, \pi)-\hat{\mu}_2(x,\pi))\cdot\int K(\xi)^2 d\xi,$$
and
$$\hat{\mu}_1(x, \pi):=\frac{1}{Nh^d}\sum_{t \in \{0,1\}}\sum_{i=1}^N\frac{Y_i^2\mathbbm{1}(T_i=t)\mathbbm{1}(\Pi_i=\pi)K(\frac{x-X_i}{h})}{\hat{P}_t^2(x;\pi)},$$
and 
$$\hat{\mu}_2(x, \pi):=\frac{1}{N^2h^{2d}}\sum_{t \in \{0,1\}}\sum_{j=1}^N\sum_{i=1}^N\frac{Y_jY_i\mathbbm{1}(T_j=t)\mathbbm{1}(T_i=t)\mathbbm{1}(\Pi_i=\pi)\mathbbm{1}(\Pi_j=\pi)K(\frac{x-X_j}{h})K(\frac{x-X_i}{h})}{\hat{P}_t^3(x;\pi)}.$$
The asymptotic results for non-equal cluster sizes are a straightforward extension at the cost of additional notation. Before I state the asymptotic results, the following  regularity conditions are required. 
 \begin{assum} \label{regularity conditions}
 (a) The joint distribution of $(Y, X) \in \mathcal{Y}\times \mathcal{X}$   is absolutely continuous with respect to the Lebesgue measure; (b) the probability density function $f$ of $X$ is continuously differentiable almost everywhere;  (c) $\rho_2(\cdot, \pi)$ is strictly positive and continuous almost everywhere on $\mathcal{W}_X, \forall \pi \in \mathbf{\Pi}$, where $\mathcal{W}_X$ is a compact subset of $\mathcal{X}$; (d) $K$ is a product kernel function, i.e., $K(u) = \Pi_{j=1}^d K_j(u_j),u = (u_1,\dots, u_d)$, with each $K_j :\mathbbm{R} \mapsto \mathbbm{R}, j = 1,\dots, d$,
satisfying that $K_j$ is an $s$-order kernel function with support $\{u \in \mathbbm{R} : |u|\leq 0.5\}$, symmetric
around zero, bounded, and is of bounded variation, and integrates to 1, where $s$ is an integer that
satisfies $s > 1.5d$;  (e) as functions of $x$, $\mathbbm{E}[Y|X = x, T=t, \Pi= \pi]$, $f(x)$, $p_t(x, \pi)$ for $t = 0, 1$
are $s$-times continuously differentiable almost everywhere for each $\pi \in \mathbf{\Pi}$ with uniformly bounded derivatives; (f) $\sup_{x \in \mathcal{W}_X} \mathbbm{E}[|Y|^3|X = x,T=t, \Pi= \pi] < \infty$ for $t = 0, 1$ and $\pi \in \mathbf{\Pi}$; (g) the bandwidth satisfies $Ch^{2s} \to 0$, $Ch^{3d} \to \infty$ and $(Ch^{2d})^{1/2}/ log C   \to \infty$, where $s > 1.5d$; (h) For each $\pi, \pi' \in \boldsymbol{\Pi},$ $\sup_{x \in \mathcal{W}_X}  |\hat{w} (x, \pi,\pi') - {w} (x, \pi, \pi')| = o_p(h^{d/2})$\,\, and\,\, 
$\sup_{(x, x') \in \mathcal{W}^2_X } |\hat{w} (x, x'\pi) - {w} (x, x'\pi)| =o_p(h^{d/2}).$  
 \end{assum}
These conditions are standard in the kernel estimation literature (see  \cite{lee2013testing}, and \citet[p. 315]{chang2015nonparametric}).
 Assumptions \ref{regularity conditions}(a) and (b) are unnecessary for the asymptotic results. They are convenient assumptions imposing continuity on $X$ and $Y$  that help to present my main results.   Assumption \ref{regularity conditions}(c) ensures that the inverse standard error weight function is continuous and well-defined within a compact subset of $\mathcal{X}.$ Assumption \ref{regularity conditions}(d) imposes conditions on the kernel function.  Assumption \ref{regularity conditions}(e) and (f) imposes restrictions on the underlying true data-generating process to ensure smooth and finite moments. Assumption \ref{regularity conditions}(g) imposes standard restrictions on the choice of bandwidth. Finally, Assumption \ref{regularity conditions}(h) ensures that the estimated weight functions are uniformly consistent. 

It is worth emphasizing that for Assumption \ref{regularity conditions}(g), the bandwidth depends on the number of clusters $C.$ For instance, if $d=1$ and a second order kernel is used (i.e., $s=2$) then the bandwidth has to be of the form $h=\kappa_0C^{-\lambda},$ where $\kappa_0$ is a positive constant and $1/4<\lambda<1/3.$  However, since $N=CN_0$ and $N_0$ is fixed in asymptotics, the restrictions in  Assumption \ref{regularity conditions}(g) can be written in terms of $N.$

\subsection{Asymptotic Null Distribution  and Properties of the Test Statistics}
Let's focus on the test statistic $\hat{T}_1.$ Rewriting it in the form 
     $\hat{T}_1 = \sum_{k =1}^K \sum_{j =1}^K 2^{-1}\int_{\mathcal{X}}  \{\sqrt{N}|\hat{\tau}(x;\pi_k)-\hat{\tau}(x;\pi_j)|\} \cdot \allowbreak \hat{w}(x, \pi_k, \pi_j)dx,$
it resembles the test statistic $\hat{D}:=\int_{\mathcal{X}}\sqrt{N}|\hat{\tau}(x)|\hat{w}(x)dx$\footnote{ $\hat{\tau}(x)$ and $\hat{w}(x)$ are the CATE and inverse scaled standard error kernel estimators defined similar to $\hat{\tau}(x;\pi_k)$ and $\hat{w}(x, \pi_k, \pi_j)$  respectively.} proposed by \cite{chang2015nonparametric} under no-interference. They show that a studentized version of $\hat{D}$ converges to the standard normal distribution. Under the working independence approach, $\hat{T}_1$ is the sum of dependent random variables $\Big\{2^{-1}\int_{\mathcal{X}} \left(\sqrt{N}|\hat{\tau}(x;\pi_k)-\hat{\tau}(x;\pi_j)|\right) \hat{w}(x, \pi_k, \pi_j): \pi_j,\pi_k \in \mathbf{\Pi}\Big\}.$  Using the poissonization technique in  \cite{gine2003bm}, I show that each of these random variables is asymptotically normal. See the details in Appendix \ref{Proofs of Lemmas and Theorems}. Thus, as $C$ tends to infinity, $\hat{T}_1$ is the sum of $K(K-1)/2$  dependent normal random variables.
It suffices to find the pairwise asymptotic covariances between terms in the sum to obtain the asymptotic normality result of  $\hat{T}_1.$

To derive the formal asymptotic results, I studentized the test statistics $\hat{T}_1$ with its mean (bias) under the null and standard error.  
 The asymptotic bias\footnote{See Appendix \ref{bias and var} for the derivation.} of $\hat{T}_1$ is 
 \begin{align}\label{asym bias}
     a_{_1}\coloneqq &h^{\frac{-d}{2}}\cdot\mathbbm{E}|\mathbbm{Z}_1|\cdot \frac{K(K-1)}{2}\cdot \int_{\mathcal{X}} dx, 
 \end{align}
 where $\mathbbm{Z}_1$ is a standard normal random variable.
 This diverging asymptotic bias, solely determined by the data-generating process (DGP) through the bounds of $X,$ can be exactly computed.

Define  $\hat{\Gamma}(x;\pi,\pi'):= \hat{\tau}(x;\pi)-\hat{\tau}(x;\pi'),$  the asymptotic variance\footnote{See Appendix \ref{bias and var} for the derivation.} of  $\hat{T}_1$ is
\begin{equation} \label{asym variance}
    \sigma^2_{1}:=\int_1 \mathrm{Cov}\left(\left|\sqrt{1-\rho(x,t, \pi_i, \pi_j, \pi_k, \pi_l)^2}\mathbbm{Z}_1 +\rho(x,t, \pi_i, \pi_j, \pi_k, \pi_l)\mathbbm{Z}_2\right|,\left|\mathbbm{Z}_2\right|\right) dxdt,
\end{equation}
where $\int_1= \int_{\mathbbm{R}^{d}}\int_{T_0}\sum_{i=1}^K  \sum_{j=1}^K \sum_{k=1}^K \sum_{l=1}^K,$ $T_0=[-1,1]^d$  and  $\rho(x,t, \pi_i, \pi_j, \pi_k, \pi_l)$ is the \textit{unknown} correlation between  $\hat{\Gamma}(x;\pi_i,\pi_j)/\sqrt{\hat{\rho}_2(x, \pi_i, \pi_j)}$ and $\hat{\Gamma}(x;\pi_k,\pi_l)/\sqrt{\hat{\rho}_2(x, \pi_k, \pi_l)}$, (i.e., $\rho(x,t, \pi_i, \pi_j, \pi_k, \pi_l) =
\mathrm{Corr}(\hat{\Gamma}(x;\pi_i,\pi_j)/\sqrt{\hat{\rho}_2(x, \pi_i, \pi_j)},\hat{\Gamma}(x;\pi_k,\pi_l)/\sqrt{\hat{\rho}_2(x, \pi_k, \pi_l)})$).  $\mathbbm{Z}_1$ and $\mathbbm{Z}_2$ are mutually independent standard normal random variables.
A consistent kernel estimator of $\rho(x,t, \pi_i, \pi_j, \pi_k, \pi_l)$ is defined as
\begin{align}
\begin{split}
   \hat{\rho}(x, t,\pi_i,\pi_j,\pi_k,\pi_l)=&  
   \begin{cases} 
\frac{\int K(\xi)K(\xi+t)d\xi}{\int K(\xi)^2 d\xi} &\,\, \text{if}\,\, i=k \,\& \,j=l \\
\frac{-\int K(\xi)K(\xi+t)d\xi}{\int K(\xi)^2 d\xi} &\,\, \text{if}\,\, i=l \,\& \,j=k \\
 \frac{-\hat{\rho}_2(x, \pi_j)}{\sqrt{\hat{\rho}_2(x;\pi_i,\pi_j)}\sqrt{\hat{\rho}_2(x;\pi_k,\pi_l)}}\cdot \frac{\int K(\xi)K(\xi+t)d\xi}{\int K(\xi)^2 d\xi}& \,\, \text{if}\,\, j=k \,\& \,i\neq l \\
 \frac{-\hat{\rho}_2(x, \pi_i)}{\sqrt{\hat{\rho}_2(x;\pi_i,\pi_j)}\sqrt{\hat{\rho}_2(x;\pi_k,\pi_l)}}\cdot \frac{\int K(\xi)K(\xi+t)d\xi}{\int K(\xi)^2 d\xi}& \,\, \text{if}\,\, j\neq k \,\& \, i=l\\
 \frac{\hat{\rho}_2(x, \pi_j)}{\sqrt{\hat{\rho}_2(x;\pi_i,\pi_j)}\sqrt{\hat{\rho}_2(x;\pi_k,\pi_l)}}\cdot \frac{\int K(\xi)K(\xi+t)d\xi}{\int K(\xi)^2 d\xi}& \,\, \text{if}\,\, j=l  \,\& \, i\neq k\\
 \frac{\hat{\rho}_2(x, \pi_i)}{\sqrt{\hat{\rho}_2(x;\pi_i,\pi_j)}\sqrt{\hat{\rho}_2(x;\pi_k,\pi_l)}}\cdot \frac{\int K(\xi)K(\xi+t)d\xi}{\int K(\xi)^2 d\xi}&  \,\, \text{if}\,\,j\neq l  \,\& \, i= k\\
 0 & \,\, \text{otherwise},\,\,
    \end{cases} 
\end{split}
\end{align}
where $\hat{\rho}_2(x, \pi, \pi')\coloneqq \hat{\rho}_2(x, \pi) +\hat{\rho}_2(x, \pi').$
Plug  $\hat{\rho}(x, t,\pi_i,\pi_j,\pi_k,\pi_l)$ into the right hand side of (\ref{asym variance}), and  obtain the asymptotic variance estimator
\begin{equation} \label{estimated asym var}
\hat{\sigma}^2_{1}\coloneqq \int_1 \mathrm{Cov}\left(\left|\sqrt{1-\hat{\rho}(x,t, \pi_i, \pi_j, \pi_k, \pi_l)^2}\mathbbm{Z}_1 +\hat{\rho}(x,t, \pi_i, \pi_j, \pi_k, \pi_l)\mathbbm{Z}_2\right|,\left|\mathbbm{Z}_2\right|\right)dxdt. 
\end{equation}
The studentized version of the test statistic $\hat{T}_1$ is defined as 
$$\hat{S}_{1}\coloneqq\frac{\hat{T}_{1}-a_{_1}}{\hat{\sigma}_{1}}.$$

Similarly,  the studentized version of the test statistic $\hat{T}_2$ is defined as 
$$\hat{S}_{2}\coloneqq\frac{\hat{T}_{2}-a_{_2}}{\hat{\sigma}_{2}},$$
where 
\begin{align*}
   a_{_{2}}\coloneqq&h^{\frac{-d}{2}}\cdot\mathbbm{E}|\mathbbm{Z}_1|\cdot \frac{K}{2}\cdot \int_{\mathcal{X}} \int_{\mathcal{X}}dx dx',
\end{align*} and 
\begin{equation}\label{variance 2}
 \hat{\sigma}^2_{2}\coloneqq \int_{2'} \frac{\hat{\rho}_2(x,\pi_k)\cdot\mathrm{Cov}\left(\left|\sqrt{1-\rho(x, t, \pi_k)^2}\mathbbm{Z}_1 +\rho(x, t, \pi_k)\mathbbm{Z}_2\right|,\left|\mathbbm{Z}_2\right|\right) }{\sqrt{(\hat{\rho}_2(x,x',\pi_k)) (\hat{\rho}_2(x,x'',\pi_k))}}  dxdx'dx''dt,
\end{equation}
where $\rho(x, t, \pi_k):= (\int K(\xi)K(\xi+t)d\xi)/(\int K(\xi)^2 d\xi),$ $\hat{\rho}_2(x,x',\pi_k)$ is the plug-in estimator of ${\rho}_2(x,x',\pi_k)$ defined as  $\hat{\rho}_2(x,x',\pi_k):= \hat\rho_2(x, \pi)+ \hat\rho_2(x', \pi) -2 \hat \rho_2(x, \pi)\mathbbm{1}(t \in [-1,1]^d)\cdot\mathrm{Cov}((1-\rho(x, t, \pi_k)^2)^{1/2}\mathbbm{Z}_1 +\rho(x, t, \pi_k)\mathbbm{Z}_2,\mathbbm{Z}_2),$ and 
$\int_{2'}\coloneqq\int_{[-1,1]^{d}}\int_{\mathbbm{R}^{d}} \int_{\mathbbm{R}^{d}}\int_{\mathbbm{R}^{d}}\sum_{k=1}^K.$ 
 \footnote{In the case where errors are homoskedastic, $\hat{\rho}_2(x,\pi_k) \cdot \left((\hat{\rho}_2(x,\pi_k)+\hat{\rho}_2(x',\pi_k)) (\hat{\rho}_2(x,\pi_k)+\hat{\rho}_2(x'',\pi_k))\right)^{-1/2}$ in \eqref{variance 2} simplifies as $\hat f(x,\pi_k)^{-1}\cdot \{(\hat f(x,\pi_k)^{-1}+\hat f(x',\pi_k)^{-1})(\hat f(x,\pi_k)^{-1}+\hat f(x'',\pi_k)^{-1})\}^{1/2}$  with $\hat f(x,\pi_k)$ the estimated joint density.}
It is worth to note that the  asymptotic inverse scaled standard error to compute $\hat S_2$ also becomes $$w(x, x', \pi)=\frac{1}{\sqrt{\rho_2(x, \pi)+ \rho_2(x', \pi) -2 \rho_2(x, \pi)\mathbbm{1}(t \in [-1,1]^d)\cdot\mathrm{Cov}((1-\rho(x, t, \pi_k)^2)^{1/2}\mathbbm{Z}_1 +\rho(x, t, \pi_k)\mathbbm{Z}_2,\mathbbm{Z}_2) }},$$ where $t=(x-x')/h$.\footnote{$w(x, x', \pi)$ is well defined  since $Ch^{d}Cov(\hat{\tau}(x;\pi),\hat{\tau}(x';\pi))\to \rho_2(x, \pi)\mathbbm{1}(t \in [-1,1]^d)\cdot\mathrm{Cov}((1-\rho(x, t, \pi_k)^2)^{1/2}\mathbbm{Z}_1 +\rho(x, t, \pi_k)\mathbbm{Z}_2,\mathbbm{Z}_2)$ after applying a change of variable. Thus, the argument of the square root in the denominator is positive.}   
As such, I obtain the sample analog of $w(x, x', \pi)$  by plugging in the estimators of $\rho_2(x, \pi) $ for all $x \in \mathcal{X}.$ See Appendix \ref{bias and var} for the derivation of $a_2$ and $\hat{\sigma}^2_{2}.$

\begin{prop} \label{normality}
 Let  Assumptions \ref{Treatment-invariant neighborhoods} - 
 \ref{regularity conditions} hold, then under the \\  
 (i)  null hypothesis \eqref{first  null hypothesis} $\hat{S}_1$ converges to the standard normal distribution, i.e., $\hat{S}_{1}\to N(0,1)$ as $C\to \infty$;\\
(ii) null hypothesis \eqref{second  null hypothesis} $\hat{S}_2$ converges to the standard normal distribution, i.e., $\hat{S}_{2}\to N(0,1)$ as $C\to \infty.$
\end{prop}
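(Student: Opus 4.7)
\textbf{Plan of proof for Proposition \ref{normality}.} My proof strategy mirrors the structure the author has already sketched in the discussion following equation \eqref{estimated asym var}, and proceeds in four stages: (i) reduce the clustered problem to an independent-observations problem via the working independence approach; (ii) use Poissonization to establish marginal asymptotic normality of each pairwise integral; (iii) compute pairwise asymptotic covariances and assemble joint normality of the vector of pairwise terms; (iv) conclude by the continuous mapping and Slutsky theorems after plugging in consistent estimators of the bias and variance. I handle $\hat{S}_1$ first and then indicate the minor modifications for $\hat{S}_2$.

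\textbf{Step 1 (working independence reduction).} Since $N_c = N_0$ is fixed and $Ch^{3d}\to\infty$ by Assumption \ref{regularity conditions}(g), the probability that two distinct observations from the same cluster simultaneously receive non-negligible kernel weight at any interior $x$ vanishes. Invoking the arguments of \cite{lin2000nonparametric} and \cite{wang2003marginal}, I can replace the double-indexed sum in \eqref{estimator} by the single-indexed sum displayed in Section \ref{Main Asymptotic Results} and treat $\{(Y_i, T_i, \Pi_i, X_i)\}_{i=1}^{N}$ as i.i.d.\ for the asymptotic analysis of the local statistics $\hat{\tau}(x;\pi)$ and $\hat{\rho}_2(x,\pi)$. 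The resulting approximation error is $o_p(1)$ uniformly on $\mathcal{W}_X$ under Assumption \ref{regularity conditions}(g).

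\textbf{Step 2 (marginal Poissonization).} For each ordered pair $(k,j)$ with $k<j$, under $H_0^{\Pi}$ I have $\tau(x;\pi_k)=\tau(x;\pi_j)$, so $\sqrt{Nh^d}\,\hat{\Gamma}(x;\pi_k,\pi_j)$ is asymptotically a mean-zero Gaussian process in $x$ after local linearization. I then apply the Poissonization technique of \cite{gine2003bm}, exactly as in Lemma 3.1 and Theorem 3.1 of \cite{chang2015nonparametric}: replace $N$ by an independent Poisson$(N)$ sample size $\tilde{N}$, exploit independent increments of the Poisson process on disjoint bandwidth cells, and apply a CLT for sums of independent triangular-array contributions to the integral $\int_{\mathcal{X}}\sqrt{N}|\hat{\Gamma}(x;\pi_k,\pi_j)|\hat{w}(x,\pi_k,\pi_j)\,dx$. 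The Poissonization can then be undone using the concentration bound $\tilde{N}=N+O_p(\sqrt{N})$ together with the de-Poissonization lemma of \cite{gine2003bm}; this yields a scalar CLT for the $(k,j)$-th integral, with mean equal to the $(k,j)$-th summand of $a_1$ and variance obtainable from a standard half-normal calculation.

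\textbf{Step 3 (joint normality and covariance structure).} The proof that $\hat{T}_1$ itself is asymptotically normal requires the joint asymptotic distribution of the $K(K-1)/2$ pairwise integrals. I extend the Poissonization argument to a vector-valued CLT via the Cramér-Wold device: for any linear combination $\sum_{k<j}\lambda_{kj}\int_{\mathcal{X}}\sqrt{N}|\hat{\Gamma}(x;\pi_k,\pi_j)|\hat{w}(x,\pi_k,\pi_j)\,dx$ repeat Step 2 on the aggregated integrand. The pairwise covariances are driven by two sources: overlap in the kernel support, producing the factor $\int K(\xi)K(\xi+t)\,d\xi / \int K(\xi)^2\,d\xi$, and shared indices in the exposure set $\{i,j,k,l\}$, producing the $\hat{\rho}_2$-ratios tabulated in the piecewise formula for $\hat{\rho}(x,t,\pi_i,\pi_j,\pi_k,\pi_l)$. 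A change of variables $t=(x-x')/h$ localizes the covariance kernel and, after applying Assumption \ref{regularity conditions}(h) to absorb the error from replacing $w$ by $\hat{w}$, delivers precisely the expression \eqref{asym variance} for $\sigma_1^2$.

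\textbf{Step 4 (conclusion and extension to $\hat{S}_2$).} Consistency of $\hat{\sigma}_1^2$ for $\sigma_1^2$ follows from uniform consistency of $\hat{\rho}_2(\cdot,\pi)$ for $\rho_2(\cdot,\pi)$ and the bounded, continuous structure of the integrand in \eqref{estimated asym var}, so Slutsky gives $\hat{S}_1\to N(0,1)$. For $\hat{T}_2$, the double integral over $(x,x')$ with $x\neq x'$ is handled similarly: I apply Poissonization in both arguments, with the crucial simplification that when $|x-x'|>h$ the two local CATE estimators are asymptotically independent, so only a shrinking neighborhood of the diagonal contributes to the variance, yielding the factor $-2Ch^d\,\mathrm{Cov}(\hat{\tau}(x;\pi),\hat{\tau}(x';\pi))$ in $\hat{\rho}_2(x,x',\pi_k)$ via the change of variables $t=(x-x')/h$. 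The main obstacle throughout is the cross-pair covariance bookkeeping in Step 3: the six-case piecewise formula for $\hat{\rho}$ makes clear that one must carefully track which of the four indices $(i,j,k,l)$ coincide, and for each case separately identify the contribution from kernel-support overlap versus from sharing an exposure-level sub-sample. Once this taxonomy is laid out, the rest of the argument is a direct transcription of the Poissonization template in \cite{chang2015nonparametric} adapted to the weighted $L_1$ statistic of a difference of CATE estimators.
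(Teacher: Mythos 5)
Your proposal is correct and, in outline, follows the same route as the paper: the working-independence reduction, a uniform linearization of the H\'ajek-type CATE estimator, Poissonization and de-Poissonization in the style of \cite{gine2003bm} and \cite{chang2015nonparametric} to get asymptotic normality of each pairwise $L_1$ integral, localization of the covariance kernel by the change of variables $t=(x-x')/h$, and a final Slutsky step using Assumption \ref{regularity conditions}(h) and consistency of $\hat{\sigma}_1^2$. The one genuine divergence is your Step 3. The paper proves only the \emph{marginal} normality of each $\Tilde{T}_1(\pi_k,\pi_j)$ (Theorem \ref{main theorem}, established via Shergin's CLT for one-dependent random fields and a Cram\'er--Wold argument taken jointly with the de-Poissonization variable $U_N$ for a \emph{single} pair), and then asserts that $\hat{T}_1=\sum_{k}\sum_{j>k}\Tilde{T}_1(\pi_k,\pi_j)$ is asymptotically normal because it is ``the sum of dependent normal random variables'' once the pairwise covariances $\rho(x,t,\pi_i,\pi_j,\pi_k,\pi_l)$ are computed. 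Your explicit vector Cram\'er--Wold device --- rerunning the Poissonization template on arbitrary linear combinations $\sum_{k<j}\lambda_{kj}\int_{\mathcal{X}}\sqrt{N}\lvert\hat{\Gamma}(x;\pi_k,\pi_j)\rvert\hat{w}(x,\pi_k,\pi_j)\,dx$ --- supplies the \emph{joint} normality that this assertion actually requires, since marginal normality plus a covariance computation does not by itself imply normality of the sum; on this step your argument is more complete than the paper's. Conversely, you compress into the phrase ``local linearization'' what the paper makes precise in Lemma \ref{three} and Lemma \ref{TandTstar}: the remainder of the ratio expansion must satisfy $\sup_{x\in\mathcal{X}}\lvert R_N(x,\pi_k)-\zeta_N(x,\pi_k)\rvert=o_p(N^{-1/2})$, i.e.\ a \emph{uniform} rate faster than $N^{-1/2}$, so that it does not perturb the $\sqrt{N}$-scaled $L_1$ functional; to make your Step 2 airtight you would need to state and verify that uniform rate (as the paper does by importing Lemma B.1 of \cite{chang2015nonparametric}) rather than invoke pointwise linearization.
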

\noindent Proposition \ref{normality} suggests that critical values of the tests can be obtained  from the standard normal distribution. This proposition forms an integral part  of the asymptotic properties that follow. 

\begin{theorem} \label{valid size}
Let  Assumptions \ref{Treatment-invariant neighborhoods} - 
 \ref{regularity conditions} hold, then under the \\
 (i) null hypothesis \eqref{first  null hypothesis} 
$$\lim_{C\to \infty} \Pr(\hat{S}_1>z_{1-\alpha}) = \alpha;$$\\
 (ii) null hypothesis \eqref{second  null hypothesis} 
$$\lim_{C\to \infty} \Pr(\hat{S}_2>z_{1-\alpha}) = \alpha$$
\end{theorem}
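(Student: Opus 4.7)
The plan is to deduce Theorem \ref{valid size} as an essentially immediate corollary of Proposition \ref{normality}, using only the continuity of the standard normal distribution function at the quantile $z_{1-\alpha}$. The substantive work—establishing asymptotic normality of $\hat{S}_1$ and $\hat{S}_2$ under the respective nulls via the Poissonization technique of \cite{gine2003bm}, taming within-cluster correlations through the working-independence approximation, and ensuring consistency of the variance estimators $\hat{\sigma}_1^2$ and $\hat{\sigma}_2^2$ so that studentization is legitimate—is already packaged into Proposition \ref{normality}. Hence Theorem \ref{valid size} only requires a short, routine probabilistic argument on top.

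For part (i), I would proceed as follows. Proposition \ref{normality}(i) gives $\hat{S}_1 \xrightarrow{d} Z$ under $H_0^\Pi$, where $Z$ is standard normal. Because the standard normal CDF $\Phi$ is continuous on all of $\mathbbm{R}$, in particular at the fixed point $z_{1-\alpha}$, the Portmanteau theorem yields
$$\Pr(\hat{S}_1 \leq z_{1-\alpha}) \,\longrightarrow\, \Phi(z_{1-\alpha}) = 1-\alpha \quad \text{as } C \to \infty.$$
Taking the complementary probability then gives $\Pr(\hat{S}_1 > z_{1-\alpha}) \to \alpha$, which is the claim. Part (ii) is handled identically: replace $\hat{S}_1$ by $\hat{S}_2$ and invoke Proposition \ref{normality}(ii) in place of (i); every step of the argument goes through unchanged because $z_{1-\alpha}$ remains a continuity point of $\Phi$.

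I do not anticipate any genuine obstacle at this level of the argument. The only subtlety worth keeping in mind is that the passage from the unstudentized quantities to the studentized $\hat{S}_j$ requires $\hat{\sigma}_j^2 \xrightarrow{p} \sigma_j^2 > 0$, so that Slutsky's theorem converts the asymptotic normality of the numerator into that of the ratio; but this consistency is already built into Proposition \ref{normality} (via Assumption \ref{regularity conditions}(c) and (h), together with the explicit construction of $\hat{\rho}_2$, $\hat{\rho}$, and hence $\hat{\sigma}_j^2$ in Section \ref{Main Asymptotic Results}). Thus if any difficulty were to surface in carrying out the plan, it would reflect a gap upstream in Proposition \ref{normality}, not a new hurdle in Theorem \ref{valid size} itself.
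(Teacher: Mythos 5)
Your proposal is correct and takes essentially the same route as the paper: the paper also deduces Theorem \ref{valid size} as an immediate corollary of the asymptotic normality result, writing $\Pr(\hat{S}_1>z_{1-\alpha})=1-\Pr(\hat{T}_1\leq a_1+\hat{\sigma}_1 z_{1-\alpha})\to 1-(1-\alpha)=\alpha$ by the normality theorem underlying Proposition \ref{normality}, which is precisely your continuity-point/Portmanteau step after unwinding the studentization. Part (ii) is dispatched in both treatments by the identical symmetric argument.
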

Theorem \ref{valid size} shows that the test statistics $\hat{S}_1$ and $\hat{S}_2$ have correct sizes asymptotically under $H^{\Pi}_{0}$ and $H^{X}_{0}$ respectively. Hence, the following one-sided decision rule suffices: For $j=1,2$  reject the null hypothesis if $\hat{S}_j>z_{1-\alpha},$ where $z_{1-\alpha},  \alpha \in [0,1]$ is the $(1-\alpha)^{th}$ quantile (critical value) obtained from the standard normal distribution.

\subsection{Power properties of the Test statistics}
In this subsection, I investigate the power of the test statistics against a fixed and  a sequence of local alternatives drifting to the null.
First, I establish that $\hat{S}_1$ and $\hat{S}_2$ are consistent against the following  fixed alternatives 
\begin{align} \label{fixed alternative 1}
    H^{\Pi}_{1}: \int_{\mathcal{X}} \sum_{k =1}^K \sum_{j =1}^K \left\{|\tau(x;\pi_k)-\tau(x;\pi_j)|\right\} w(x, \pi_k, \pi_j)dx>0, 
\end{align}

\begin{align}\label{fixed alternative 2}
    H^{X}_{1}: \int_{\mathcal{X}} \int_{\mathcal{X}} \sum_{k =1}^K \left\{|\tau(x;\pi_k)-\tau(x';\pi_k)|\right\} \frac{w(x, x', \pi_k)}{2}dxdx'>0
\end{align}
respectively. These alternative hypotheses \eqref{fixed alternative 1} and \ref{fixed alternative 2} are analogous to those in  \eqref{first alternative hypothesis} and \eqref{second alternative hypothesis} respectively.

\begin{theorem}\label{power theorem}
Let  Assumptions \ref{Treatment-invariant neighborhoods} - 
 \ref{regularity conditions} hold, then\\ 
 (i) under the fixed alternative hypothesis \eqref{fixed alternative 1}
$$\lim_{C\to \infty} \Pr(\hat{S}_1>z_{1-\alpha})=1, \,\,\,\, \text{and}$$\\
(ii) under the fixed alternative hypothesis \eqref{fixed alternative 2},
$$\lim_{C\to \infty} \Pr(\hat{S}_2>z_{1-\alpha})=1. \,\,\,\,$$
\end{theorem}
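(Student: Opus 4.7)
The plan for both parts is identical and is a simple rate-of-divergence argument. Under the fixed alternative, I show that the numerator $\hat{T}_j-a_j$ of the studentized statistic diverges at rate $\sqrt{C}$, while the denominator $\hat\sigma_j$ is bounded in probability and bounded away from zero. Hence $\hat{S}_j \xrightarrow{p} +\infty$, so $\Pr(\hat{S}_j > z_{1-\alpha}) \to 1$.

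For part (i), the first step is to establish the uniform consistency of the local-constant kernel estimator $\hat{\tau}(\cdot;\pi)$ on the compact set $\mathcal{W}_X$. Under the working-independence regime of Section \ref{Main Asymptotic Results}, the observations entering the local estimator are asymptotically mutually independent, so a standard uniform-rate argument for kernel regression under Assumptions \ref{Unconfoundedness}--\ref{regularity conditions} (using the smoothness in (e), the moment control in (f), and the bandwidth condition $(Ch^{2d})^{1/2}/\log C \to \infty$ in (g)) gives $\sup_{x\in\mathcal{W}_X}|\hat\tau(x;\pi)-\tau(x;\pi)|=o_p(1)$. Combining this with Assumption \ref{regularity conditions}(h) via the triangle inequality and applying dominated convergence yields
\begin{equation*}
\frac{\hat{T}_1}{\sqrt{C}} \;=\; \int_{\mathcal{X}}\sum_{k=1}^{K}\sum_{j>k}^{K}|\hat\tau(x;\pi_k)-\hat\tau(x;\pi_j)|\,\hat w(x,\pi_k,\pi_j)\,dx \;\xrightarrow{p}\; \delta_1,
\end{equation*}
where $\delta_1>0$ equals (up to the factor $1/2$ absorbed by replacing the symmetric double sum by the $j>k$ sum) the quantity assumed positive in \eqref{fixed alternative 1}.

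The second step is pure rate arithmetic. The asymptotic bias satisfies $a_1=O(h^{-d/2})$, while Assumption \ref{regularity conditions}(g) implies $Ch^{3d}\to\infty$, i.e.\ $h^{-d/2}=o(C^{1/6})=o(\sqrt{C})$, so $a_1/\sqrt{C}\to 0$ and thus $(\hat{T}_1-a_1)/\sqrt{C}\xrightarrow{p}\delta_1$. For the denominator, the same uniform-consistency argument applied to $\hat\rho_2$, hence to $\hat\rho$, together with the continuous mapping theorem on the bounded covariance functional inside \eqref{estimated asym var}, gives $\hat\sigma_1^2\xrightarrow{p}\sigma_1^2$, and Assumption \ref{regularity conditions}(c) forces $\sigma_1^2>0$. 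Combining, $\hat{S}_1=(\hat{T}_1-a_1)/\hat\sigma_1 \sim \sqrt{C}\,\delta_1/\sigma_1 \xrightarrow{p} +\infty$, which gives $\lim_{C\to\infty}\Pr(\hat{S}_1>z_{1-\alpha})=1$.

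Part (ii) is identical after relabelling: the same uniform consistency, now on $\mathcal{W}_X^2$, yields $\hat{T}_2/\sqrt{C}\xrightarrow{p}\delta_2>0$ with $\delta_2$ the positive quantity in \eqref{fixed alternative 2}; the bias rate $a_2=O(h^{-d/2})=o(\sqrt{C})$ is unchanged; and $\hat\sigma_2^2$ converges to a positive constant by applying the continuous mapping theorem to \eqref{variance 2}. The main obstacle is not conceptual but bookkeeping: securing the uniform consistency of $\hat\tau$ at a rate sufficient to justify the interchange of limit and integral, and ensuring that the effective integration domain localizes to the compact $\mathcal{W}_X$ where $w$ is continuous and finite---this is delivered by Assumption \ref{regularity conditions}(c) together with the bounded-support kernel restriction in \ref{regularity conditions}(d), so that dominated convergence applies. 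Everything else is arithmetic on the bandwidth conditions.
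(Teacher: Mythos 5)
Your proposal is correct and follows essentially the same route as the paper: the paper's proof also divides $\hat{T}_1$ by $\sqrt{N}=\sqrt{CN_0}$, notes that $(a_1+\hat\sigma_1 z_{1-\alpha})/\sqrt{N}=o_p(1)$ because $\sqrt{Nh^d}\to\infty$ (equivalent to your $h^{-d/2}=o(\sqrt{C})$ arithmetic), and concludes $\Pr(\hat{T}_1/\sqrt{N}>0)\to 1$ from positivity of the limit under \eqref{fixed alternative 1}, with part (ii) handled symmetrically. The only difference is that you spell out the uniform consistency of $\hat\tau$ and $\hat\sigma_j^2$ (which the paper leaves implicit, relying on its Lemma \ref{consis} and Assumption \ref{regularity conditions}(h)) to justify $\hat{T}_j/\sqrt{C}\xrightarrow{p}\delta_j>0$, which is a legitimate and slightly more explicit rendering of the same argument.
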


Theorem \ref{power theorem} shows that the proposed test statistics have  power against fixed alternatives.
Next, I show that the proposed tests $\hat{S}_1$ and $\hat{S}_2$ can detect a sequence of local alternatives converging to the null hypotheses.
Specifically, consider the following sequences of local alternatives  
\begin{equation} \label{local alt 1}
    H^{\Pi}_{a}: \tau(x,\pi)- \tau(x,\pi')= C^{-1/2}h^{-d/4}\cdot N_0^{-1/2}\delta_1(x, \pi, \pi')\,\, \forall x \in \mathcal{X}, \pi, \pi' \in \Pi
\end{equation}
\begin{equation} \label{local alt 2}
    H^{X}_{a}: \tau(x,\pi)- \tau(x',\pi)= C^{-1/2}h^{-d/4}\cdot N_0^{-1/2}\delta_2(x, x', \pi)\,\, \forall x, x' \in \mathcal{X}, \pi\in \Pi,
\end{equation}
converging to the null hypotheses $H^{\Pi}_{a}$ and $H^{X}_{a}$ respectively where for $j=1,2$,\, $\delta_j(\cdot,\cdot,\cdot)$ is a real bounded  function satisfying:\\ $\int_{\mathcal{X}}\sum_{\pi \in \Pi}\sum_{\pi' \in \Pi}|\delta_1(x, \pi, \pi')|w(x, \pi, \pi')dx>0,$ and $\int_{\mathcal{X}}\int_{\mathcal{X}}\sum_{\pi \in \Pi}|\delta_2(x, x', \pi)|w(x,x', \pi)dxdx' \allowbreak >0.$ 

\begin{theorem}\label{local power theorem}
Let  Assumptions \ref{Treatment-invariant neighborhoods} - 
 \ref{regularity conditions} hold, then\\ 
 (i) under the sequences of  alternative hypotheses (\ref{local alt 1}),
$$\lim_{C\to \infty} \Pr(\hat{S}_1>z_{1-\alpha})= 1- {\Phi}\Bigg(z_{1-\alpha}- \frac{1}{\sqrt{2\pi}\sigma_{1}}\int_{\mathcal{X}}\sum_{k=1}^K\sum_{j=1}^K\delta^2(x, \pi_k, \pi_j)dx\Bigg),$$
and;
\newline
(ii) under the sequences of  alternative hypotheses (\ref{local alt 2}),
$$\lim_{C\to \infty} \Pr(\hat{S}_2>z_{1-\alpha})= 1- {\Phi}\Bigg(z_{1-\alpha}- \frac{1}{2\sqrt{2\pi}\sigma_{2}}\int_{\mathcal{X}}\int_{\mathcal{X}}\sum_{j=1}^K\delta^2(x, x', \pi_k)dxdx'\Bigg),$$ 
where $\Phi$ denotes the cumulative distribution function (CDF) of the standard normal distribution.
\end{theorem}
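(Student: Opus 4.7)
The plan is to extend the asymptotic normality argument underlying Proposition \ref{normality} by tracking a deterministic drift induced by the local alternative. For part (i), I would first decompose $\hat{\tau}(x;\pi_k)-\hat{\tau}(x;\pi_j)$ into its centered stochastic part plus the mean difference $\tau(x,\pi_k)-\tau(x,\pi_j) = C^{-1/2}h^{-d/4}N_0^{-1/2}\delta_1(x,\pi_k,\pi_j)$ imposed by $H^{\Pi}_{a}$. Multiplying by $\sqrt{C}$, the drift becomes $m_{kj}(x) = h^{-d/4}N_0^{-1/2}\delta_1(x,\pi_k,\pi_j)$, which is asymptotically negligible relative to the stochastic standard deviation of order $h^{-d/2}$, since $m_{kj}(x)/\sigma_{kj}(x) = O(h^{d/4}) \to 0$ uniformly on the compact $\mathcal{W}_X$.

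Next, I would exploit the elementary identity that, for a centered Gaussian $V$ with standard deviation $\sigma$, $\mathbbm{E}|V+m|-\mathbbm{E}|V| = m^2/(\sigma\sqrt{2\pi}) + O(m^4/\sigma^3)$ whenever $m/\sigma \to 0$. Applying this pointwise to each summand of $\hat{T}_1$, and invoking the uniform consistency of $\hat{w}$ from Assumption \ref{regularity conditions}(h), the local drift contributes an additional mean shift to $\hat{T}_1$ that, after division by $\hat{\sigma}_1$, equals the constant $\mu_1 := \frac{1}{\sqrt{2\pi}\sigma_{1}}\int_{\mathcal{X}}\sum_{k=1}^K\sum_{j=1}^K \delta_1^2(x,\pi_k,\pi_j)\,dx$ appearing in the theorem. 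The asymptotic variance of $\hat{T}_1$ remains $\hat{\sigma}_1^2$ at leading order since the drift is of smaller order than $\sigma_{kj}$ and so contributes negligibly to second moments. The Poissonization framework of \cite{gine2003bm} used in Proposition \ref{normality} then continues to apply once the stochastic part is decoupled from the deterministic drift, delivering $\hat{S}_1 \darrow N(\mu_1, 1)$ under $H^{\Pi}_{a}$. The conclusion $\Pr(\hat{S}_1 > z_{1-\alpha}) \to 1-\Phi(z_{1-\alpha}-\mu_1)$ then follows by continuity of $\Phi$.

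Part (ii) proceeds identically: decompose $\hat{\tau}(x;\pi_k)-\hat{\tau}(x';\pi_k)$ into a centered piece and the drift $C^{-1/2}h^{-d/4}N_0^{-1/2}\delta_2(x,x',\pi_k)$, apply the same Taylor-type expansion pointwise, integrate over $(x,x')$, and sum over $\pi_k$; the factor of $1/2$ in the limiting drift of $\hat{S}_2$ arises from the symmetrization of the double integral $\int_{\mathcal{X}}\int_{\mathcal{X}}$ over unordered pairs. The principal obstacle, as I see it, will be showing that the $O(m^4/\sigma^3)$ remainder in the Taylor expansion is uniformly negligible after aggregation over $\mathcal{W}_X$ (and $\mathcal{W}_X^{2}$ for $\hat{S}_2$); this requires uniform control of $\hat{\tau}$ on the compact set via Assumptions \ref{regularity conditions}(c)--(g), together with a verification that the Poissonized CLT is not disturbed by the added constant shift so that the joint limiting covariance structure defining $\sigma_1^2$ and $\sigma_2^2$ is preserved.
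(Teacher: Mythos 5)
Your proposal is correct and takes essentially the same route as the paper: the paper likewise decomposes the estimator into a centered stochastic part plus the drift, shows (by a slight modification of the null-hypothesis Poissonization argument) that the studentized statistic remains asymptotically normal with variance unaffected by the local alternative, and obtains the mean shift via $\lim_{N\to\infty}\{\mathbbm{E}[T_{1N}^{*}(\pi_k,\pi_j)]-\mathrm{Bias}(T_{1N}(\pi_k,\pi_j))\}=\frac{1}{2\sqrt{\pi}}\int\delta^2(x,\pi_k,\pi_j)\,dx$ by appeal to the argument of Theorem 4.3 in \cite{chang2015nonparametric}, which is exactly the folded-normal expansion $\mathbbm{E}|V+m|-\mathbbm{E}|V|=m^{2}/(\sigma\sqrt{2\pi})+O(m^{4}/\sigma^{3})$ that you carry out explicitly. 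The only difference is presentational: you perform that second-order expansion pointwise yourself, whereas the paper keeps the noncentral mean $\mathbbm{E}\bigl|\mathbbm{Z}_1\,h^{-d/2}\sqrt{\rho_2(x,\pi_k,\pi_j)}+h^{-d/4}\delta(x,\pi_k,\pi_j)\bigr|$ intact and passes to the limit of the mean difference by citation.
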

Theorem \ref{local power theorem} demonstrates that the test statistics have statistical power greater than zero against local alternatives approaching the null at a rate slower than $C^{-1/2}h^{-d/4}.$  This rate agrees with that found in nonparametric kernel-based tests of parametric restrictions.

\begin{remark}\label{COD}
 While the asymptotic results hold for $d \geq 1,$ it is important to recognize that kernel-based estimators may be biased in higher dimensions due to the curse of dimensionality. However, this does not diminish the practical value of the testing procedures developed in this paper, as they remain highly effective for covariates of low dimension --- as long as the data size is moderate --- which is a common setting in many empirical applications. 
    
In general, testing HTEs in the presence of high-dimensional covariates remains an open area of research. Although \cite{fan2022estimation}, \cite{huang2022robust}, and some other papers have studied estimation and inference of CATE under the SUTVA assumption with high dimensional covariates $X,$ no existing papers have studied HTEs testing with high dimensional covariates. A straightforward approach will involve an application of the two-stage method on split samples described in \cite{fan2022estimation}. Specifically, a machine learning algorithm will reduce the dimension of $X$ in the first stage. Then, the testing procedure proposed in this paper can be applied to the $X$ selected from the first stage on an independent set of clusters.
\end{remark}

\subsection{Bootstrap-Based Inference}\label{bootstrap}
 I introduce bootstrap resampling procedures in this subsection to obtain the null distributions of $\hat{S}_1$ and $\hat{S}_2$. These bootstrap procedures provide a benchmark for the aforementioned asymptotic-based methods in finite samples. Specifically, I numerically compare the power and size of the  asymptotic-based and bootstrap-based procedures for finite samples in Appendix \ref{appendsim}.

Following the recommendations of several authors, particularly \cite{davidson1999size},  I propose the following bootstrap resampling procedures that impose the null hypotheses. 
\subsubsection{The Bootstrap resampling algorithm for $\hat{S}_1$}\label{balgorithm 1}
Let $\mathbf{W}_c=(\mathbf X_c, \mathbf T_c,\mathbf Y_c)$ denote the vector of variables for the $c^{th}$ cluster. Therefore, the pooled sample across all clusters can be written as $\{\mathbf W_{c}\}_{c=1}^{C}.$ I propose the following pairs cluster bootstrap-t   procedure with the null $H_0^{\Pi}$ imposed to generate the null distribution of $\hat{S}_1$:
\begin{enumerate}[noitemsep,topsep=0pt]
    \item For $k=1,\dots, K,$ randomly draw $C_k$ clusters from the pooled clusters with replacement and denote the resulting bootstrapped pseudo-sample combined with a treatment exposure  $\pi_k$ as  $W^{*}(\pi_k):=\{\mathbf W^*_c, \pi_k\}_{c=1}^{C_k}.$  That is, for each $k,$ all the $C_k$ clusters in that pseudo-sample have a treatment exposure of $\pi_k$ which may differ from their true  exposure.
    \item Compute the test statistic $\hat{S}^*_1=\hat{T}^*_{1}-a^*_{_1}$ using the pooled bootstrapped data $\{W^{*}(\pi_k)\}_{k=1}^K$, where the  definition of $\hat{T}^*_{1}$  and $a^*_{_1}$ are the same as $\hat{T}_{1}$ and $a_{_1}$ respectively. Comparing $\hat{S}^*_1$ to its asymptotic counterpart, note that I omit the standard error term in the denominator to reduce computation time.
    \item  Repeat  1 and 2 a large number of times (say $B_1$ times) and use the empirical distribution of the  $B_1$ bootstrapped test statistics $\{\hat{S}^*_{1, b}\}_{b=1}^{B_1}$ to approximate the null distribution of $\hat{T}_{1}-a_{_1}.$
    \item Compute the empirical $p$-value as $\hat{p}*=B_1^{-1}\sum_{b=1}^{B_1}\mathbbm{1}(\hat{S}^*_{1, b} >\hat{S}^o_{1})$ where $\hat{S}^o_{1}$ is the test statistic computed with the original data.
\end{enumerate}

\subsubsection{The Bootstrap resampling algorithm for $\hat{S}_2$}\label{balgorithm 2}
 I propose the following wild cluster bootstrap-t procedure with the null $H_0^{X}$ imposed to generate the null distribution of $\hat{S}_2$:
\begin{enumerate}[noitemsep,topsep=0pt]
    \item Estimate a "restricted" nonparametric conditional mean function   $\mathbbm{E}(Y_{ci}|X=\bar{x},\Pi_{ci},T_{ci}).$ Let the resulting fitted values be $\hat{M}(\bar{x},\pi_c,t_{ci}),c=1,\dots, C, i=1,\dots,N.$ 
    The restricted conditional mean does not vary by the pre-treatment variables $X,$ since they are held constant at their average value $\bar{x}.$ 
    \item Let $\hat{\mathbf{M}}_c$ denote the vector of fitted values in cluster $c.$ Then, obtain the cluster residuals $\hat{\boldsymbol{\varepsilon}}_c= \boldsymbol{Y}_c-\hat{\mathbf{M}}_c,$ $c=1,\dots, C.$  Since we compute these residuals using the restricted conditional means, they are residuals obtained under the null hypothesis $H^{X}_{0}.$
    \item For each cluster $c=1,\dots, C,$  form  cluster dependent vectors as $$\mathbf{Y}^*_c=\hat{\mathbf{M}}_c +\hat{\boldsymbol{\varepsilon}}^*_c,$$ where $\hat{\boldsymbol{\varepsilon}}^*_c= \hat{\boldsymbol{\varepsilon}}_c \cdot V^*_c$ with $V^*_c$ Rademacher distributed.   These dependent vectors  are used to create a bootstrap pseudo-sample $(\mathbf X_c, \mathbf T_c, \boldsymbol{\Pi}_c, \mathbf Y^*_c), c=1\dots C,$ where $(\mathbf X_c, \mathbf T_c, \boldsymbol{\Pi}_c)$ are from the original sample.
    \item Compute the test statistic $\hat{S}^*_2=\hat{T}^*_{2}-a^*_{_2}$ using the null bootstrap sample where I define $\hat{T}^*_{2}$ and $a^*_{_2}$ the same way as $\hat{T}_{2}$ and $a_{_2}$ respectively. 
    Here also, I omit the scaling factor based on the same argument as in  $\hat{S}^*_1$
    \item Repeat 3--4 many times (say $B_2$ times ) and use the empirical distribution of the  $B_2$ bootstrapped test statistics $\{\hat{S}^*_{2, b}\}_{b=1}^{B_2}$ to approximate the null distribution of $\hat{T}_{1}-a_{_2}.$
    \item Compute the empirical $p$-value as $\hat{p}*=B_2^{-1}\sum_{b=1}^{B_2}\mathbbm{1}(\hat{S}^*_{2, b} >\hat{S}^o_{2}),$ where  $\hat{S}^o_{2}$ is the test statistic computed via the original data. 
\end{enumerate}

\section{Monte Carlo Simulation}\label{Monte Carlo Simulation}
In this section,  I investigate the finite sample performance of the proposed test statistics via Monte Carlo experiments. I first check the size and power of the test statistics. Then, I compare the performance with their parametric counterparts. I also report additional simulation results in Appendix \ref{appendsim}.
\subsection{ Empirical size and statistical power}\label{size and power sim}
All the rejection probabilities in the simulations are based on 1000 replications.
In the experiments in this section, I focus on one pre-treatment variable $X_{ci}$\footnote{I extend the experiment to the case with two pre-treatment variables in Appendix \ref{multivarX}.}  drawn from the uniform $[0, 1]$ distribution allowing for within-cluster correlations. Each cluster is assigned one of four treatment vectors $\textbf{T}(k), k=1\dots,4.$ The average of $\textbf{T}(k)$ is $\pi_k$ where $\{\pi_1,\pi_2,\pi_3,\pi_4 \}= \{0.3, 0.4, 0.5, 0.6\}=\mathbf{\Pi}.$ The realized outcome $Y_{ci}$ is  constructed  as follows:
$$Y_{ci} = (\tau(X_{ci}, \Pi_{ci}) + U_{1ci}) \times T_{ci} +  U_{0ci} \times (1 - T_{ci}),$$
where  $U_{1ci}$ and $U_{0ci}$ are independent normals with a mean of zero and variance of 0.1, and $T_{ci} \in \textbf{T}(k).$
The general specification of the CATE  is 
$$\tau(x;\pi) = \beta_0 x + \beta_1 \pi.$$
In the experiments, I compute CATEs on a uniform grid in the interval between the $10^{th}$ and $90^{th}$ percentiles of $X$. This way, I avoid the \textit{boundary bias} issue associated with the kernel estimators. Integrals are computed with the composite trapezoid technique.
 
 I adopt the following kernel function that satisfies Assumption \ref{regularity conditions}(d):
 \begin{equation} \label{kernel}
   K(u) =1.5(1 - (2u)^2)\cdot\mathbbm{1}\{|u|\leq 0.5\},  
 \end{equation}
and a   bandwidth 
\begin{equation}\label{bandwidth}
 h = \kappa_h\hat{s}_XC^{-2/7},   
\end{equation}
where $\hat{s}_X$ is the sample standard deviation of  $X$ and $\kappa_h$ is a constant. A similar kernel and bandwidth are employed in \cite{chang2015nonparametric}.
Moreover, I set $C=150$ with 10 units $(N_c=10)$ in each cluster (i.e., N=1500 units).

To compute the empirical size of $\hat{S}_1$, I fix $\beta_0=1$ and $\beta_1=0.$ Thus,   $\tau(x;\pi) =  x,$ and   the  null hypothesis $H_0^\Pi$ of CTEs by $\Pi$ is true.
Note that as  $\beta_1$ deviates further from  $0$ in both directions, the  null hypothesis  deviates further away from the truth. Again, I report the empirical rejection probabilities for values of $\beta_1$ in the range $[-0.5, 0.5]$, with  0.05 increments.
 
In contrast, to obtain the empirical size of $\hat{S}_2$, I set $\beta_0=0,$ and $\beta_1=1.$ Thus, $\tau(x;\pi) =  \pi,$ and  the  null hypothesis $H_0^X$ of CTEs by $X$ is true.
I report the empirical rejection probabilities for each $\beta_0$ in the range $[-0.5,0.5 ]$, with  0.05 increments.

Focusing on the asymptotic-based procedure, a summary of the empirical rejection probabilities is in  Figure \ref{powercurves-asym} (and in Tables \ref{power table:1}--\ref{power table:2b}  in  Appendix \ref{appendsim}). In each panel, the plots represent the rejection probabilities at the 1\%, 5\%, and 10\% nominal levels. The left panel reports the empirical rejection probabilities of $\hat{S}_1,$ and the right panel reports those of $\hat{S}_2.$ When $\beta_0=\beta_1=0,$ the empirical rejection probabilities are close to the nominal probabilities, corroborating the theoretical results in Theorem \ref{valid size}. On the contrary, as $\beta_0$ and $\beta_1$ deviate towards $\pm 0.5,$ the rejection probabilities approach 1, which aligns with  Theorem \ref{power theorem}.  

Similarly, using the bootstrap procedures, I summarize the empirical rejection probabilities in  Figure \ref{powercurves-bootstrap} (and in Tables \ref{power table boot:1}--\ref{power table boot:2}  in  Appendix \ref{appendsim}). The results are based on 399 bootstrap resamples. Compared to the asymptotic-based empirical rejection probabilities, the differences in the probabilities are negligible. This suggests that the test statistics under the null hypotheses "converge in bootstrap distribution" to the standard normal distribution for large sample sizes.  

\begin{figure}[ht]
\setcounter{subfigure}{0}
\centering
\caption{Empirical rejection probabilities using asymptotic method.}
\subfigure[Power curve for $\hat{S}_1$ when $\beta_1$ lies between -0.5 and 0.5.]{
    \includegraphics[width=0.46\columnwidth, keepaspectratio]{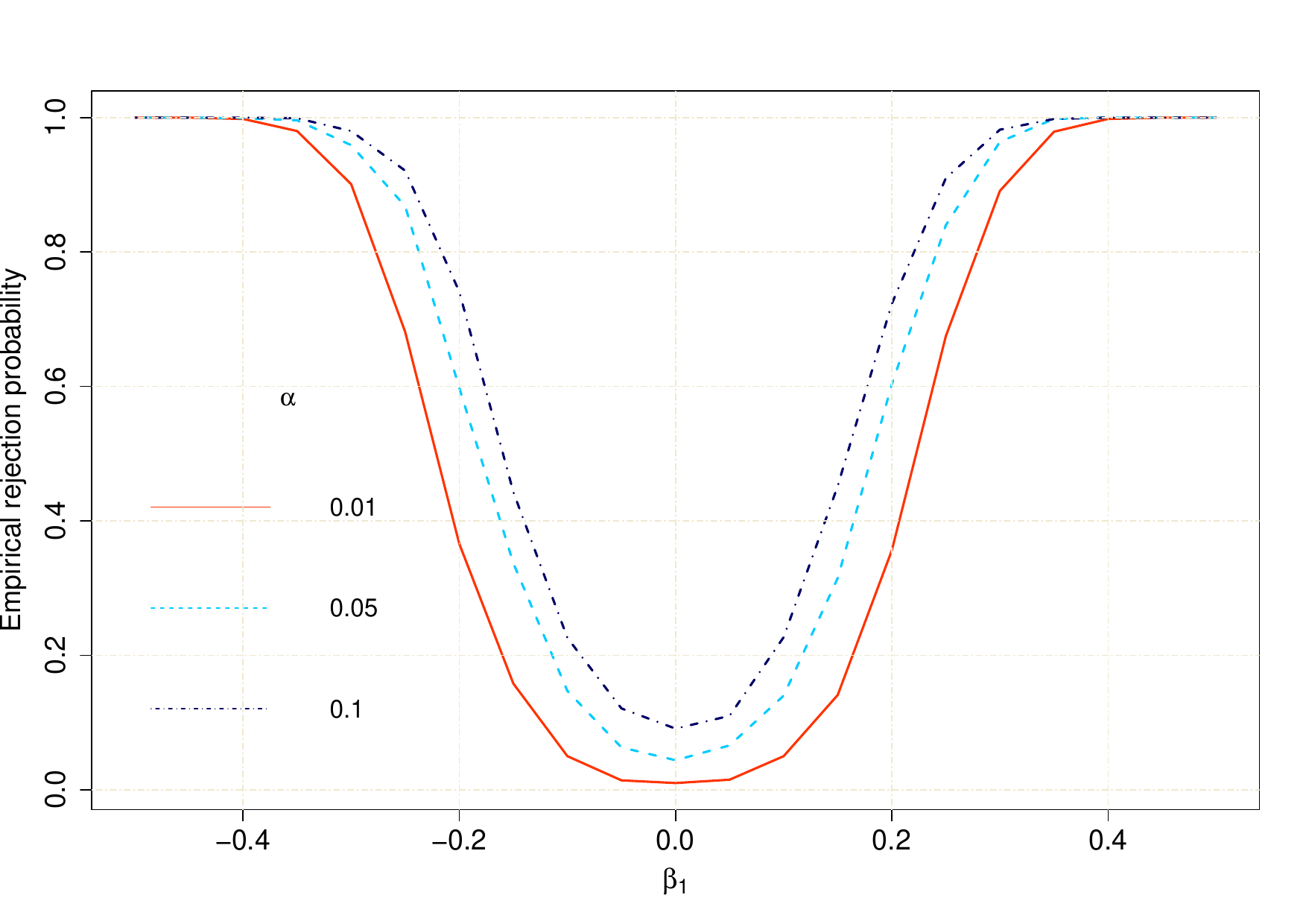}
}
\subfigure[Power curve for $\hat{S}_2$ when $\beta_0$ lies between -0.5 and 0.5.]{
    \includegraphics[width=0.46\columnwidth, keepaspectratio]{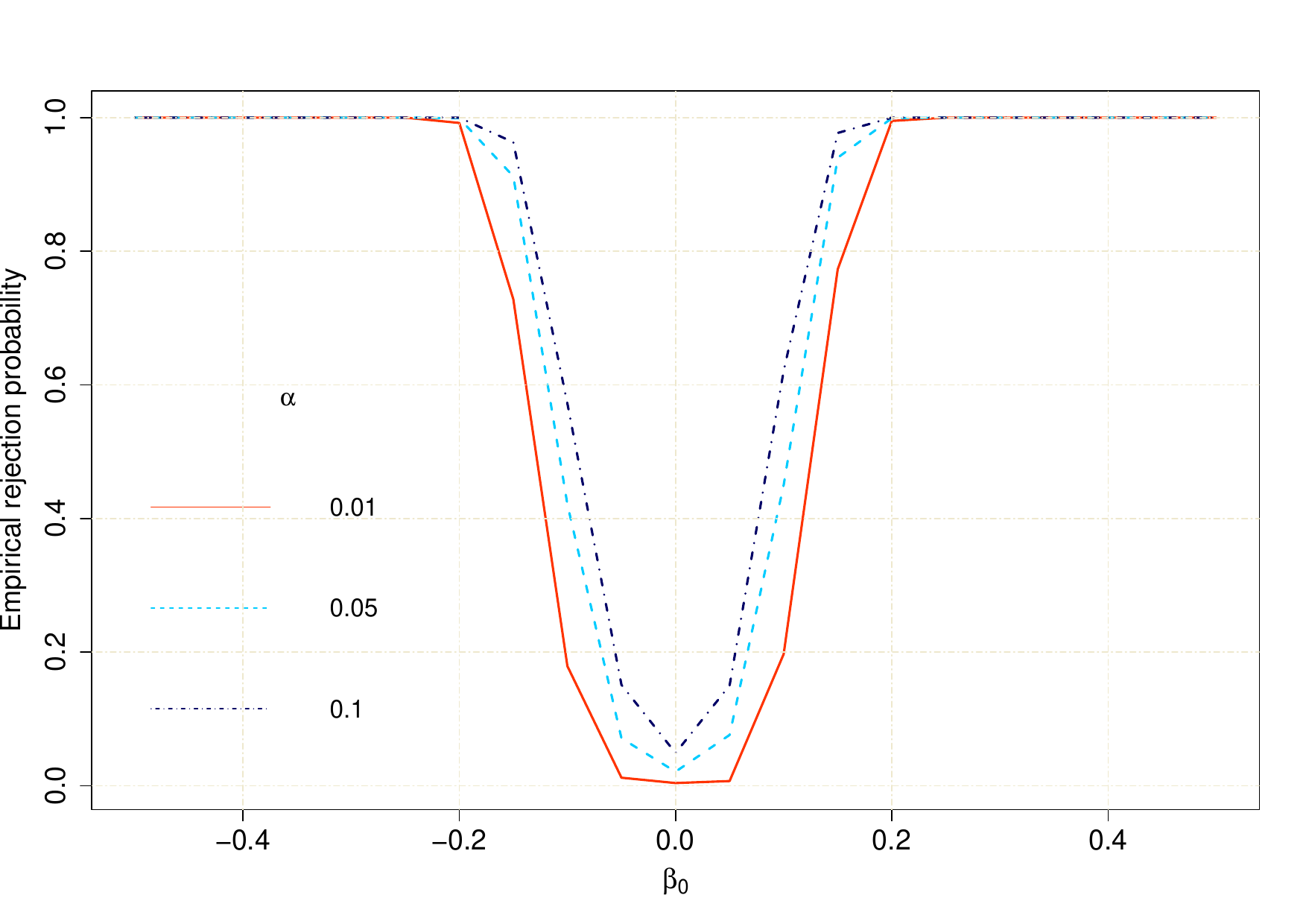}       
}\\
\label{powercurves-asym}
\end{figure}

\begin{figure}[ht]
\setcounter{subfigure}{0}
\centering
\caption{Empirical rejection probabilities using the bootstrap method.}
\subfigure[Power curve for $\hat{S}^*_1$ when $\beta_1$ lies between -0.5 and 0.5.]{
    \includegraphics[width=0.46\columnwidth, keepaspectratio]{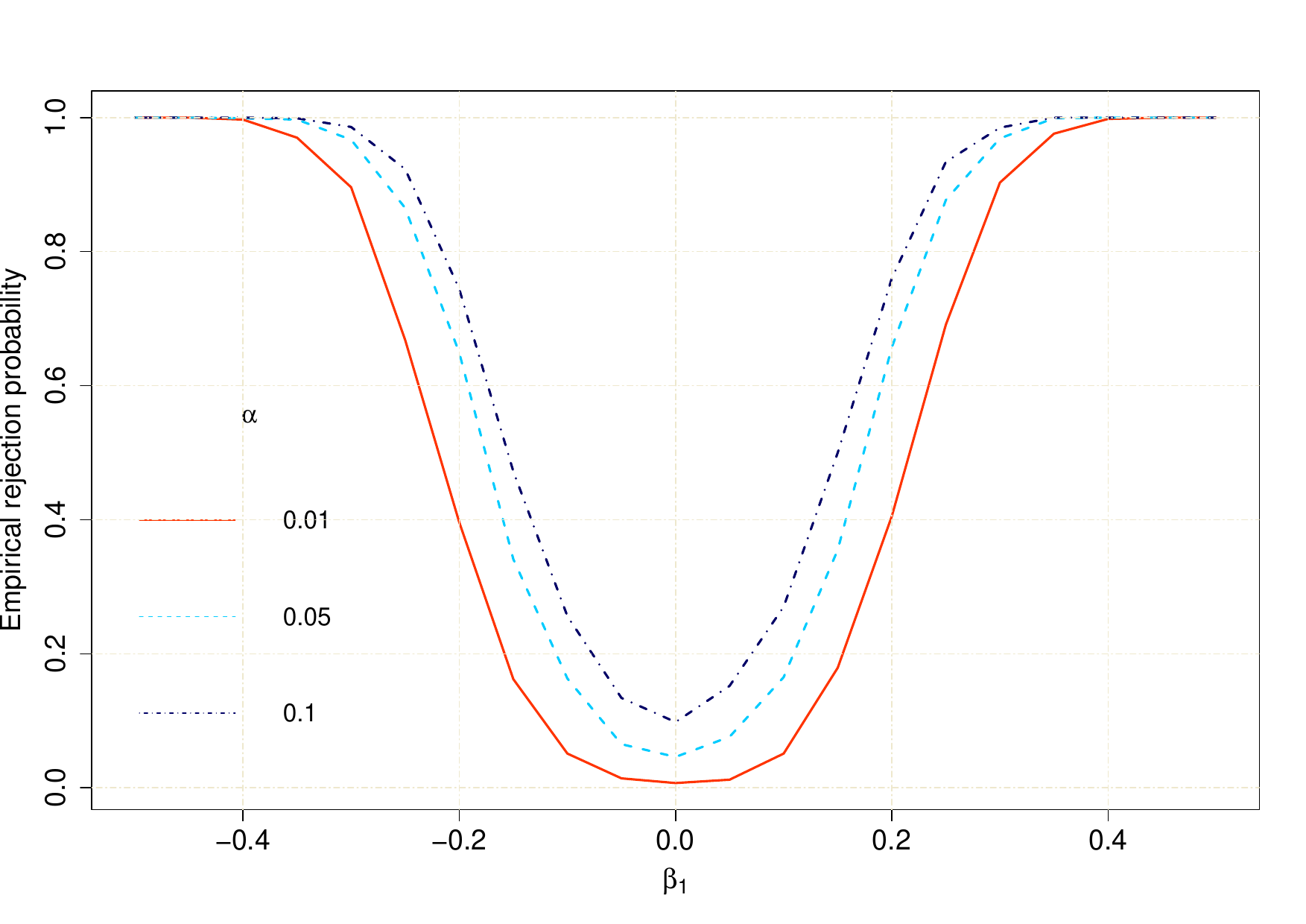}
}
\subfigure[Power curve for $\hat{S}^*_2$ when $\beta_0$ lies between -0.5 and 0.5.]{
    \includegraphics[width=0.46\columnwidth, keepaspectratio]{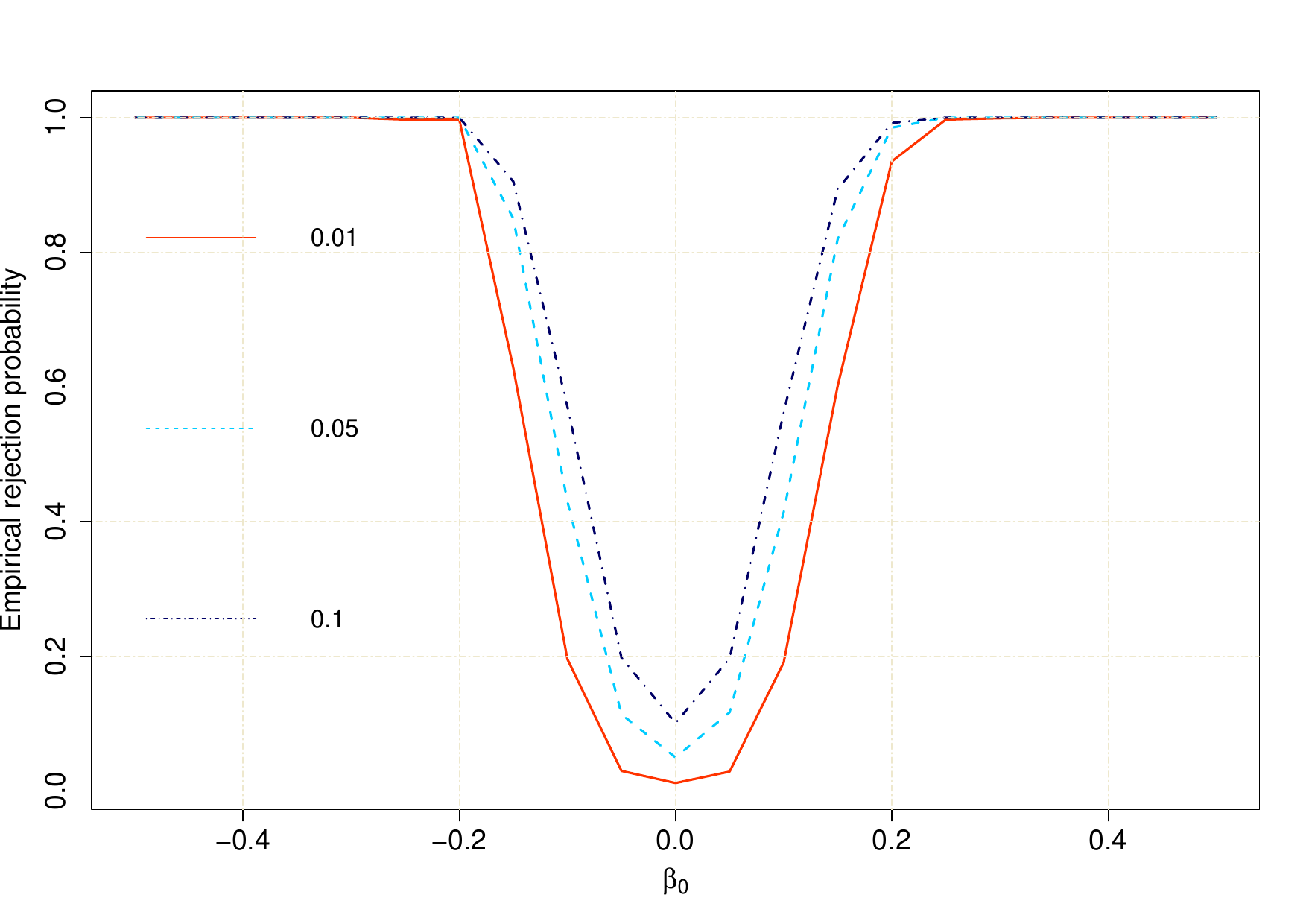}       
}
\\
\label{powercurves-bootstrap}
\end{figure}

\subsection{Parametric Testing and Misspecification}
The Monte Carlo experiment in this subsection seeks to show that parametric tests of $H_0^\Pi$ and $H_0^X$ may be misleading because parametric models are always misspecified to a certain degree. I only report the results of the asymptotic-based procedure to save space.
I generate sample data  $\{Y_i,X_i,T_i, \Pi_i\}$ of size N=600 ($C=60$ and $N_c=10$), and $\Pi_i \in \{0.3,0.4,0.5,0.6\}$. 
The outcome is of the form:
$$Y_{ci} = (\tau(X_{ci}, \Pi_{ci}) + U_{1ci}) \times T_{ci} +  U_{0ci} \times (1 - T_{ci}),$$
where  $$\tau(x;\pi) =30\cdot \cos(2\cdot3.142\cdot x)\cdot (\pi^2-\pi).$$ I keep the remaining design as in Section \ref{size and power sim} above. This new  CATE specification is highly non-linear in  $X$ and $\Pi.$

 I estimate the following linear regression model using ordinary least squares:
\begin{equation*}
    Y_{ci}=\beta_0 + \beta_1 T_{ci} +\beta_2 X_{ci} + \beta_3 \Pi_{ci} + \beta_4 T_{ci}\cdot X_{ci} + \beta_5 T_{ci}\cdot \Pi_{ci} + \varepsilon_{ci}.
\end{equation*}
The parameters $\beta_4$ and $\beta_5$ measures HTEs across $X$ and $\Pi$ respectively. The estimation results in Table \ref{Para}  show that $\beta_4$ and $\beta_5$  are insignificant when we use clustered standard errors. It leads to the wrong conclusion that treatment effects do not vary by the exposure variable $\Pi,$ and the pre-treatment variable $X$.

\begin{table}
\centering
 \caption{Summary of Test Results for Simulated DGP based on Parametric Tests using clustered standard errors} 
  \label{Para} 
\begin{tabular}{llllll}
\toprule
& Estimate  & Std. Error & t value  &$p$-value  \\
\toprule
Intercept & 0.027   &0.018   & 1.549 & 0.122 \\
\\
$T$ &  4.473 & 3.018 & 1.483 & 0.139 \\
\\
$X$ & -0.012& 0.016  &-0.760  & 0.448\\
\\
$\Pi$ & -0.022 & 0.042 & -0.527& 0.598 \\
\\
$T\cdot X$ &-3.101  &2.362  &-1.313  & 0.190 \\
\\
$T \cdot \Pi$ & -6.565 &6.702  &-0.980  & 0.328 \\
 \bottomrule
 \hline \\[-1.8ex] 
Number of observations: & 600 \\ 
R$^{2}$ & 0.002 \\ 
Adjusted R$^{2}$ & $-$0.007 \\ 
Residual Std. Error & 3.517 (df = 594) \\ 
F Statistic & 0.200 (df = 5,594) \\ 
\hline 
\end{tabular}
\end{table}

Next, I test the null hypotheses using this paper's proposed nonparametric test statistics. I use the kernel function in \eqref{kernel} and the bandwidth formula in \eqref{bandwidth}.
Table \ref{Para:1} summarizes the results of the two tests at different bandwidth choices (different $\kappa_h$ in the bandwidth formula in \eqref{bandwidth}). The results unequivocally reject the null hypotheses of CTEs by the exposure variable $\Pi,$  and the pre-treatment variable $X$. This serves as a stark reminder of the potential for misspecification of the functional form of the conditional mean in parametric models to lead to an erroneous inference of HTEs, underscoring the need for the nonparametric testing procedures in this paper.

\begin{table}[H]
\caption{Summary of Test results for Simulated DGP based on proposed nonparametric test}
\label{Para:1}
\centering
  \begin{tabular}{lSSSSSS}
    \toprule
    \multirow{2}{*}{Bandwidth (h)} &
 \multicolumn{2}{c}{\underline{$H^{\Pi}_{0}$: CTEs across $\Pi$} } &
      \multicolumn{2}{c}{\underline{$H^{X}_{0}$: CTEs across $X$}} \\\\
      & {$\hat{S}_1$} & {$p$-value} & {$\hat{S}_2$} & {$p$-value}  \\
      \midrule
    0.195 & 6.705 & <0.01 & 67.266 & 0.000 \\
    0.232 & 5.205 & <0.01 & 51.360& 0.000  \\
    0.296 & 3.963 & <0.01 & 34.470 & 0.000 \\
    0.371 & 3.345 & <0.01 & 23.206 & 0.000 \\
    \bottomrule
     \hline \\[-1.8ex] 
Number of observations: & 600 \\
 \hline
  \end{tabular}
\end{table}


\section{Empirical Application} \label{application}
 In this section, I use the experimental data from \cite{cai2015social} to demonstrate the usage of the proposed test statistics.
This experiment was implemented to help determine whether farmers' understanding of a weather insurance policy affects purchasing decisions.  
The authors examine the impact of two types of information sessions on insurance adoption among 5335 households in 185 small rice-producing villages (47 administrative villages) in 3 regions in the Jiangxi province in China. They show that the type of information session directly affects participants' adoption and significantly affects the adoption decision of participants' friends.

The data includes each household's network information (each household has at most five friends) and additional pre-treatment information such as age, gender, rice production area, risk aversion score, the fraction of household income from rice production, and others. The outcome of interest is binary: whether or not a household buys the insurance policy. For each village in the experiment, there were \textit{two rounds} of information sessions offered to introduce the insurance product. Households are randomly assigned to rounds.\footnote{Each household can only participate in one of the two rounds.} In each round, \textit{two sessions} were held \textit{simultaneously}: a simple session  (with less information) and an intensive session. Households are randomly assigned to sessions. Participants have to make a purchase decision on the spot in each session. Since the second round of information sessions were held three days after the first, \cite{cai2015social} argue that information may spill to households participating in the second round of information sessions from their friends who attended the first round. That is, the insurance adoption of those in the first round depends only on their treatment (participation) status, \textit{but the insurance adoption of participants in the second round depends on their treatment and the treatment statuses of their friends in the first round.}
Table \ref{logit regression} shows the result of a nonlinear probability (logit) model of insurance adoption on "Second-round" --- a binary variable which is one if households participated in the second round session and zero otherwise --- among households in the \textit{simple sessions}. The coefficients of "Second-round" indicate that the probability of adoption among the second-round participants in the simple sessions is higher than their first-round counterparts. This is evidence of information spillover.

\begin{table}[!htbp] \centering 
\caption{Logit Regression Evidence of Information spillover} 
 \label{logit regression}  
\begin{tabular}{@{\extracolsep{5pt}}lcc} 
\\[-1.8ex]\hline 
\hline \\[-1.8ex] 
 & \multicolumn{2}{c}{\textit{Dependent variable:}} \\ 
\cline{2-3} 
\\[-1.8ex] & \multicolumn{2}{c}{Insurance takeup} \\ 
\\[-1.8ex] & (Without village fixed effects)  & (With village fixed effects)\\ 
\hline \\[-1.8ex] 
 Constant & $-$0.609$^{***}$ & 0.693 \\ 
  & (0.064) & (0.875) \\ 
  & & \\ 
 Second-round & 0.423$^{***}$ & 0.474$^{***}$ \\ 
  & (0.084) & (0.089) \\ 
  & & \\ 
\hline \\[-1.8ex] 
Observations & 2,453 & 2,453 \\ 
Log Likelihood & $-$1,646.452 & $-$1,546.740 \\ 
Akaike Inf. Crit. & 3,296.904 & 3,189.480 \\ 
\hline 
\hline \\[-1.8ex] 
\textit{Note:}  & \multicolumn{2}{r}{$^{*}$p$<$0.1; $^{**}$p$<$0.05; $^{***}$p$<$0.01} \\ 
 \hline
\end{tabular} 
\end{table}

Furthermore, the analyses in \cite{cai2015social} rule out any form of between-administrative village diffusion of information, which makes the data fit into the clustered network setting of this paper.
Figure \ref{Network plot} from \cite{bargagli2020heterogeneous} shows a network plot of the data for the 47 administrative villages.  
\begin{figure}[H]
    \centering
    \includegraphics[scale=0.6]{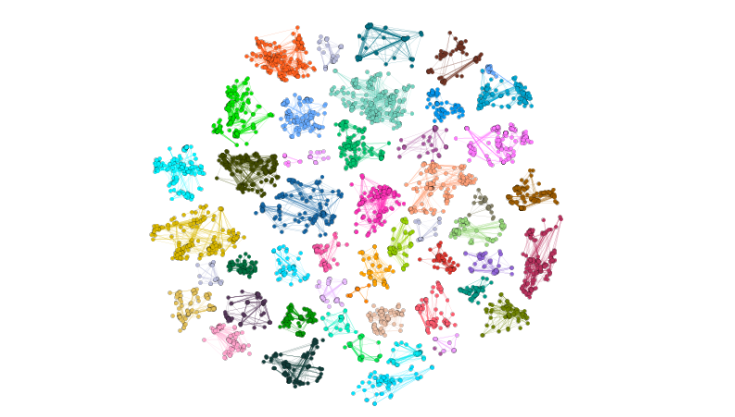}
    \caption{Links between households in the  47 administrative villages. Different Villages have different colors. }
    \label{Network plot}
\end{figure}
In addition, \cite{cai2015social} show that while farmers are inﬂuenced by their friends who attended the first round of intensive sessions, they are unaffected by friends who attended the first round of simple sessions. They also show that people are less inﬂuenced by their friends when they have the same education about the insurance products (e.g., information from first-round households in the intensive session does not affect the purchasing decision of second-round households in the intensive session). These observations are crucial for defining the post-treatment exposure variable in this application.

To illustrate the use of the proposed test statistics, I focus on the second-round participants in each administrative village. I define the post-treatment exposure variable as the fraction of households who attended the first-round intensive session in an administrative village. I also focus on one pre-treatment variable: the fraction of household income from rice production.\footnote{The fraction of household income from rice production should significantly influence a household's decision to purchase insurance. Households with a small fraction of income from rice production are unlikely to buy insurance, those with a moderate fraction are more likely to purchase it, and households with a large fraction are also unlikely to buy --- because they may already have an insurance policy --- \textit{ceteris paribus}.} A household is considered "treated" if it attended an intensive session, while those that participated in the simple session are labeled "untreated." The outcome of interest is the second-round participants' adoption of the insurance policy. 

There are 2653 households in my restricted sample (second-round participants with no missing values for the four variables). To ensure balance in the post-treatment exposure variable (treatment ratio), I binarize the treatment ratio: households from villages with the fraction of first-round intensive session participants less than 0.22 constitute the first group, and those from villages with the fraction of first-round intensive session participants greater than 0.22 make up the second group. There are 1488 and 1239 households in the two groups, respectively.\footnote{Note that the threshold for the binarization is purely based on sample size considerations. I chose the threshold so the two groups will have sufficient units.} Thus, technically, the exposure variable is a threshold function of the treatment ratio.

Using this sample, I test the two null hypotheses. Figure \ref{CATE plot} shows the estimated average treatment effects against the fraction of income from rice production for the two exposure groups. Visually, the estimated ATEs seem to vary with the exposure variable and the fraction of household income from rice production. However, are these variations statistically significant? 
Using the kernel function defined in \eqref{kernel} and the bandwidth formula in \eqref{bandwidth}, the results in Table \ref{CATE:1} show that the proposed asymptotic-based tests fail to reject the null hypotheses of CTEs by the treatment ratio variable and the fraction of household income from rice production at different bandwidths and conventional significance levels. Moreover, based on Holm's MTP, I jointly fail to reject the two null hypotheses at the different bandwidths and conventional significance levels.
\begin{figure}[H]
\caption{Estimated conditional average treatment effects using the full sample}
    \centering
    \includegraphics[scale=0.47]{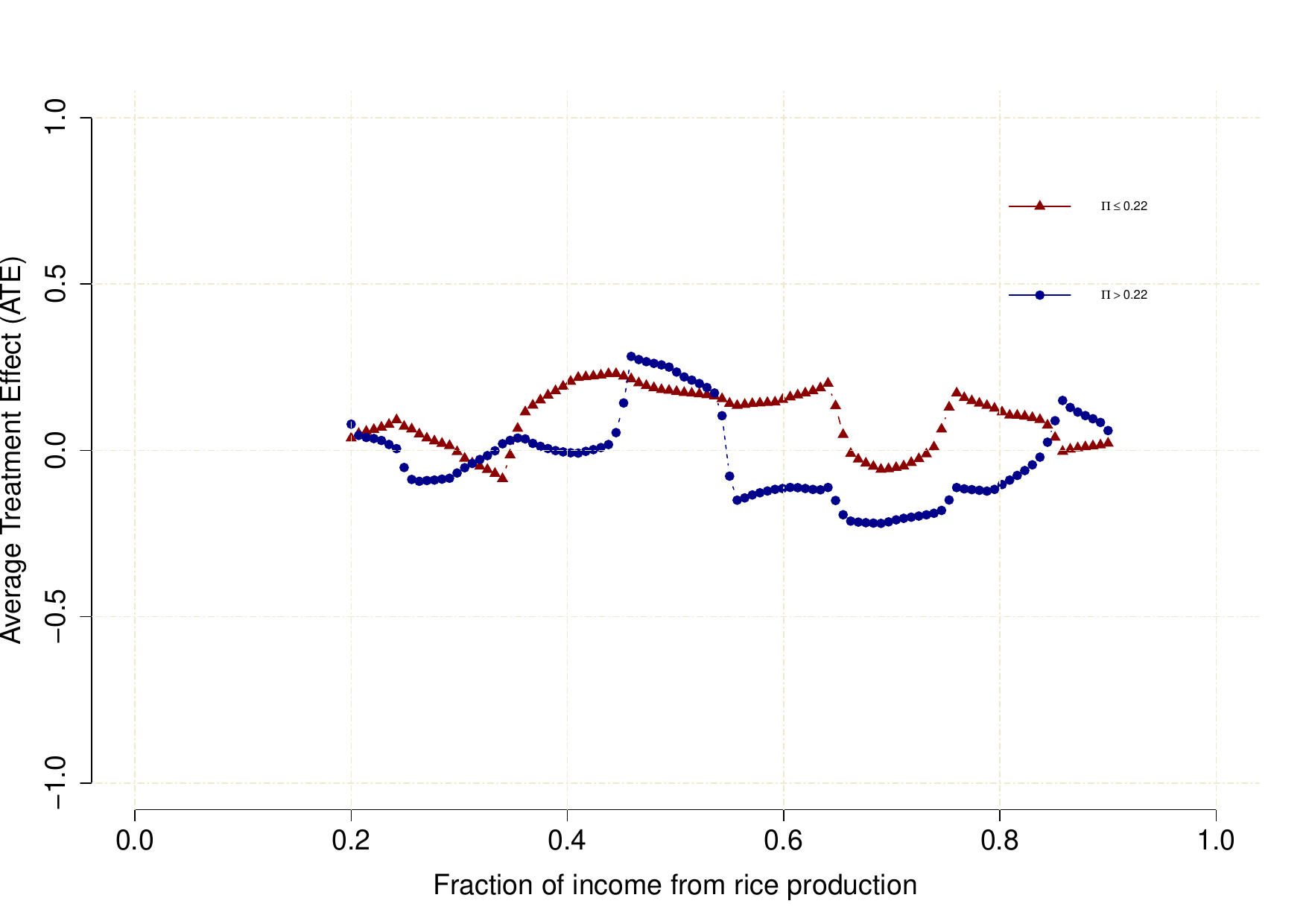}
\label{CATE plot}
\end{figure}

\begin{table}[H]
\caption{Nonparametric test based on full sample}
\label{CATE:1}
\centering
 \begin{adjustbox}{width=\textwidth}
  \begin{tabular}{lSSSSSS}
    \toprule
    \multirow{2}{*}{Bandwidth (h)} &
      \multicolumn{2}{c}{\underline{$H^{\Pi}_{0}$: CTEs across treatment ratios} } &
      \multicolumn{2}{c}{\underline{$H^{X}_{0}$: CTEs across fraction of rice income}} \\
      & {$\hat{S}_1$} & {$p$-value} & {$\hat{S}_2$} & {$p$-value}  \\
      \midrule
     0.096 & 1.076 & 0.141 & 0.143 & 0.443 \\
    0.106 & 1.117 & 0.132 &  0.144 & 0.443  \\
    0.116 & 1.192 & 0.117 &  0.363 & 0.358 \\
    \bottomrule
  \end{tabular}
    \end{adjustbox}
\end{table}

Based on Figure \ref{CATE plot}, there is a wider gap between the ATEs for the two categories of treatment ratio when the fraction of rice production lies between 0.5 and 0.8. Thus, to check the robustness of the proposed testing procedures, I restrict the sample to these households and re-implement the tests. Figure  \ref{CATE plot2} shows a plot of the estimated ATEs against the fraction of income from rice production using the new sample. The test results in Table \ref{CATE:2} show that ATE varies with the treatment ratio (at the conventional significance levels)  but is constant across the fraction of income from rice production as expected.
\begin{figure}[H]
\caption{Estimated conditional average treatment effects  using the subsample}
    \centering
    \includegraphics[scale=0.5]{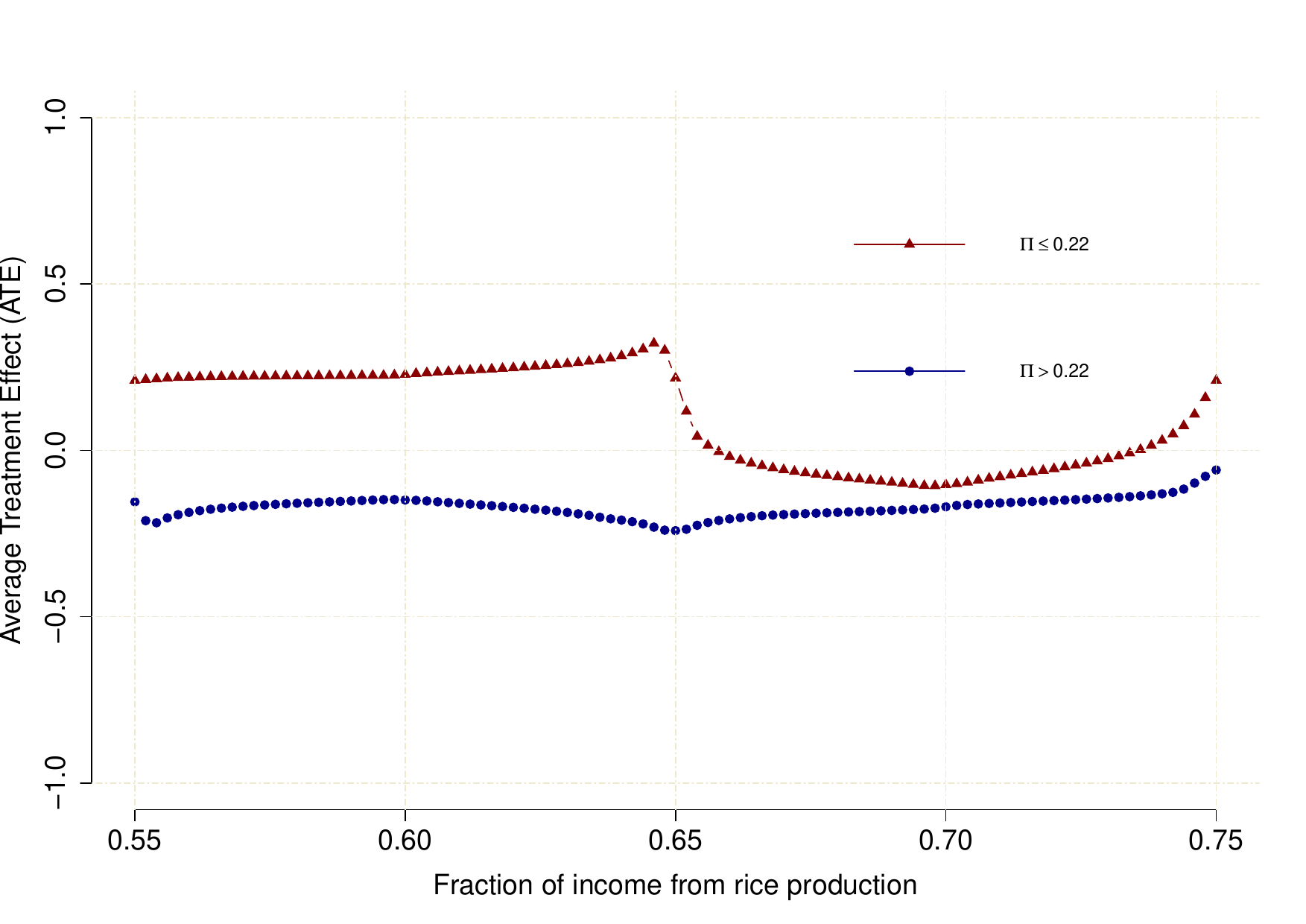}
\label{CATE plot2}
\end{figure}
\begin{table}[H]
\caption{nonparametric test of subsample}
\label{CATE:2}
\centering
\begin{adjustbox}{width=\textwidth}
  \begin{tabular}{lSSSSSS}
    \toprule
    \multirow{2}{*}{Bandwidth (h)} &
     \multicolumn{2}{c}{\underline{$H^{\Pi}_{0}$: CTEs across treatment ratios} } &
      \multicolumn{2}{c}{\underline{$H^{X}_{0}$: CTEs across fraction of rice income}} \\
      & {$\hat{S}_1$} & {$p$-value} & {$\hat{S}_2$} & {$p$-value}  \\
      \midrule
    0.099 & 3.288 & <0.01 & -0.086 & 0.534 \\
    0.102 & 3.390 & <0.01 & -0.081& 0.532  \\
    0.106 & 3.425 & <0.01 & -0.079 & 0.531 \\
    \bottomrule
  \end{tabular}
  \end{adjustbox}
\end{table}

\section{Conclusion}\label{Conclusion}
The nonparametric tests I develop in this paper not only allow for valid  inference for heterogeneous treatment effects in the presence of clustered interference but also play a crucial role in disentangling the source of variation in the treatment effects. This unique feature sets the proposed tests apart from existing procedures.
The test statistics are sums of weighted $L_1$-norm pairwise differences in consistent nonparametric kernel estimators of conditional average treatment effects.
Applying the poissonization technique in \cite{gine2003bm}, I show that the test statistics are asymptotically normally distributed and have correct sizes in large samples. Moreover, they are consistent under fixed alternatives and have nonzero power against local alternatives, drifting to the null. I provide Monte Carlo results that corroborate the theoretical findings. On the applied side, I illustrate the usage of the proposed tests with the data from \cite{cai2015social}. I find no evidence of heterogeneous treatment effects across the values of both a post-treatment exposure variable and a pre-treatment variable: the fraction of household income from rice production.

There is room for several extensions. It will be interesting to provide theoretical guarantees for the proposed bootstrap methods in the paper. For instance, Table \ref{comparison-bootasym} in Appendix \ref{appendsim}  provides simulation evidence that suggests that the bootstrap algorithm of $\hat{S}_1$ achieves asymptotic refinement. It will be insightful to theoretically investigate if the proposed bootstrap algorithms achieve asymptotic refinement over asymptotic-based tests. Since it is never trivial to show bootstrap refinements even in more standard tests, I defer such a study to future research. Secondly, it is valuable to provide HTEs testing procedures in networks that are not clustered (single un-clustered networks). \cite{owusu2023randomization} studies randomization tests for HTEs in single un-clustered networks. The procedures in this companion paper are, however, only applicable to experimental data.

\section{Acknowledgments}
This is a revised version of the first Chapter of my Ph.D. Dissertation at McMaster University, \cite{owusu2023treatment}.
It has benefited from feedback of Youngki Shin, Jeffrey S. Racine, Micheal Veall, Saraswata Chaudhuri, Sukjin Han, Emmanuel S. Tsyawo, Monika A. M\'arquez, Antoine A. Djogbenou, and participants at the 32nd Annual Meeting of Midwest Econometrics Group  and the 56th Annual Conference of the Canadian Economics Association. This work was made possible by the advanced research computing platform provided by the Digital Research Alliance of Canada (formerly Compute Canada). All errors are my own.

\bibliographystyle{chicago}
\bibliography{reference}

\section{Appendix}
\begin{alphasection}
\numberwithin{equation}{section}

\section{Simulation Results}\label{appendsim}
\subsection{Asymptotic-based Inference}
\begin{table}[H]
\caption{Empirical Rejection Probabilities with $n_c=10,$ $C=150,$ $h=\kappa_h\hat{s}_XC^{-2/7}$ and $\Pi =(0.3,0.4,0.5, 0.6).$}
\label{power table:1}
\begin{adjustbox}{width=\textwidth}
\begin{tabular}{ |p{3cm} p{3cm} p{3cm} p{3cm} p{3cm}|  }
 \hline
  \cline{3-5}
 &\multicolumn{4}{c|}{Nominal probabilities} \\
  \cline{3-5}

 \multicolumn{1}{|c}{Test statistic}&$\beta_1$ & 0.01 &0.05&0.10\\
 \hline
$\hat{S}_{1}$& 0.50 & 1.000 &1.000 &1.000\\
&0.45&1.000  &1.000   &1.000\\
&0.40& 0.998 & 1.000&1.000\\
&0.35&0.979& 0.998&0.998\\
&0.30& 0.891 & 0.964&  0.982\\
&0.25& 0.675 &0.840 &  0.910\\
&0.20&0.356 & 0.603&  0.723\\
&0.15&0.141 & 0.315&  0.452\\
&0.10&0.050& 0.140&  0.227\\
&0.05&0.015 & 0.066&  0.110\\
\hline
&0.00   &0.010   &0.044&   0.091\\
\hline
&-0.05 &0.014 & 0.063  &0.121\\
&-0.10 &0.050  &0.147   &0.226\\
&-0.15 &0.153 & 0.337  &0.443\\
&-0.20 &0.366 & 0.598 &0.741\\
&-0.25 &0.681 &0.868   &0.921  \\
&-0.30 &0.901 &0.959   &0.980  \\
&-0.35&0.980 &0.996   &0.999\\
&-0.40&0.998 &0.999  &1.000\\
&-0.45&1.000 &1.000  &1.000\\
&-0.50&1.000 &1.000  &1.000\\
 \hline
\end{tabular}
\end{adjustbox}
\end{table} 

\begin{table}[H]
\caption{Empirical Rejection Probabilities: $n_c=10, C=150,$ $h= C_h\hat{s}_XN^{-2/7}$ and $\Pi =(0.3,0.4,0.5, 0.6).$ All covariances accounted for in the variance estimators.}
\label{power table:2}
\begin{adjustbox}{width=\textwidth}
\begin{tabular}{ |p{3cm} p{3cm} p{3cm} p{3cm} p{3cm}|  }
 \hline
  \cline{3-5}
 &\multicolumn{4}{c|}{Nominal probabilities} \\
  \cline{3-5}
 \multicolumn{1}{|c}{Test statistic}&$\beta_0$ & 0.01 &0.05&0.10\\
 \hline
$\hat{S}_{2}$&0.50& 1.000  &1.000 &1.000\\
&0.45&1.000   &1.000   &1.000 \\
&0.40&1.000   &1.000   &1.000 \\
&0.35& 1.000 & 1.000 &1.000 \\
&0.30&1.000  &1.000   &1.000\\
&0.25&0.995& 0.998&0.998\\
&0.20&0.996  &1.000  &1.000\\
&0.15 & 0.818 & 0.950&  0.983\\
&0.10&0.244  &0.527   &0.695\\
&0.05&0.018  &0.119  &0.215\\
\hline
&0.00   &0.005  &0.036&   0.088\\
\hline
&-0.05&0.028  &0.124   &0.241\\
&-0.10 &0.279& 0.538  &0.713\\
&-0.15&0.828  &0.953   &0.984\\
&-0.20 &0.996  &1.000  &1.000\\
&-0.25&1.000  &1.000   &1.000\\
&-0.30 &1.000 & 1.000  &1.000\\
&-0.35&1.000  &1.000  &1.000\\
&-0.40 &1.000 & 1.000  &1.000\\
&-0.45&1.000  &1.000  &1.000\\
&-0.50 &1.000 &1.000   &1.000  \\
 \hline
\end{tabular}
\end{adjustbox}
\end{table}

\begin{table}[H]
\caption{Empirical Rejection Probabilities: $n_c=10, C=150,$ $h= C_h\hat{s}_XN^{-2/7}$ and $\Pi =(0.3,0.4,0.5, 0.6).$ No covariances accounted for in the variance estimators.}
\label{power table:2b}
\begin{adjustbox}{width=\textwidth}
\begin{tabular}{ |p{3cm} p{3cm} p{3cm} p{3cm} p{3cm}|  }
 \hline
  \cline{3-5}
 &\multicolumn{4}{c|}{Nominal probabilities} \\
  \cline{3-5}
 \multicolumn{1}{|c}{Test statistic}&$\beta_0$ & 0.01 &0.05&0.10\\
 \hline
$\hat{S}_{2}$&0.50& 1.000  &1.000 &1.000\\
&0.45&1.000   &1.000   &1.000 \\
&0.40&1.000   &1.000   &1.000 \\
&0.35& 1.000 & 1.000 &1.000 \\
&0.30&1.000  &1.000   &1.000\\
&0.25&0.995& 0.998&0.998\\
&0.20&0.998  &1.000  &1.000\\
&0.15 & 0.827 & 0.926&  0.999\\
&0.10&0.234  &0.434   &0.533\\
&0.05&0.015  &0.059  &0.102\\
\hline
&0.00   &0.004  &0.011&   0.024\\
\hline
&-0.05&0.021  &0.063   &0.111\\
&-0.10 &0.265& 0.448 &0.547\\
&-0.15&0.841  &0.928   &0.962\\
&-0.20 &0.998  &1.000  &1.000\\
&-0.25&1.000  &1.000   &1.000\\
&-0.30 &1.000 & 1.000  &1.000\\
&-0.35&1.000  &1.000  &1.000\\
&-0.40 &1.000 & 1.000  &1.000\\
&-0.45&1.000  &1.000  &1.000\\
&-0.50 &1.000 &1.000   &1.000  \\
 \hline
\end{tabular}
\end{adjustbox}
\end{table}

\subsection{Bootstrap-based Inference}
\begin{table}[H]
\caption{Empirical Rejection Probabilities with $n_c=10,$ $C=150,$ $h=\kappa_h\hat{s}_XC^{-2/7}$ and $\Pi =(0.3,0.4,0.5, 0.6).$}
\label{power table boot:1}
\begin{adjustbox}{width=\textwidth}
\begin{tabular}{ |p{3cm} p{3cm} p{3cm} p{3cm} p{3cm}|  }
 \hline
  \cline{3-5}
 &\multicolumn{4}{c|}{Nominal probabilities} \\
  \cline{3-5}

 \multicolumn{1}{|c}{Test statistic}&$\beta_1$ & 0.01 &0.05&0.10\\
 \hline
$\hat{S}_{1}$& 0.50 & 1.000 &1.000 &1.000\\
&0.45&1.000  &1.000   &1.000\\
&0.40& 0.997 & 0.999&1.000\\
&0.35&0.970& 0.997&0.999\\
&0.30& 0.896 & 0.967&  0.986\\
&0.25& 0.668 &0.865 &  0.923\\
&0.20&0.396 & 0.649&  0.744\\
&0.15&0.162 & 0.341&  0.472\\
&0.10&0.051& 0.163&  0.256\\
&0.05&0.014 & 0.065&  0.134\\
\hline
&0.00   &0.007   &0.046  & 0.098\\
\hline
&-0.05 &0.012 & 0.076  &0.152\\
&-0.10 &0.051  &0.165   &0.270\\
&-0.15 &0.179 & 0.355  &0.500\\
&-0.20 &0.405 & 0.657 &0.760\\
&-0.25 &0.691 &0.877   &0.934  \\
&-0.30 &0.903 &0.969   &0.985  \\
&-0.35&0.976 &0.999   &1.000\\
&-0.40&0.998 &1.000  &1.000\\
&-0.45&1.000 &1.000  &1.000\\
&-0.50&1.000 &1.000  &1.000\\
 \hline
\end{tabular}
\end{adjustbox}
\end{table} 

\begin{table}[H]
\caption{Empirical Rejection Probabilities with $n_c=10,$ $C=150,$ $h=\kappa_h\hat{s}_XC^{-2/7}$ and $\Pi =(0.3,0.4,0.5, 0.6).$}
\label{power table boot:2}
\begin{adjustbox}{width=\textwidth}
\begin{tabular}{ |p{3cm} p{3cm} p{3cm} p{3cm} p{3cm}|  }
 \hline
  \cline{3-5}
 &\multicolumn{4}{c|}{Nominal probabilities} \\
  \cline{3-5}

 \multicolumn{1}{|c}{Test statistic}&$\beta_0$ & 0.01 &0.05&0.10\\
 \hline
$\hat{S}_{2}$& 0.50 & 1.000 &1.000 &1.000\\
&0.45&1.000  &1.000   &1.000\\
&0.40&1.000  &1.000   &1.000\\
&0.35&1.000& 1.000&1.000\\
&0.30& 0.999 &1.000&  1.000\\
&0.25& 0.997 &1.000 & 1.000\\
&0.20&0.935 & 0.985&  0.992\\
&0.15&0.602 & 0.820&  0.894\\
&0.10&0.191& 0.414&  0.562\\
&0.05&0.029 & 0.117&  0.197\\
\hline
&0.00 &0.012   &0.050   &0.101 \\
\hline
&-0.05 &0.030 & 0.114  &0.198\\
&-0.10 &0.196  &0.430   &0.571\\
&-0.15 &0.628 & 0.850  &0.905\\
&-0.20 &0.997 & 1.000  &1.000\\
&-0.25 &0.997 & 1.000  &1.000  \\
&-0.30 &1.000 & 1.000  &1.000  \\
&-0.35&1.000 &1.000   &1.000\\
&-0.40&1.000 &1.000  &1.000\\
&-0.45&1.000 &1.000  &1.000\\
&-0.50&1.000 &1.000  &1.000\\
 \hline
\end{tabular}
\end{adjustbox}
\end{table} 

\subsection{ Bootstrap versus Asymptotic in Small Samples} 
 In Table \ref{comparison-bootasym}, I compare the empirical sizes of the bootstrap-based test statistics with their asymptotic counterparts when the sample size is small, precisely when $C=50$ and $N_c=10).$ The result shows that the  empirical sizes computed using the bootstrapping algorithms and their asymptotic counterparts are close in general.

\begin{table}[H]
\caption{
Comparison of empirical size for the bootstrap and asymptotic-based testing approach  when sample size is small: $n_c=10, C=50,$ $h=\kappa_h\cdot\hat{s}_XC^{-2/7},$ and $\Pi=(0.3,0.4,0.5,0.6).$}
\label{comparison-bootasym}
\centering
\begin{adjustbox}{width=\textwidth}
  \begin{tabular}{lSSSSSS}
    \toprule
    \multirow{2}{*}{Nominal probabilities} &
 \multicolumn{2}{c}{\underline{Test statistic for $H^{\Pi}_{0}$}} &
      \multicolumn{2}{c}{\underline{Test statistic for $H^{X}_{0}$}} \\\\
      & {Bootstrap-based} & {Asymptotic-based} & {Bootstrap-based} & {Asymptotic-based}  \\
      \midrule
    0.01 & 0.012 & 0.029 & 0.017 & 0.013 \\
    0.05 & 0.052 & 0.094 & 0.055& 0.050 \\
    0.10 & 0.106  & 0.158  & 0.106 & 0.128 \\
 
    \bottomrule
  \end{tabular}
  \end{adjustbox}
\end{table}

\subsection{Extension of the Monte Carlo Simulation Experiment to Multivariate Covariates} \label{multivarX}
 I extend the experiment in Section \ref{Monte Carlo Simulation} to multivariate pre-treatment variables. For both test statistics, each pre-treatment variable $X_d, $ $d>1$ is drawn independently from the standard uniform distribution that allows for within-cluster dependence. Each cluster is assigned one of two treatment vectors $\mathbf{T}(k).$ Here,the average of $\mathbf{T}(k)$ is either 0.3 or  0.4. I use the Monte Carlo integration technique to compute all integrals. The general  functional form of the CATE is
 \begin{equation} \label{gen-cate}
     \tau(\mathbf{x}, \pi) = \beta_0 \sum_{l=1}^d x_l + \beta_1 \pi. 
 \end{equation}
I restrict attention to the case where $d=2$ and defer cases with $d>2$ to a companion paper (in progress), which provides an \texttt{R} package of the testing procedures in this paper. This \texttt{R} package will be available for download and use. I keep the remaining design as in Section \ref{size and power sim}.

Focusing on $\hat{S}_1$, fix  $\beta_0 =1$ and $\beta_1 =0$  ( i.e., the null hypothesis of CTEs by $\Pi$ is true); and $\beta_1=0.5,$ (which implies that the null hypothesis of CTEs by $\Pi$ is false). Due to the curse of dimensionality associated
with kernel estimation, one should expect a poor performance of the tests when the dimension of continuous variables increases.  Table \ref{multix 1} reports the rejection probabilities of  $\hat{S}_1$ under the two  $\beta_2$ specifications, which give us the empirical size and power, respectively.
\begin{table}[H] 
\caption{Empirical size and power of $\hat{S}_1$ using multivariate $X.$ $C=200, N_c=10,$  bandwidth=$5\cdot\hat{s}_XN^{-2/7}$ and $\Pi=(0.3,0.4).$}
\label{multix 1}
\centering
  \begin{tabular}{lSSSS}
    \toprule
    \multirow{2}{*}{Nominal probabilities} &
 \multicolumn{2}{c}{\underline{$d=2$}} &\\
      & {Size } & {Power }\\ 
      \midrule
    0.01 & 0.013 & 0.292\\  
    0.05 & 0.053 & 0.444 \\  
    0.10 & 0.086& 0.533 \\    
    \bottomrule
  \end{tabular}
\end{table}

Next,  I turn  attention to $\hat{S}_2$. Using the CATE in \eqref{gen-cate}, fix  $\beta_0 =0$ and $\beta_1 =1$ ( i.e., the null hypothesis of CTEs by $X$ is true); and $\beta_0=0.5,$ (which implies that the null hypothesis of CTEs by $X$ is false). In Table \ref{multix 2}, I report the rejection probabilities of  $\hat{S}_2$ under the two specifications.

\begin{table}[H]
\caption{Empirical size and power of $\hat{S}_2$ using multivariate $X.$ $C=200,N_c=10,$ bandwidth=$5\cdot\hat{s}_XN^{-2/7}$ and $\Pi=(0.3,0.4).$}
\label{multix 2}
\centering
  \begin{tabular}{lSSSS}
    \toprule
    \multirow{2}{*}{Nominal probabilities} &
 \multicolumn{2}{c}{\underline{$d=2$}} \\
     & {Size } & {Power }  \\
      \midrule
    0.01 & 0.040 & 1.000\\
    0.05 & 0.097 & 1.000 \\
    0.10 & 0.018 & 1.000\\
    \bottomrule
  \end{tabular}
\end{table}
The empirical size and power calculations in Tables \ref{multix 1} and \ref{multix 2} numerically show that the proposed test statistics have non-zero power and are valid for a multivariate $X \in \mathcal{X}^2.$ The empirical size and power calculations in Tables \ref{multix 1} and \ref{multix 2} numerically show that the proposed test statistics have non-zero power and are valid for a multivariate $X \in \mathcal{X}^2.$ However, as highlighted in the Remark \ref{COD}, due to the curse of dimensionality, we observe some size distortion and loss in power.

\section{Proof of Main Results}
\subsection{Proof of Proposition \ref{identification}}
\noindent \textbf{Proposition 1.} 
\textit{ Suppose Assumptions \ref{Treatment-invariant neighborhoods}-\ref{Overlap} hold, then
\begin{align*}
  \tau(x;\pi_k )\coloneq& \mathbbm{E} [Y(1, \pi_k )|X=x] -\mathbbm{E} [Y(0, \pi_k )|X=x] \nonumber\\
  =& \mathbbm{E} [Y|T=1, \Pi=\pi_k, X=x] -\mathbbm{E} [Y|T=0, \Pi=\pi_k,X=x] \,\, \forall k=1,\dots,K \,\,\text{and}\,\, x \in \mathcal{X}.
  \end{align*}}
\begin{proof}
    Under Assumptions \ref{Treatment-invariant neighborhoods}-\ref{Discrete Network Exposure} and \ref{Unconfoundedness}(i) realized outcomes can be written in terms of potential outcomes as:
 \begin{equation}
     Y=\sum_{k=1}^K \Bigg( Y(0,\pi_k )  +  [ Y(1,\pi_k )-Y(0,\pi_k )] \cdot T\Bigg)\cdot \mathbbm{1}(\Pi=\pi_k ) .
 \end{equation}
 Hence, by Assumption \ref{Overlap}, $\mathbbm{E} [Y|T=t, \Pi=\pi_k, X=x]$ exists for all $t=0,1$ $k=1,\dots, K$ and $x \in \mathcal{X},$ with
 \begin{align}
   \mathbbm{E} [Y|T=1, \Pi=\pi_k, X=x]=&\mathbbm{E}\left[\sum_{k=1}^K \Bigg( Y(0,\pi_k )  +  [ Y(1,\pi_k )-Y(0,\pi_k )] \cdot T\Bigg)\cdot \mathbbm{1}(\Pi=\pi_k )\Big|T=1, \Pi=\pi_k, X=x\right] \nonumber\\
   =& \mathbbm{E}\left[  Y(1,\pi_k )|T=1, \Pi=\pi_k, X=x\right] \nonumber\\
 =& \mathbbm{E}\left[  Y(1,\pi_k )| X=x\right] \,\,\,\,\,\,\,\,\,\,\,\,\,\,\,\,\,\,\,\,\,\,\,\,\,\,\,[\text{by Assumption \ref{Unconfoundedness}(ii)}] \label{mean 1}
 \end{align}
and 
 \begin{align}
   \mathbbm{E} [Y|T=0, \Pi=\pi_k, X=x]=&\mathbbm{E}\left[\sum_{k=1}^K \Bigg( Y(0,\pi_k )  +  [ Y(1,\pi_k )-Y(0,\pi_k )] \cdot T\Bigg)\cdot \mathbbm{1}(\Pi=\pi_k )\Big|T=0, \Pi=\pi_k, X=x\right] \nonumber\\
   =& \mathbbm{E}\left[  Y(0,\pi_k )|T=0, \Pi=\pi_k, X=x\right] \nonumber\\
 =& \mathbbm{E}\left[  Y(0,\pi_k )| X=x\right] \,\,\,\,\,\,\,\,\,\,\,\,\,\,\,\,\,\,\,\,\,\,\,\,\,\,\,[\text{by Assumption \ref{Unconfoundedness}(ii)}] \label{mean 2}
 \end{align}
Hence, from \eqref{mean 1} and \eqref{mean 2}, we have 
$$ \mathbbm{E} [Y|T=1, \Pi=\pi_k, X=x] -\mathbbm{E} [Y|T=0, \Pi=\pi_k,X=x]=\mathbbm{E} [Y(1, \pi_k )|X=x] -\mathbbm{E} [Y(0, \pi_k )|X=x]$$ as required.

\end{proof}

\subsection{Asymptotic Variance and Bias Derivations} \label{bias and var}
 \begin{lemma}\label{Gine}
  Suppose $\mathcal{H}$ is a finite class of uniformly bounded real-valued functions $H$, equal to zero outside a known compact set. Further, let $g(x)f(x)$ be  continuously differentiable in $x$ with $\sup_{x\in B}\left|\frac{d(g( x)f(x))}{dx}\right|<\infty$ where $B \subset \mathbbm{R}^d$ is a compact set. Then uniformly in $H \in \mathcal{H}$
  \begin{align}
      \sup_{x\in B }\left|\frac{1}{h^d}\int_{\mathbbm{R}^d} g(z)f(z)H\left(\frac{x-z}{h}\right)dz-g(x)f(x)\int_{\mathbbm{R}^d}H\left(\frac{x-z}{h}\right)dz \right| \to 0 \,\,\text{as}\, h\to 0.
  \end{align}
 \end{lemma}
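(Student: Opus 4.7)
The plan is to prove a uniform Bochner-type lemma. I first note that as written the statement appears to contain a typographical slip: after the change of variable $u=(x-z)/h$, $\int_{\mathbbm{R}^d} H((x-z)/h)\,dz = h^d\int H(u)\,du$, which tends to $0$, whereas the first term converges to $g(x)f(x)\int H(u)\,du$. The intended statement almost surely carries the factor $1/h^d$ on both integrals (this is the standard uniform Bochner lemma used in kernel estimation), and I will prove that version.

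First, I would apply the change of variable $z=x-hu$, $dz=h^d\,du$, to rewrite the quantity inside the supremum as
\[
\left|\int_{\mathbbm{R}^d}\bigl[g(x-hu)f(x-hu)-g(x)f(x)\bigr]\,H(u)\,du\right|.
\]
Because $\mathcal{H}$ is finite and each $H\in\mathcal{H}$ has compact support, there is a common radius $R<\infty$ with $\mathrm{supp}(H)\subset\{|u|\le R\}$ for every $H\in\mathcal{H}$, so the $u$-integration is restricted to a fixed compact set. Also, there is a common bound $\|H\|_\infty\le M<\infty$.

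Next, since $B$ is compact and $gf$ is continuously differentiable with $\sup_{y\in B}|\frac{d(gf)}{dy}(y)|<\infty$, I can fix $h_0>0$ small enough that the $Rh_0$-neighborhood $B^\ast$ of $B$ is contained in an open set on which $gf$ remains continuously differentiable with a finite derivative bound $L:=\sup_{y\in B^\ast}|\frac{d(gf)}{dy}(y)|$ (extending slightly if needed, using the hypothesis). For $h\le h_0$, $x\in B$ and $|u|\le R$ the segment from $x$ to $x-hu$ lies in $B^\ast$, so the fundamental theorem of calculus gives
\[
|g(x-hu)f(x-hu)-g(x)f(x)|\le L\,h\,|u|\le L\,h\,R.
\]
Combining this with $|H(u)|\le M\,\mathbbm{1}(|u|\le R)$ and the volume bound $|\{|u|\le R\}|=:V_R<\infty$ yields the estimate
\[
\sup_{x\in B}\sup_{H\in\mathcal{H}}\left|\int_{\mathbbm{R}^d}\bigl[g(x-hu)f(x-hu)-g(x)f(x)\bigr]H(u)\,du\right|\le L\,M\,R\,V_R\,h,
\]
which is $O(h)$ and vanishes as $h\to 0$, giving the claim uniformly in $x\in B$ and $H\in\mathcal{H}$.

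This proof is essentially routine; the only subtlety is ensuring the derivative bound extends to a small neighborhood of $B$ so that $x-hu$ can be evaluated for small $h$, and verifying that the uniformity over $H\in\mathcal{H}$ is automatic because the class is finite (both the support radius $R$ and the sup-norm bound $M$ can be taken as maxima over the finite class). Thus I do not expect a genuine obstacle; the content of the lemma is the quantitative rate $O(h)$, which follows immediately from the $C^1$ hypothesis on $gf$ combined with compactness of the supports.
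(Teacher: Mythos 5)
Your proof is correct: you rightly spotted that the displayed statement misplaces the $1/h^d$ factor (as literally written the two terms have different limits, so the version with $1/h^d$ applied to both integrals is the intended one), and your change-of-variables plus mean-value argument --- with the gradient bound extended to a compact $Rh_0$-neighborhood of $B$ and the finiteness of $\mathcal{H}$ supplying uniform support-radius and sup-norm bounds --- establishes it, in fact with the quantitative rate $O(h)$. The paper itself gives no proof, deferring to Lemma 6.1 of Gin\'e, Mason and Zaitsev (2003), whose argument is essentially this same standard Bochner-type computation, so your route matches the paper's.
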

 \begin{proof}
     This lemma is similar to Lemma 6.1 in \cite{gine2003bm} and Lemma B.9 in \cite{chang2015nonparametric}. The proof given in \cite{gine2003bm}.
\end{proof}
\subsubsection{Test statistics $\hat{T}_1$}
Define
\begin{align*}
  \hat{\Gamma}(x;\pi_k,\pi_j) \coloneqq&\hat{\tau}(x;\pi_k)-\hat{\tau}(x;\pi_j)\\ =&\frac{1}{Nh^d}\sum_{i=1}^{N} Y_i\left[ \mathbbm{1}(\Pi_i=\pi_k)\hat{\phi}(T_i,x,\pi_k) - \mathbbm{1}(\Pi_i=\pi_j)\hat{\phi}(T_i,x,\pi_j)\right]K\left(\frac{x-X_i}{h}\right).
\end{align*}
Under the null hypothesis, the bias of $\hat{T}_1$ is define as
\begin{align*}
   \mathrm{Bias}(\hat{T}_1):=&\mathbbm{E}[\hat{T}_1]\\
   =&\mathbbm{E}\left[\int_{\mathcal{X}} \sum_{k =1}^K \sum_{j>k}^K \left\{\sqrt{N}| \hat{\Gamma}(x;\pi_k,\pi_j)|\right\}\hat w(x, \pi_k, \pi_j)dx\right]\\
   =&\int_{\mathcal{X}} \sum_{k =1}^K \sum_{j>k}^K\mathbbm{E}\left[ \sqrt{N}|\hat{\Gamma}(x;\pi_k,\pi_j)|\right] \hat w(x, \pi_k, \pi_j)dx\\
   =&\frac{1}{\sqrt{h^d}}\int_{\mathcal{X}}  \sum_{k =1}^K \sum_{j>k}^K \left\{\mathbbm{E}\left|(\hat{\Gamma}(x;\pi_k,\pi_j))\sqrt{Nh^d}\hat w(x, \pi_k, \pi_j) \right|\right\}dx\\
   \to&\frac{1}{\sqrt{h^d}}\int_{\mathcal{X}}  \sum_{k =1}^K \sum_{j>k}^K \left\{\mathbbm{E}\left|\frac{(\hat{\Gamma}(x;\pi_k,\pi_j))}{\sqrt{\mathrm{Var}(\hat{\Gamma}(x;\pi_k,\pi_j))}}\right|\right\} dx \,\,\,\,\,\,\,\,\text{[by Assumption \ref{regularity conditions}(h)]}\\
    =& \frac{1}{\sqrt{h^d}}\int_{\mathcal{X}}   \sum_{k =1}^K \sum_{j>k}^K \mathbbm{E}\left|\mathbbm{Z}_1\right|dx\\
   =& h^{\frac{-d}{2}}\cdot\mathbbm{E}|\mathbbm{Z}_1|\cdot \frac{K(K-1)}{2}\cdot \int_{\mathcal{X}} dx \\
   =& a_1.
\end{align*}
On the other hand, the variance of $\hat{T}_1$  under the null is               
\begin{align*}
Var(\hat T_1)\coloneqq& \int_1 \mathrm{Cov}(|\sqrt{N}\hat{\Gamma}(x;\pi_i,\pi_j)\hat w(x,\pi_i, \pi_j)|,|\sqrt{N}\hat{\Gamma}(x';\pi_k,\pi_l))\hat w(x',\pi_k, \pi_l)|)dxdx' \\
=& \frac{1}{h^d}\int_1 \mathbbm{1}\left(\frac{x-x'}{h} \in [-1,1]^d\right)\mathrm{Cov}(|\sqrt{Nh^d}\hat{\Gamma}(x;\pi_i,\pi_j)\hat w(x,\pi_i, \pi_j)|,|\sqrt{Nh^d}\hat{\Gamma}(x';\pi_k,\pi_l))\\
&\cdot \hat w(x',\pi_k, \pi_l)|)dxdx', 
\end{align*}
\normalsize
where $\int_1\coloneqq\int_{\mathbbm{R}^{d}}\int_{\mathbbm{R}^{d}}\sum_{i=1}^K  \sum_{j=1}^K \sum_{k=1}^K \sum_{l=1}^K.$

Now, let $(Z_{1n}(x, \pi_i, \pi_j), Z_{2n}(x', \pi_k,\pi_l)), x,x' \in \mathbbm{R}^d$ be mean zero, bivariate Gaussian process such that for each $x \in  \mathbbm{R}^d$ and $x' \in  \mathbbm{R}^d,$ $(Z_{1n}(x, \pi_i, \pi_j), Z_{2n}(x', \pi_k,\pi_l))$ and ($\sqrt{Nh^d}\hat{\Gamma}(x;\pi_i,\pi_j)\hat w(x,\pi_i, \pi_j),$ $\sqrt{Nh^d}\hat{\Gamma}(x';\pi_k,\pi_l))\hat w(x',\pi_k, \pi_l)$ ) have the same covariance structure. Thus,  by the "bivariate normal distribution generation formula",
\begin{equation}\label{bivariate normal distribution generation}
    (Z_{1n}(x, \pi_i, \pi_j), Z_{2n}(x', \pi_k,\pi_l))\overset{d}{=}\left(\sqrt{1-\rho^*(x,x', \pi_i, \pi_j, \pi_k, \pi_l)^2}\mathbbm{Z}_1 +\rho^*(x,x', \pi_i, \pi_j, \pi_k, \pi_l)\mathbbm{Z}_2,\mathbbm{Z}_2\right)
\end{equation}
where $\mathbbm{Z}_1$ and $\mathbbm{Z}_2$ are independent standard normal random variables and 
$$\rho^*(x,x', \pi_i, \pi_j, \pi_k, \pi_l):=\mathrm{Corr}[\sqrt{Nh^d}(\hat{\Gamma}(x;\pi_i,\pi_j))\hat w(x,\pi_i, \pi_j),\sqrt{Nh^d}(\hat{\Gamma}(x';\pi_k,\pi_l))\hat w(x',\pi_k, \pi_l)]$$

Let $$\bar{s}^2_{1}= \frac{1}{h^d}\int_1\mathbbm{1}\left(\frac{x-x'}{h} \in [-1,1]^d\right) \mathrm{Cov}(|Z_{1n}(x, \pi_i, \pi_j)|,|Z_{2n}(x', \pi_k, \pi_l)|)dxdx' $$
Applying the change of variable $x'=x+th,$  we have  
$$\bar{s}^2_{1}= \int_{\mathbbm{R}^{d}}\int_{[-1,1]^d}\sum_{i=1}^K  \sum_{j=1}^K \sum_{k=1}^K \sum_{l=1}^K \mathrm{Cov}(|Z_{1n}(x, \pi_i, \pi_j)|,|Z_{2n}(x+th, \pi_k, \pi_l)|) dxdt,  $$
Furthermore, from Assumption \ref{regularity conditions}(h) (i.e.,$\sqrt{Nh^d}\hat w(x,\pi, \pi')\to (Var(\hat{\Gamma}(x;\pi,\pi')))^{-1/2}$  ), Lemma \ref{Gine} (which holds under Assumption \ref{regularity conditions}(e))  and  the change of variable $x'=x+th,$ we have that for almost every $(x,t),$
$$\rho^*(x,x', \pi_i, \pi_j, \pi_k, \pi_l)\to \rho(x,t, \pi_i, \pi_j, \pi_k, \pi_l)$$
with 
\begin{align*}
   \rho(x, t,\pi_i,\pi_j,\pi_k,\pi_l)=&\frac{\mathrm{Cov}(\hat{\Gamma}(x,\pi_i,\pi_j), \hat{\Gamma}(x+th;\pi_k,\pi_l))}{\sqrt{Var(\hat{\Gamma}(x;\pi_i,\pi_j))}\sqrt{Var(\hat{\Gamma}(x+th;\pi_k,\pi_l))}}\\
 =&  \begin{cases} 
\frac{\int K(\xi)K(\xi+t)d\xi}{\int K(\xi)^2 d\xi} &\,\, \text{if}\,\, i=k \,\& \,j=l \\
\frac{-\int K(\xi)K(\xi+t)d\xi}{\int K(\xi)^2 d\xi} &\,\, \text{if}\,\, i=l \,\& \,j=k \\
 \frac{-Var(\hat{\tau}(x, \pi_j))}{\sqrt{Var(\hat{\Gamma}(x;\pi_i,\pi_j))}\sqrt{Var(\hat{\Gamma}(x;\pi_k,\pi_l))}}\cdot \frac{\int K(\xi)K(\xi+t)d\xi}{\int K(\xi)^2 d\xi}& \,\, \text{if}\,\, j=k \,\& \,i\neq l \\
 \frac{-Var(\hat{\tau}(x, \pi_i))}{\sqrt{Var(\hat{\Gamma}(x;\pi_i,\pi_j))}\sqrt{Var(\hat{\Gamma}(x;\pi_k,\pi_l))}}\cdot \frac{\int K(\xi)K(\xi+t)d\xi}{\int K(\xi)^2 d\xi}& \,\, \text{if}\,\, j\neq k \,\& \, i=l\\
 \frac{Var(\hat{\tau}(x, \pi_j))}{\sqrt{Var(\hat{\Gamma}(x;\pi_i,\pi_j))}\sqrt{Var(\hat{\Gamma}(x;\pi_k,\pi_l))}}\cdot \frac{\int K(\xi)K(\xi+t)d\xi}{\int K(\xi)^2 d\xi}& \,\, \text{if}\,\, j=l  \,\& \, i\neq k\\
 \frac{Var(\hat{\tau}(x, \pi_i))}{\sqrt{Var(\hat{\Gamma}(x;\pi_i,\pi_j))}\sqrt{Var(\hat{\Gamma}(x;\pi_k,\pi_l))}}\cdot \frac{\int K(\xi)K(\xi+t)d\xi}{\int K(\xi)^2 d\xi}&  \,\, \text{if}\,\,j\neq l  \,\& \, i= k\\
 0 & \,\, \text{otherwise}\,\,
    \end{cases} 
\end{align*}
Hence, based on \eqref{bivariate normal distribution generation}, $$\mathrm{Cov}(|Z_{1n}(x)|,|Z_{2n}(x+th)|)\to \mathrm{Cov}\left(\left|\sqrt{1-\rho(x,t, \pi_i, \pi_j, \pi_k, \pi_l)^2}\mathbbm{Z}_1 +\rho(x,t, \pi_i, \pi_j, \pi_k, \pi_l)\mathbbm{Z}_2\right|,\left|\mathbbm{Z}_2\right|\right)$$
Finally, as in the proof of (6.35) in \cite{gine2003bm}, using the bounded convergence theorem with respect to the Lebesgue measure, we have $\bar{s}^2_{1}- Var(\hat T_1)\to 0,$ as such we have 
\begin{align*}
&\lim_{N\to \infty} Var(\hat T_1) \\
      &= \int_{\mathbbm{R}^{d}}\int_{[-1,1]^d}\sum_{i=1}^K  \sum_{j>i}^K \sum_{k=1}^K \sum_{l>k}^K \mathrm{Cov}\left(\left|\sqrt{1-\rho(x,t, \pi_i, \pi_j, \pi_k, \pi_l)^2}\mathbbm{Z}_1 +\rho(x,t, \pi_i, \pi_j, \pi_k, \pi_l)\mathbbm{Z}_2\right|,\left|\mathbbm{Z}_2\right|\right) dxdt\\ 
     & =: \sigma_1^2.
\end{align*}
Recall that all the asymptotic results are based on equal cluster sizes. As a result, I safely interchange $C\to \infty$ and  $N=N_0C\to \infty$  unless there is a confusion. 
Plugging  the kernel-based estimates of the variances into the formula for $\rho(x, t,\pi_i,\pi_j,\pi_k,\pi_l),$ gives us the plug-in estimator $\hat{\rho}(x, t,\pi_i,\pi_j,\pi_k,\pi_l).$
Hence, a consistent estimator of the asymptotic variance  of $\hat T_1$is 
\begin{equation*}
\resizebox{1.1\hsize}{!}{
    $\hat\sigma^2_{1}= \int_{\mathbbm{R}^{d}}\int_{\mathbbm{R}^d}\sum_{i=1}^K  \sum_{j>i}^K \sum_{k=1}^K \sum_{l>k}^K \mathrm{Cov}\left(\left|\sqrt{1-\hat \rho(x,t, \pi_i, \pi_j, \pi_k, \pi_l)^2}\mathbbm{Z}_1 +\hat \rho(x,t, \pi_i, \pi_j, \pi_k, \pi_l)\mathbbm{Z}_2\right|,\left|\mathbbm{Z}_2\right|\right) dxdt.$ 
    }
\end{equation*}

\subsubsection{Test statistics $\hat{T}_2$}
Define
\begin{align*}
  \hat{\Gamma}(x;x',\pi) :=&\hat{\tau}(x;\pi_k)-\hat{\tau}(x';\pi_k) 
\end{align*}
Now under the null hypothesis, the bias of $\hat{T}_2$  as 
\begin{align*}
  \mathrm{Bias}(\hat{T}_2):=&\mathbbm{E}[\hat{T}_2]\\
   =&\mathbbm{E}\left[\int_{\mathcal{X}} \int_{\mathcal{X}} \sum_{k =1}^K \left\{\sqrt{N}| \hat{\Gamma}(x,x',\pi_k)|\right\} \frac{\hat w(x,x',\pi_k)}{2}dxdx'\right]\\
   =&\frac{1}{\sqrt{h^d}}\int_{\mathcal{X}} \int_{\mathcal{X}} \sum_{k =1}^K\mathbbm{E}\left[ \sqrt{Nh^d}|\hat{\Gamma}(x,x',\pi_k)|\right] \frac{\hat w(x,x',\pi_k)}{2}dxdx'\\
   \to&\frac{1}{2\sqrt{h^d}}\int_{\mathcal{X}} \int_{\mathcal{X}} \sum_{k =1}^K \left\{\mathbbm{E}\left|\frac{(\hat{\Gamma}(x,x',\pi_k))}{\sqrt{\mathrm{Var}(\hat{\Gamma}(x,x',\pi_k))}}\right|\right\} dxdx' \,\,\,\,\,\,\,\,\text{[by Assumption \ref{regularity conditions}(h)]}\\
    =&\frac{1}{2\sqrt{h^d}}\int_{\mathcal{X}} \int_{\mathcal{X}} \sum_{k =1}^K \mathbbm{E}\left|\mathbbm{Z}_1\right|dxdx'\\
   =& h^{\frac{-d}{2}}\cdot\mathbbm{E}|\mathbbm{Z}_1|\cdot \frac{K}{2}\cdot \int_{\mathcal{X}} \int_{\mathcal{X}}dx dx'=:  a_{2}.
\end{align*}
Also, the variance of  $\hat{T}_2$ is
\begin{align*}
    Var(\hat{T}_2):=& \frac{1}{4}\int_2 \mathrm{Cov}(\sqrt{N}|\hat{\Gamma}(x,x',\pi_k)|,|\sqrt{N}\hat{\Gamma}(x'',x''',\pi_k))|)\hat w(x,x',\pi_k) \hat w(x'',x''',\pi_k)dxdx'dx''dx''' \\
    =& \frac{1}{4h^d}\int_2 \mathrm{Cov}(\sqrt{Nh^d}|\hat{\Gamma}(x,x',\pi_k)|,|\sqrt{Nh^d}\hat{\Gamma}(x'',x''',\pi_k))|)\hat w(x,x',\pi_k) \hat w(x'',x''',\pi_k)dxdx'dx''dx''', 
\end{align*}
where 
$\int_2\coloneqq\int_{\mathbbm{R}^{d}}\int_{\mathbbm{R}^{d}} \int_{\mathbbm{R}^{d}}\int_{\mathbbm{R}^{d}}\sum_{k=1}^K.$ 
Now, note that the integral above is non-zero if at least one of the elements in the set  $$\textbf{H}:=\left\{\mathbbm{1}((x-x'')h^{-1} \in [-1,1]^d),\mathbbm{1}((x-x''')h^{-1} \notin [-1,1]^d),\mathbbm{1}((x'-x''){h} \in [-1,1]^d),\mathbbm{1}((x'-x''')h^{-1}\in [-1,1]^d)\right\}$$ equals 1. Thus, the variance becomes
\begin{align*}
      Var(\hat{T}_2) =& \frac{1}{4h^d}\int_2 \mathbbm{1}\left(\frac{x-x''}{h} \in [-1,1]^d\right)\\ 
    &\cdot \mathrm{Cov}(|\sqrt{Nh^d}\hat w(x,x',\pi_k)\hat{\Gamma}(x,x',\pi_k)|,|\sqrt{Nh^d}\hat w(x'',x''',\pi_k)\hat{\Gamma}(x'',x''',\pi_k))|) 
    dxdx'dx''dx'''\\
    &+  \frac{1}{4h^d}\int_2\mathbbm{1}\left(\frac{x-x'''}{h} \in [-1,1]^d\right)\\ 
    &\cdot \mathrm{Cov}(|\sqrt{Nh^d}\hat w(x,x',\pi_k)\hat{\Gamma}(x,x',\pi_k)|,|\sqrt{Nh^d}\hat w(x'',x''',\pi_k)\hat{\Gamma}(x'',x''',\pi_k))|) 
    dxdx'dx''dx'''\\
     &+  \frac{1}{4h^d}\int_2 \mathbbm{1}\left(\frac{x'-x''}{h} \in [-1,1]^d\right)\\ 
    &\cdot \mathrm{Cov}(|\sqrt{Nh^d}\hat w(x,x',\pi_k)\hat{\Gamma}(x,x',\pi_k)|,|\sqrt{Nh^d}\hat w(x'',x''',\pi_k)\hat{\Gamma}(x'',x''',\pi_k))|) 
    dxdx'dx''dx'''\\
    &+  \frac{1}{4h^d}\int_2 \mathbbm{1}\left(\frac{x'-x'''}{h} \in [-1,1]^d\right)\\ 
    &\cdot \mathrm{Cov}(|\sqrt{Nh^d}\hat w(x,x',\pi_k)\hat{\Gamma}(x,x',\pi_k)|,|\sqrt{Nh^d}\hat w(x'',x''',\pi_k)\hat{\Gamma}(x'',x''',\pi_k))|) 
    dxdx'dx''dx'''\\
    &+  \frac{1}{4h^d}\int_2 \mathbbm{1}\left(\frac{x'-x''}{h} \in [-1,1]^d\right)\mathbbm{1}\left(\frac{x'-x'''}{h} \in [-1,1]^d\right)\\ 
    &\cdot \mathrm{Cov}(|\sqrt{Nh^d}\hat w(x,x',\pi_k)\hat{\Gamma}(x,x',\pi_k)|,|\sqrt{Nh^d}\hat w(x'',x''',\pi_k)\hat{\Gamma}(x'',x''',\pi_k))|) 
    dxdx'dx''dx'''\\
    &+  \frac{1}{4h^d}\int_2 \mathbbm{1}\left(\frac{x-x'''}{h} \in [-1,1]^d\right)\mathbbm{1}\left(\frac{x'-x'''}{h} \in [-1,1]^d\right)\\ 
    &\cdot \mathrm{Cov}(|\sqrt{Nh^d}\hat w(x,x',\pi_k)\hat{\Gamma}(x,x',\pi_k)|,|\sqrt{Nh^d}\hat w(x'',x''',\pi_k)\hat{\Gamma}(x'',x''',\pi_k))|) 
    dxdx'dx''dx'''\\
    &+  \frac{1}{4h^d}\int_2 \mathbbm{1}\left(\frac{x-x'''}{h} \in [-1,1]^d\right)\mathbbm{1}\left(\frac{x'-x''}{h} \in [-1,1]^d\right)\\ 
    &\cdot \mathrm{Cov}(|\sqrt{Nh^d}\hat w(x,x',\pi_k)\hat{\Gamma}(x,x',\pi_k)|,|\sqrt{Nh^d}\hat w(x'',x''',\pi_k)\hat{\Gamma}(x'',x''',\pi_k))|) 
    dxdx'dx''dx'''\\
    &+  \frac{1}{4h^d}\int_2 \mathbbm{1}\left(\frac{x-x''}{h} \in [-1,1]^d\right)\mathbbm{1}\left(\frac{x'-x'''}{h} \in [-1,1]^d\right)\\ 
    &\cdot \mathrm{Cov}(|\sqrt{Nh^d}\hat w(x,x',\pi_k)\hat{\Gamma}(x,x',\pi_k)|,|\sqrt{Nh^d}\hat w(x'',x''',\pi_k)\hat{\Gamma}(x'',x''',\pi_k))|) 
    dxdx'dx''dx'''\\
    &+  \frac{1}{4h^d}\int_2 \mathbbm{1}\left(\frac{x-x''}{h} \in [-1,1]^d\right)\mathbbm{1}\left(\frac{x'-x''}{h} \in [-1,1]^d\right)\\ 
    &\cdot \mathrm{Cov}(|\sqrt{Nh^d}\hat w(x,x',\pi_k)\hat{\Gamma}(x,x',\pi_k)|,|\sqrt{Nh^d}\hat w(x'',x''',\pi_k)\hat{\Gamma}(x'',x''',\pi_k))|) 
    dxdx'dx''dx'''\\
    &+  \frac{1}{4h^d}\int_2 \mathbbm{1}\left(\frac{x-x''}{h} \in [-1,1]^d\right)\mathbbm{1}\left(\frac{x-x'''}{h} \in [-1,1]^d\right)\\ 
    &\cdot \mathrm{Cov}(|\sqrt{Nh^d}\hat w(x,x',\pi_k)\hat{\Gamma}(x,x',\pi_k)|,|\sqrt{Nh^d}\hat w(x'',x''',\pi_k)\hat{\Gamma}(x'',x''',\pi_k))|) 
    dxdx'dx''dx'''\\
    &+  \frac{1}{4h^d}\int_2 \mathbbm{1}\left(\frac{x-x'''}{h} \in [-1,1]^d\right)\mathbbm{1}\left(\frac{x'-x''}{h} \in [-1,1]^d\right)\mathbbm{1}\left(\frac{x'-x'''}{h} \in [-1,1]^d\right)\\ 
    &\cdot \mathrm{Cov}(|\sqrt{Nh^d}\hat w(x,x',\pi_k)\hat{\Gamma}(x,x',\pi_k)|,|\sqrt{Nh^d}\hat w(x'',x''',\pi_k)\hat{\Gamma}(x'',x''',\pi_k))|) 
    dxdx'dx''dx'''\\
    &+  \frac{1}{4h^d}\int_2 \mathbbm{1}\left(\frac{x-x''}{h} \in [-1,1]^d\right)\mathbbm{1}\left(\frac{x'-x''}{h} \in [-1,1]^d\right)\mathbbm{1}\left(\frac{x'-x'''}{h} \in [-1,1]^d\right)\\ 
    &\cdot \mathrm{Cov}(|\sqrt{Nh^d}\hat w(x,x',\pi_k)\hat{\Gamma}(x,x',\pi_k)|,|\sqrt{Nh^d}\hat w(x'',x''',\pi_k)\hat{\Gamma}(x'',x''',\pi_k))|) 
    dxdx'dx''dx'''\\
    &+  \frac{1}{4h^d}\int_2 \mathbbm{1}\left(\frac{x-x''}{h} \in [-1,1]^d\right)\mathbbm{1}\left(\frac{x-x'''}{h} \in [-1,1]^d\right)\mathbbm{1}\left(\frac{x'-x'''}{h} \in [-1,1]^d\right)\\ 
    &\cdot \mathrm{Cov}(|\sqrt{Nh^d}\hat w(x,x',\pi_k)\hat{\Gamma}(x,x',\pi_k)|,|\sqrt{Nh^d}\hat w(x'',x''',\pi_k)\hat{\Gamma}(x'',x''',\pi_k))|) 
    dxdx'dx''dx'''\\
    &+  \frac{1}{4h^d}\int_2 \mathbbm{1}\left(\frac{x-x''}{h} \in [-1,1]^d\right)\mathbbm{1}\left(\frac{x-x'''}{h} \in [-1,1]^d\right)\mathbbm{1}\left(\frac{x'-x''}{h} \in [-1,1]^d\right)\\ 
    &\cdot \mathrm{Cov}(|\sqrt{Nh^d}\hat w(x,x',\pi_k)\hat{\Gamma}(x,x',\pi_k)|,|\sqrt{Nh^d}\hat w(x'',x''',\pi_k)\hat{\Gamma}(x'',x''',\pi_k))|) 
    dxdx'dx''dx'''\\
    &+  \frac{1}{4h^d}\int_2 \mathbbm{1}\left(\frac{x-x''}{h} \in [-1,1]^d\right)\mathbbm{1}\left(\frac{x-x'''}{h} \in [-1,1]^d\right)\mathbbm{1}\left(\frac{x'-x''}{h} \in [-1,1]^d\right)\mathbbm{1}\left(\frac{x'-x'''}{h} \in [-1,1]^d\right)\\ 
    &\cdot \mathrm{Cov}(|\sqrt{Nh^d}\hat w(x,x',\pi_k)\hat{\Gamma}(x,x',\pi_k)|,|\sqrt{Nh^d}\hat w(x'',x''',\pi_k)\hat{\Gamma}(x'',x''',\pi_k))|) 
    dxdx'dx''dx'''.
\end{align*}
In large samples, any integrals where it is possible for more than one of the elements in $\textbf{H}$ to be equal to 1 will converge to zero as $C\to \infty$ ($h\to 0$). Theoretically, this is obvious after applying a change of variables. Intuitively, it is because the chance of having the scaled distance between two or more distinct points in the covariate space belonging to the hypercube goes to zero as $h\to 0.$  
Hence, 
\begin{align*}
    Var(\hat{T}_2)
    \to& \frac{1}{4h^d}\int_2 \mathbbm{1}\left(\frac{x-x''}{h} \in [-1,1]^d\right)\\ 
    &\cdot \mathrm{Cov}(|\sqrt{Nh^d}\hat w(x,x',\pi_k)\hat{\Gamma}(x,x',\pi_k)|,|\sqrt{Nh^d}\hat w(x'',x''',\pi_k)\hat{\Gamma}(x'',x''',\pi_k))|) 
    dxdx'dx''dx'''\\
    &+  \frac{1}{4h^d}\int_2\mathbbm{1}\left(\frac{x-x'''}{h} \in [-1,1]^d\right)\\ 
    &\cdot \mathrm{Cov}(|\sqrt{Nh^d}\hat w(x,x',\pi_k)\hat{\Gamma}(x,x',\pi_k)|,|\sqrt{Nh^d}\hat w(x'',x''',\pi_k)\hat{\Gamma}(x'',x''',\pi_k))|) 
    dxdx'dx''dx'''\\
     &+  \frac{1}{4h^d}\int_2 \mathbbm{1}\left(\frac{x'-x''}{h} \in [-1,1]^d\right)\\ 
    &\cdot \mathrm{Cov}(|\sqrt{Nh^d}\hat w(x,x',\pi_k)\hat{\Gamma}(x,x',\pi_k)|,|\sqrt{Nh^d}\hat w(x'',x''',\pi_k)\hat{\Gamma}(x'',x''',\pi_k))|) 
    dxdx'dx''dx'''\\
    &+  \frac{1}{4h^d}\int_2 \mathbbm{1}\left(\frac{x'-x'''}{h} \in [-1,1]^d\right)\\ 
    &\cdot \mathrm{Cov}(|\sqrt{Nh^d}\hat w(x,x',\pi_k)\hat{\Gamma}(x,x',\pi_k)|,|\sqrt{Nh^d}\hat w(x'',x''',\pi_k)\hat{\Gamma}(x'',x''',\pi_k))|) 
    dxdx'dx''dx'''\\
    &=:\sigma_2^2.
\end{align*}
Thus,
\begin{align*}
    \sigma^2_{2}
    =& \frac{1}{4h^d}\int_{2}\mathbbm{1}\left(\frac{x-x''}{h} \in [-1,1]^d\right) 
    \cdot \mathrm{Cov}\left(\left|\frac{\hat{\tau}(x,\pi_k)}{\sqrt{Var(\hat \Gamma(x,x',\pi_k))}}\right|,\left|\frac{\hat{\tau}(x'',\pi_k)}{\sqrt{Var(\hat \Gamma(x'',x''',\pi_k))}}\right|\right) 
    dxdx'dx''dx'''\\
    &+  \frac{1}{4h^d}\int_{2}  \mathbbm{1}\left(\frac{x-x'''}{h} \in [-1,1]^d\right)\cdot \mathrm{Cov}\left(\left|\frac{\hat{\tau}(x,\pi_k)}{\sqrt{Var(\hat \Gamma(x,x',\pi_k))}}\right|,\left|\frac{\hat{\tau}(x''',\pi_k)}{\sqrt{Var(\hat \Gamma(x'',x''',\pi_k))}}\right|\right)  dxdx'dx''dx''' \\ 
     &+  \frac{1}{4h^d}\int_{2} \mathbbm{1}\left(\frac{x'-x''}{h} \in [-1,1]^d\right) \mathrm{Cov}\left(\left|\frac{\hat{\tau}(x',\pi_k)}{\sqrt{Var(\hat \Gamma(x,x',\pi_k))}}\right|,\left|\frac{\hat{\tau}(x'',\pi_k)}{\sqrt{Var(\hat \Gamma(x'',x''',\pi_k))}}\right|\right)  dxdx'dx''dx''' \\ 
    &+  \frac{1}{4h^d}\int_{2} \mathbbm{1}\left(\frac{x'-x'''}{h} \in [-1,1]^d\right) \mathrm{Cov}\left(\left|\frac{\hat{\tau}(x',\pi_k)}{\sqrt{Var(\hat \Gamma(x,x',\pi_k))}}\right|,\left|\frac{\hat{\tau}(x''',\pi_k)}{\sqrt{Var(\hat \Gamma(x'',x''',\pi_k))}}\right|\right)  dxdx'dx''dx''' \\
     =& \frac{1}{4h^d}\int_{2}\mathbbm{1}\left(\frac{x-x''}{h} \in [-1,1]^d\right) \frac{\sqrt{Var(\hat{\tau}(x,\pi_k)) Var(\hat{\tau}(x'',\pi_k))}}{\sqrt{Var(\hat \Gamma(x,x',\pi_k)) Var(\hat \Gamma(x'',x''',\pi_k)}} \\
    & \cdot \mathrm{Cov}\left(\left|\frac{\hat{\tau}(x,\pi_k)}{\sqrt{Var(\hat \tau(x;\pi_k))}}\right|,\left|\frac{\hat{\tau}(x'',\pi_k)}{\sqrt{Var(\hat \tau(x'',\pi_k))}}\right|\right) 
    dxdx'dx''dx'''\\
    &+  \frac{1}{4h^d}\int_{2}  \mathbbm{1}\left(\frac{x-x'''}{h} \in [-1,1]^d\right)\frac{\sqrt{Var(\hat{\tau}(x,\pi_k)) Var(\hat{\tau}(x''',\pi_k))}}{\sqrt{Var(\hat \Gamma(x,x',\pi_k)) Var(\hat \Gamma(x'',x''',\pi_k)}} \\
    &\cdot \mathrm{Cov}\left(\left|\frac{\hat{\tau}(x,\pi_k)}{\sqrt{Var(\hat \tau(x;\pi_k))}}\right|,\left|\frac{\hat{\tau}(x''',\pi_k)}{\sqrt{Var(\hat \tau(x'',\pi_k))}}\right|\right)  dxdx'dx''dx''' \\ 
     &+  \frac{1}{4h^d}\int_{2} \mathbbm{1}\left(\frac{x'-x''}{h} \in [-1,1]^d\right)\frac{\sqrt{Var(\hat{\tau}(x',\pi_k)) Var(\hat{\tau}(x'',\pi_k))}}{\sqrt{Var(\hat \Gamma(x,x',\pi_k)) Var(\hat \Gamma(x'',x''',\pi_k)}} \\
     &\cdot \mathrm{Cov}\left(\left|\frac{\hat{\tau}(x',\pi_k)}{\sqrt{Var(\hat \tau(x',\pi_k))}}\right|,\left|\frac{\hat{\tau}(x'',\pi_k)}{\sqrt{Var(\hat \tau(x'',\pi_k))}}\right|\right)  dxdx'dx''dx''' \\ 
    &+  \frac{1}{4h^d}\int_{2} \mathbbm{1}\left(\frac{x'-x'''}{h} \in [-1,1]^d\right)\frac{\sqrt{Var(\hat{\tau}(x',\pi_k)) Var(\hat{\tau}(x''',\pi_k))}}{\sqrt{Var(\hat \Gamma(x,x',\pi_k)) Var(\hat \Gamma(x'',x''',\pi_k)}} \\
    &\cdot \mathrm{Cov}\left(\left|\frac{\hat{\tau}(x',\pi_k)}{\sqrt{Var(\hat \tau(x',\pi_k))}}\right|,\left|\frac{\hat{\tau}(x''',\pi_k)}{\sqrt{Var(\hat \tau(x''',\pi_k))}}\right|\right)  dxdx'dx''dx'''.
\end{align*}

Now, let $(Z_{1n}(x, \pi_k), Z_{2n}(x'',\pi_k)), x, x'' \in \mathbbm{R}^d$ be mean zero, bivariate Gaussian process such that for each $x \in  \mathbbm{R}^d$ and  $x'' \in  \mathbbm{R}^d,$ $(Z_{1n}(x,\pi_k), Z_{2n}(x'',\pi_k))$ and $(\sqrt{Nh^d}\hat{\tau}(x,\pi_k)w(x, \pi_k),$ $\sqrt{Nh^d}\hat{\tau}(x''\pi_k))\hat w(x'',\pi_k))$  have the same covariance structure. Thus,  by the "bivariate normal distribution generation formula,"
\begin{equation}\label{bivariate normal distribution generation 3}
    (Z_{1n}(x,\pi_k), Z_{2n}(x'',\pi_k))\overset{d}{=}\left(\sqrt{1-\rho^*(x,x'' \pi_k)^2}\mathbbm{Z}_1 +\rho^*(x, x'', \pi_k)\mathbbm{Z}_2,\mathbbm{Z}_2\right)
\end{equation}
where $\mathbbm{Z}_1$ and $\mathbbm{Z}_2$ are independent standard normal random variables and 
\begin{align*}
    \rho^*(x, x'', \pi_k):=&\mathrm{Corr}\left(\sqrt{Nh^d}(\hat{\tau}(x,\pi_k)-\mathbbm{E}[\hat{\tau}(x,\pi_k)])\hat w(x, \pi_k),\sqrt{Nh^d}(\hat{\tau}(x'',\pi_k)- \mathbbm{E}[\hat{\tau}(x'',\pi_k)])\hat w(x'',\pi_k)\right)\\
    =& \frac{\mathbbm{E}[(\hat{\tau}(x,\pi_k) -\mathbbm{E}[\hat{\tau}(x,\pi_k)]) (\hat{\tau}(x'',\pi_k)-\mathbbm{E}[\hat{\tau}(x'',\pi_k)])]}{\sqrt{Var(\hat{\tau}(x,\pi_k))}\sqrt{Var(\hat{\tau}(x'',\pi_k))}}
\end{align*}
Based on  Lemma \ref{Gine} (which holds under Assumption \ref{regularity conditions}(e))  and  the change of variable $x''=x+th,$ we have that for almost every $(x,t),$
\begin{align}\label{correlation}
  \rho^*(x, x+th, \pi_k)\to \rho(x, t, \pi_k):= \frac{\int K(\xi)K(\xi+t)d\xi}{\int K(\xi)^2 d\xi}  
\end{align}

Let 
\begin{align*}
   \bar{s}^2_{21}=& \frac{1}{4h^d} \int_{2}  \mathbbm{1}\left(\frac{x-x''}{h} \in [-1,1]^d\right) \frac{\sqrt{Var(\hat{\tau}(x,\pi_k)) Var(\hat{\tau}(x'',\pi_k))}}{\sqrt{Var(\hat \Gamma(x,x',\pi_k)) Var(\hat \Gamma(x'',x''',\pi_k)}}\\&\cdot\mathrm{Cov}(|Z_{1n}(x,\pi_k)|,|Z_{2n}(x'',\pi_k)|)dxdx'dx''dx''', 
\end{align*}
 Applying the change of variable $x''=x+th,$ 
we have,
\begin{align*}
   \bar{s}^2_{21}=& \frac{1}{4h^d} \int_{2'}h^d  \frac{\sqrt{Var(\hat{\tau}(x,\pi_k)) Var(\hat{\tau}(x+th,\pi_k))}}{\sqrt{Var(\hat \Gamma(x,x',\pi_k)) Var(\hat \Gamma(x+th, x''',\pi_k)}}\\&\cdot\mathrm{Cov}(|Z_{1n}(x,\pi_k)|,|Z_{2n}(x+th,\pi_k)|)dxdx'dx'''dt, 
\end{align*}
where 
$\int_{2'}\coloneqq\int_{[-1,1]^{d}}\int_{\mathbbm{R}^{d}} \int_{\mathbbm{R}^{d}}\int_{\mathbbm{R}^{d}}\sum_{k=1}^K.$ 

Now note that from \eqref{bivariate normal distribution generation 3} and \eqref{correlation}, $$\mathrm{Cov}(|Z_{1n}(x,\pi_k)|,|Z_{2n}(x+th,\pi_k)|)\to \mathrm{Cov}\left(\left|\sqrt{1-\rho(x, t, \pi_k)^2}\mathbbm{Z}_1 +\rho(x, t, \pi_k)\mathbbm{Z}_2\right|,\left|\mathbbm{Z}_2\right|\right)$$
Finally, as in the proof of (6.35) in \cite{gine2003bm}, using the bounded convergence theorem with respect to the Lebesgue measure, the first term of the variance converges to 
\begin{align*}
   \bar{s}^2_{21}=& \frac{1}{4} \int_{2'}  \frac{\sqrt{Var(\hat{\tau}(x,\pi_k)) Var(\hat{\tau}(x+th,\pi_k))}}{\sqrt{Var(\hat \Gamma(x,x',\pi_k)) Var(\hat \Gamma(x+th,x''',\pi_k)}}\mathrm{Cov}\left(\left|\sqrt{1-\rho(x, t, \pi_k)^2}\mathbbm{Z}_1 +\rho(x, t, \pi_k)\mathbbm{Z}_2\right|,\left|\mathbbm{Z}_2\right|\right)dxdx'dx'''dt \\
   =& \frac{1}{4} \int_{2'}  \frac{Var(\hat{\tau}(x,\pi_k)) }{\sqrt{Var(\hat \Gamma(x,x',\pi_k)) Var(\hat \Gamma(x,x''',\pi_k)}}\mathrm{Cov}\left(\left|\sqrt{1-\rho(x, t, \pi_k)^2}\mathbbm{Z}_1 +\rho(x, t, \pi_k)\mathbbm{Z}_2\right|,\left|\mathbbm{Z}_2\right|\right)dxdx'dx'''dt,
\end{align*}
Analogously, the second term  reduces to
\begin{align*}
   \bar{s}^2_{22}=& \frac{1}{4} \int_{2'}  \frac{\sqrt{Var(\hat{\tau}(x,\pi_k)) Var(\hat{\tau}(x+th,\pi_k))}}{\sqrt{Var(\hat \Gamma(x,x',\pi_k)) Var(\hat \Gamma(x'',x+th,\pi_k)}}\mathrm{Cov}\left(\left|\sqrt{1-\rho(x, t, \pi_k)^2}\mathbbm{Z}_1 +\rho(x, t, \pi_k)\mathbbm{Z}_2\right|,\left|\mathbbm{Z}_2\right|\right)dxdx'dx''dt, \\
   =& \frac{1}{4} \int_{2'}  \frac{Var(\hat{\tau}(x,\pi_k)) )}{\sqrt{Var(\hat \Gamma(x,x',\pi_k)) Var(\hat \Gamma(x'',x,\pi_k)}}\mathrm{Cov}\left(\left|\sqrt{1-\rho(x, t, \pi_k)^2}\mathbbm{Z}_1 +\rho(x, t, \pi_k)\mathbbm{Z}_2\right|,\left|\mathbbm{Z}_2\right|\right)dxdx'dx''dt, 
\end{align*}
the third term becomes
\begin{align*}
   \bar{s}^2_{23}=& \frac{1}{4} \int_{2'}  \frac{\sqrt{Var(\hat{\tau}(x',\pi_k)) Var(\hat{\tau}(x'+th,\pi_k))}}{\sqrt{Var(\hat \Gamma(x,x',\pi_k)) Var(\hat \Gamma(x'+th,x''',\pi_k)}}\mathrm{Cov}\left(\left|\sqrt{1-\rho(x', t, \pi_k)^2}\mathbbm{Z}_1 +\rho(x', t, \pi_k)\mathbbm{Z}_2\right|,\left|\mathbbm{Z}_2\right|\right)dxdx'dx'''dt\\
   =& \frac{1}{4} \int_{2'}  \frac{Var(\hat{\tau}(x',\pi_k))}{\sqrt{Var(\hat \Gamma(x,x',\pi_k)) Var(\hat \Gamma(x',x''',\pi_k)}}\mathrm{Cov}\left(\left|\sqrt{1-\rho(x', t, \pi_k)^2}\mathbbm{Z}_1 +\rho(x', t, \pi_k)\mathbbm{Z}_2\right|,\left|\mathbbm{Z}_2\right|\right)dxdx'dx'''dt, 
\end{align*}
and the fourth term reduces to
\begin{align*}
   \bar{s}^2_{24}=& \frac{1}{4} \int_{2'}  \frac{\sqrt{Var(\hat{\tau}(x',\pi_k)) Var(\hat{\tau}(x'+th,\pi_k))}}{\sqrt{Var(\hat \Gamma(x,x',\pi_k)) Var(\hat \Gamma(x'',x'+th,\pi_k)}}\mathrm{Cov}\left(\left|\sqrt{1-\rho(x', t, \pi_k)^2}\mathbbm{Z}_1 +\rho(x', t, \pi_k)\mathbbm{Z}_2\right|,\left|\mathbbm{Z}_2\right|\right)dxdx'dx''dt\\ 
   =& \frac{1}{4} \int_{2'}  \frac{Var(\hat{\tau}(x',\pi_k)) }{\sqrt{Var(\hat \Gamma(x,x',\pi_k)) Var(\hat \Gamma(x'',x',\pi_k)}}\mathrm{Cov}\left(\left|\sqrt{1-\rho(x', t, \pi_k)^2}\mathbbm{Z}_1 +\rho(x', t, \pi_k)\mathbbm{Z}_2\right|,\left|\mathbbm{Z}_2\right|\right)dxdx'dx''dt,
\end{align*}
Therefore, 
\begin{align*}
    \sigma_2^2=&\bar{s}^2_{21}+\bar{s}^2_{22}+\bar{s}^2_{23}+\bar{s}^2_{24}\\
    =&  \int_{2'}  \frac{Var(\hat{\tau}(x,\pi_k)) )}{\sqrt{Var(\hat \Gamma(x,x',\pi_k)) Var(\hat \Gamma(x'',x,\pi_k)}}\mathrm{Cov}\left(\left|\sqrt{1-\rho(x, t, \pi_k)^2}\mathbbm{Z}_1 +\rho(x, t, \pi_k)\mathbbm{Z}_2\right|,\left|\mathbbm{Z}_2\right|\right)dxdx'dx''dt, 
\end{align*}
$$.$$
Plugging the kernel-based estimates of the variances into the formula gives us the plug-in estimator $\hat \sigma_2^2.$ This variance formula can also be derived using the law of total variances.

\subsection{Proofs of Lemmas and Theorems}\label{Proofs of Lemmas and Theorems}
In this section, I discuss the proof of the asymptotic normality of $\hat{S}_1$ and $\hat{S}_2$ given in Proposition \ref{normality}.
\begin{lemma}[Asymptotics of $L_1-$norm of the  CATE function, \cite{chang2015nonparametric}]\label{normality in Chang}
Assume $\{(Y_i,D_i,X_i), i=1,\dots N\}$ are i.i.d copies of $(Y,D,X)$ Let  $\hat{\tau}(x)$ be the standard local constant CATE estimator, and let $\hat{w}(x)$ be the inverse scaled standard error estimators. Set $\hat{D}:=\int_{\mathcal{X}}\sqrt{N}|\hat{\tau}(x)|\hat{w}(x)dx.$ Then, under the regularity conditions in Assumption \ref{regularity conditions},
$$\frac{\hat{D}-\mathrm{Bias}(\hat{D})}{\sigma}\overset{d}{\to} N(0,1),$$ where $\mathrm{Bias}(\hat{D})$ is the asymptotic bias of $\hat{D}$ and 
$\sigma^2:=\lim_{N\to \infty} Var(\hat{D}),$ under the restriction that the true CATE function ${\tau}(x)=0 \,\,\forall x.$
\end{lemma}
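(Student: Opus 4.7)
The plan is to adapt the Poissonization technique of \cite{gine2003bm}, as signaled in the introduction, to handle the non-smoothness introduced by the absolute value in the integrand of $\hat{D}$. The key observation is that, under the null $\tau(x)=0$ for all $x$, the studentized process $Z_N(x) := \sqrt{N}\,\hat{\tau}(x)\,\hat{w}(x)$ is (asymptotically) mean-zero, and its finite-dimensional distributions converge to those of a Gaussian field whose covariance is governed by the local overlap of shifted kernels (i.e., $\rho(x,t) = \int K(\xi)K(\xi+t)d\xi / \int K(\xi)^2 d\xi$). The statistic $\hat{D} = \int_{\mathcal{X}}|Z_N(x)|dx$ is then an $L_1$-type functional of this field, for which Poissonization is the natural tool.

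First, I would replace the deterministic sample size $N$ by a Poisson random variable $\tilde{N}$ with mean $N$, producing a Poissonized version $\tilde{D}$ of $\hat{D}$. The crucial benefit is that, for disjoint regions $A,B \subset \mathcal{X}$ separated by more than the kernel-support radius $h$, the kernel-weighted sums over $A$ and $B$ become exactly independent by the independent-increments property of the underlying Poisson point process. I would then partition $\mathcal{X}$ into a regular grid of cells of side length slightly larger than $h$, recasting $\tilde{D}$ as a sum of (asymptotically) independent cell contributions. Since the number of non-negligible cells is of order $h^{-d}\to\infty$, I would apply the Lindeberg (or Lyapunov) CLT to establish asymptotic normality of $(\tilde{D} - \mathrm{Bias}(\tilde{D}))/\sigma$. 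The bias is pinned down by the exact Gaussian identity $E|\mathbb{Z}|$, giving $\mathrm{Bias}(\hat{D}) \asymp h^{-d/2} E|\mathbb{Z}|\int_{\mathcal{X}}dx$, while the variance $\sigma^2$ arises from covariances of the form $\mathrm{Cov}(|Z_N(x)|,|Z_N(x')|)$ evaluated via the bivariate normal representation already used in Appendix~\ref{bias and var}.

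The next step is de-Poissonization: show that $\tilde{D} - \hat{D} = o_p(\sigma)$ after centering, so the CLT transfers to $\hat{D}$. A standard device here is to condition on $\tilde{N}=N$ and exploit the conditional uniform representation of the Poisson point process, together with second-moment control on the increment. Finally, the consistency of $\hat{w}(x)$ (Assumption \ref{regularity conditions}(h)) and of the plug-in estimator $\hat{\sigma}^2$ allows replacement of $\sigma$ by $\hat{\sigma}$ via Slutsky, yielding $(\hat{D} - \mathrm{Bias}(\hat{D}))/\hat{\sigma} \overset{d}{\to} N(0,1)$.

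The main technical obstacle will be the non-smoothness of $|\cdot|$: standard delta-method or Taylor-expansion arguments do not apply, so the moment computations must rely on exact Gaussian identities (e.g., $E|\sqrt{1-\rho^2}\mathbb{Z}_1 + \rho\mathbb{Z}_2|$) combined with bounded-convergence arguments of the type invoked in Lemma~\ref{Gine}. A secondary but delicate obstacle is bounding the de-Poissonization error uniformly in the cell decomposition, which requires tight second-moment control on the kernel sums and careful use of the infinite divisibility of Poisson laws to compare $\tilde{D}$ with $\hat{D}$; the bandwidth conditions in Assumption~\ref{regularity conditions}(g), in particular $Ch^{3d}\to\infty$, are precisely what make these remainder terms vanish.
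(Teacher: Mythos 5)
Your overall route --- Poissonize, block the covariate space into cells of width comparable to $h$, apply a CLT to the block sums, then de-Poissonize --- is exactly the strategy behind this lemma; note the paper does not reprove it but cites Appendix B of \cite{chang2015nonparametric}, whose argument has three steps: (1) uniform linear approximation of $\hat{\tau}(x)$, (2) Poissonization restricted to a Borel set $B$, and (3) de-Poissonization and letting $B \to \mathcal{X}$. Your sketch covers (2) and (3) but omits (1), and (1) is not cosmetic. The local constant CATE estimator is a difference of ratios with \emph{estimated} denominators $\hat{P}_t(x)$ (and an estimated weight $\hat{w}(x)$), whereas every tool you invoke downstream --- the independent-increments argument and the Berry--Esseen-type Gaussian comparisons that deliver $\mathbbm{E}|\mathbbm{Z}|$ and the bivariate covariance identities --- applies to normalized sums of i.i.d.\ terms. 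You must first establish a uniform linearization, $\sup_{x}|R_N(x)-\zeta_N(x)| = o_p(N^{-1/2})$ (the analogue of Lemmas B.1--B.2 in \cite{chang2015nonparametric}, mirrored by Lemma \ref{three} here, with Assumption \ref{regularity conditions}(h) handling the estimated weight), so that $\hat{D}$ can be replaced by the $L_1$ functional of the linear kernel statistic $\tau_N(x)$ \emph{before} Poissonizing; Poissonizing $\hat{D}$ directly, as you propose, leaves you with no valid moment calculus for the bias and covariance computations.

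Two further points in your steps (2)--(3) would fail as written. First, de-Poissonization does not go through showing $\tilde{D}-\hat{D}=o_p(\sigma)$: the Poissonized statistic carries extra sample-size randomness and this difference is not negligible. The correct device --- which you mention only in passing --- is a joint CLT for the pair $(S_N, U_N)$, where $U_N$ is the centered Poisson count of observations in a set $\mathcal{B}(M)$ chosen so that the complement carries mass $\alpha>0$ (condition \eqref{M2}), followed by the Gin\'e--Mason de-Poissonization step conditioning on $\mathcal{N}=N$; the covariance bound $\mathrm{Cov}(S_N,U_N)=O(h^{d/2})$ is what keeps the conditional limit $N(0,1)$. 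Second, your grid cells of side slightly larger than $h$ yield block contributions that are \emph{one-dependent} (adjacent cells share kernel support), not independent, so the plain Lindeberg CLT does not apply; the proof instead uses Shergin's CLT for $m$-dependent random fields with the $r$-th moment condition $\sum_{\mathbf{i}}\mathbbm{E}|y_{\mathbf{i},N}|^r=o(1)$ for some $2<r<3$, and then removes the restriction to $B$ via a second-moment truncation bound and Theorem 4.2 of \cite{billingsley1968convergence}. With these repairs your outline coincides with the cited proof.
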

\begin{proof}
    This lengthy proof  involves 
    \begin{enumerate}
        \item a uniform approximation $\hat{\tau}(x)$ to derive the asymptotic approximation of $\hat{D}$ by ${D}_N,$ 
        \item Poissonization of ${D}_N$ and limiting the integration interval to obtain  ${D}^P_N(B),$ where $B \subseteq \mathcal{X},$ and 
        \item de-Poissonizing ${D}^P_N(B)$ to derive its asymptotic normality and that of the studentized version of $\hat{D}$ by allowing  $B \to \mathcal{X},$
    \end{enumerate}
   I omit the detail and refer readers to Appendix B of \cite{chang2015nonparametric}.
\end{proof}

\subsubsection{Proof of the asymptotic normality of $\hat{S}_1$}
Recall that all within-cluster dependencies can be ignored under the working independence approach, and observations can be viewed as independent. Therefore, I can apply the results in Lemma \ref{normality in Chang}.
\subsubsection*{Uniform asymptotic approximation  }
Write $$\hat{\tau}(x,\pi)=\tau(x;\pi)+ (\tau_{N0}(x,\pi)-\mathbbm{E}(\tau_{N0}(x,\pi)))+(\mathbbm{E}(\tau_{N0}(x,\pi))-\tau(x;\pi))+ R_N(x,\pi),$$
where
    $$\tau_{N0}(x,\pi):=\frac{1}{Nh^d}\sum_{i=1}^{N}Y_i\mathbbm{1}(\Pi_i=\pi)\left[\frac{T_i}{P_1(x;\pi)}-\frac{(1-T_i)}{P_0(x;\pi)}\right]\cdot K\left(\frac{x-X_i}{h}\right),$$
    \begin{align*}
        R_N(x,\pi):=&\frac{1}{Nh^d}\sum_{i=1}^{N} Y_i\mathbbm{1}(\Pi_i=\pi)\left[\frac{T_i}{P_1(x;\pi)}-\frac{(1-T_i)}{P_0(x;\pi)}\right]\\
    &\times\left(T_i\frac{P_1(x,\pi)-\hat{P}_1(x,\pi)}{\hat{P}_1(x,\pi)}+(1-T_i)\frac{P_0(x,\pi)-\hat{P}_0(x,\pi)}{\hat{P}_0(x,\pi)}\right)\cdot K\left(\frac{x-X_i}{h}\right).
    \end{align*}
Therefore,
\begin{align*}
    \hat{\Gamma}(x;\pi_k,\pi_j)=&\tau(x;\pi_k)-\tau(x;\pi_j)+ (\tau_{N0}(x,\pi_k)-\tau_{N0}(x,\pi_j)-\mathbbm{E}(\tau_{N0}(x,\pi_k)-\tau_{N0}(x,\pi_j)))\\
    &+(\mathbbm{E}(\tau_{N0}(x,\pi_k)-\tau_{N0}(x,\pi_j))-\tau(x;\pi_k)-\tau(x;\pi_j)))+ R_N(x,\pi_k)-R_N(x,\pi_j).
\end{align*}

Now, define
    \begin{align*}
        \zeta_{_N}(x,\pi)=&\mathbbm{E}[Y|X=x,\Pi=\pi, T=1]-\mathbbm{E}[Y|X=x,\Pi=\pi, T=0]\\
        &-\mathbbm{E}[Y|X=x,\Pi=\pi, T=1]\frac{1}{Nh^dP_1(x,\pi)}\sum_{i=1}^{N}T_i\mathbbm{1}(\Pi=\pi)K\left(\frac{x-X_i}{h}\right)\\
        &+\mathbbm{E}[Y|X=x,\Pi=\pi, T=0]\frac{1}{Nh^dP_0(x,\pi)}\sum_{i=1}^{N}(1-T_i)\mathbbm{1}(\Pi=\pi)K\left(\frac{x-X_i}{h}\right).\\
    \end{align*}
The following lemma shows that $R_N(x,\pi)$ can be approximated by $\zeta_{_N}(x,\pi)$ uniformly over $x$ at a rate faster than $C^{-1/2}$.
 \begin{lemma}\label{three}
    Under the regularity conditions, we find that for $\pi_k, \pi_j \in \mathbf{\Pi},$
    $$\sup_{x \in \mathcal{X}}|(R_N(x,\pi_k)-R_N(x,\pi_j)) -(\zeta_{_N}(x,\pi_k)-\zeta_{_N}(x,\pi_j))|= o_p(N^{-1/2}).$$
    \end{lemma}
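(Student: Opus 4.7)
The plan is to exploit the algebraic structure of $R_N$ to rewrite it as an ``oracle'' term (which turns out to be exactly $\zeta_N$) plus a product of two estimation errors, then bound the product uniformly via standard kernel rates.

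\textbf{Step 1 (algebraic simplification).} Inside $R_N(x,\pi)$, expand the product using $T_i^2=T_i$, $(1-T_i)^2=1-T_i$, $T_i(1-T_i)=0$. All cross-terms vanish and the sum collapses to
\begin{equation*}
R_N(x,\pi)=\frac{P_1(x,\pi)-\hat P_1(x,\pi)}{P_1(x,\pi)}\,\hat\mu_1(x,\pi)-\frac{P_0(x,\pi)-\hat P_0(x,\pi)}{P_0(x,\pi)}\,\hat\mu_0(x,\pi),
\end{equation*}
where $\hat\mu_t(x,\pi):=(Nh^d)^{-1}\sum_i Y_i\mathbbm{1}(T_i=t,\Pi_i=\pi)K((x-X_i)/h)/\hat P_t(x,\pi)$ is the local-constant regression estimator of $\mu_t(x,\pi):=\mathbbm{E}[Y\mid X=x,T=t,\Pi=\pi]$. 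Meanwhile the definition of $\zeta_N$ can be rewritten, after collecting terms, as the same expression but with each $\hat\mu_t$ replaced by $\mu_t$. Hence
\begin{equation*}
R_N(x,\pi)-\zeta_N(x,\pi)=\sum_{t\in\{0,1\}}(-1)^{1-t}\,\frac{P_t(x,\pi)-\hat P_t(x,\pi)}{P_t(x,\pi)}\bigl(\hat\mu_t(x,\pi)-\mu_t(x,\pi)\bigr).
\end{equation*}

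\textbf{Step 2 (uniform kernel rates).} Under Assumptions \ref{Overlap} and \ref{regularity conditions}(c)--(g), together with the working-independence reduction described in Section \ref{Main Asymptotic Results} (which legitimizes treating pooled observations as effectively i.i.d.\ for local kernel sums when cluster sizes are bounded), standard uniform-in-$x$ bounds for local-constant estimators give, for each $\pi\in\mathbf{\Pi}$ and $t\in\{0,1\}$,
\begin{equation*}
\sup_{x\in\mathcal{W}_X}\bigl|\hat P_t(x,\pi)-P_t(x,\pi)\bigr|+\sup_{x\in\mathcal{W}_X}\bigl|\hat\mu_t(x,\pi)-\mu_t(x,\pi)\bigr|=O_p(u_N),\qquad u_N:=\sqrt{\tfrac{\log N}{Nh^d}}+h^{s}.
\end{equation*}
The overlap condition keeps $P_t(x,\pi)$ bounded away from zero on $\mathcal{W}_X$ so the denominators are controlled.

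\textbf{Step 3 (closing the rate).} By Cauchy--Schwarz the product of the two errors is $O_p(u_N^{2})$ uniformly in $x$. It remains to check $u_N^{2}=o_p(N^{-1/2})$. From $Ch^{2s}\to 0$ we get $h^{2s}=o(N^{-1})$, so $\sqrt{N}\,h^{2s}\to 0$; from $(Ch^{2d})^{1/2}/\log C\to\infty$ we get $\log N/(N^{1/2}h^d)\to 0$, so $\sqrt{N}\,\log N/(Nh^d)\to 0$; and the cross term is dominated via $2ab\le a^2+b^2$. Hence
\begin{equation*}
\sup_{x\in\mathcal{W}_X}\bigl|R_N(x,\pi)-\zeta_N(x,\pi)\bigr|=o_p(N^{-1/2}),
\end{equation*}
and the claim follows by the triangle inequality applied to $\pi_k$ and $\pi_j$.

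\textbf{Main obstacle.} The delicate point is not the algebraic identity in Step 1 but the uniform kernel rates in Step 2 in the clustered setting. One must verify that the argument of Section \ref{Main Asymptotic Results} (vanishing contribution of within-cluster pairs as $h\to 0$) suffices not only for pointwise CLT but for uniform-in-$x$ Bahadur-type expansions. This can be handled either by invoking existing uniform results for weakly dependent kernel estimators, or by decomposing each local sum into a ``diagonal'' part (one unit per cluster, effectively i.i.d.) and an off-diagonal remainder whose supremum is $o_p(N^{-1/2})$ by a direct bracketing argument on the bounded-support, bounded-variation kernel class of Assumption \ref{regularity conditions}(d).
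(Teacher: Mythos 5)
Your proposal is correct, and it reaches the result by a more self-contained route than the paper. The paper's own proof is a two-line reduction: it invokes Lemma B.1 of \cite{chang2015nonparametric} to obtain $\sup_{x}|R_N(x,\pi)-\zeta_N(x,\pi)|=o_p(N^{-1/2})$ separately for each fixed $\pi\in\mathbf{\Pi}$ (legitimized by the working-independence argument for bounded clusters), and then concludes exactly as you do in your final step, via the triangle inequality applied to the pair $(\pi_k,\pi_j)$. What you do differently is open the black box: your Step 1 identity $R_N-\zeta_N=\sum_{t\in\{0,1\}}(-1)^{1-t}\,P_t^{-1}(P_t-\hat P_t)(\hat\mu_t-\mu_t)$ is the algebraic content underlying the cited lemma (I verified the expansion using $T_i^2=T_i$, $T_i(1-T_i)=0$, and the rewriting $\zeta_N=\mu_1(P_1-\hat P_1)/P_1-\mu_0(P_0-\hat P_0)/P_0$; both check out), and your Step 3 rate verification makes explicit why Assumption \ref{regularity conditions}(g) delivers $u_N^2=o(N^{-1/2})$: $Ch^{2s}\to 0$ kills $\sqrt{N}h^{2s}$, while $(Ch^{2d})^{1/2}/\log C\to\infty$ kills $\log N/(N^{1/2}h^d)$. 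What the paper's approach buys is brevity and direct inheritance of the uniform Bahadur-type bound from an established reference; what yours buys is transparency about precisely which assumptions enter where, plus an honest flag (your ``main obstacle'' paragraph) that the uniform-in-$x$ rates must survive within-cluster dependence --- a point the paper addresses only through the informal working-independence discussion in Section \ref{Main Asymptotic Results}, so your diagonal/off-diagonal decomposition suggestion is, if anything, more careful than the source. One cosmetic caveat applies equally to both arguments: the uniform rates in your Step 2 hold on the compact set $\mathcal{W}_X$ where $P_t$ is bounded away from zero (Assumptions \ref{Overlap} and \ref{regularity conditions}(c)), whereas the lemma is stated with $\sup_{x\in\mathcal{X}}$; the paper inherits this same implicit restriction from \cite{chang2015nonparametric}, so it is not a gap relative to the paper's own standard of rigor.
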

   \begin{proof}
  From, Lemma B.1 in \cite{chang2015nonparametric}, 
  for $\pi_k \in \Pi,$ we have 
  $$\sup_{x \in \mathcal{X}}|R_{N}(x,\pi_k)-\zeta_{_{N}}(x,\pi_k))|=o_p(N^{-1/2}).$$
  Now,
\footnotesize
  \begin{align*}
    \sup_{x \in \mathcal{X}}|(R_N(x,\pi_k)-R_N(x,\pi_j)) -(\zeta_{_N}(x,\pi_k)-\zeta_{_N}(x,\pi_j))|&\leq  \sup_{x \in \mathcal{X}}\{|R_N(x,\pi_k)-\zeta_{_N}(x,\pi_k)|+|R_N(x,\pi_j)-\zeta_{_N}(x,\pi_j)|\}\\
    &=\sup_{x \in \mathcal{X}}\{|R_N(x,\pi_k)-\zeta_{_N}(x,\pi_k)|\}+\sup_{x \in \mathcal{X}}\{|R_N(x,\pi_j)-\zeta_{_N}(x,\pi_j)|\}\\
   &= o_p(N^{-1/2})+o_p(N^{-1/2})= o_p(N^{-1/2}).
  \end{align*}
   \end{proof} 
 \normalsize   
\begin{lemma}\label{TandTstar}
    Under the regularity conditions, we have 
    $$\Tilde{T}_1(\pi_k,\pi_j)-T_{1N}^{*}(\pi_k,\pi_j)= o_p(1),$$
    where 
     $$
     \Tilde{T}_1(\pi_k,\pi_j) \coloneqq  \int_{\mathcal{X}}  \left\{\sqrt{N}|\hat{\tau}(x;\pi_k)-\hat{\tau}(x;\pi_j)|\right\} \hat{w}(x, \pi_k, \pi_j)dx,
$$
    \footnotesize{
    \begin{align} \label{TNstar}
         T_{1N}^{*}(\pi_k,\pi_j):=   \int_{\mathcal{X}}\left\{\sqrt{N}\left|\Gamma(x, \pi_k,\pi_j)+ [\tau_N(x,\pi_k)-\tau_N(x,\pi_j)]-\mathbbm{E}[\tau_N(x,\pi_k)-\tau_N(x,\pi_j)]\right|\right\}w(x,\pi_j,\pi_k) dx
    \end{align}
    }
     \normalsize
    and 
    $$\tau_N(x,\pi)=\tau_{N0}(x,\pi)+ \zeta_N(x, \pi).$$
    Hence, under the null hypothesis such that $\tau(x;\pi_k)=\tau(x;\pi_j)$ on $\mathcal{X}\times \Pi$, we have $$\Tilde{T}_1(\pi_k,\pi_j)-T_{1N}(\pi_k,\pi_j) =o_p(1),$$ where 
    $$T_{1N}(\pi_k,\pi_j):=   \int_{\mathcal{X}}\left\{\sqrt{N}\left|[\tau_N(x,\pi_k)-\tau_N(x,\pi_j)]-\mathbbm{E}[\tau_N(x,\pi_k)-\tau_N(x,\pi_j)]\right|\right\}w(x,\pi_j,\pi_k)\footnote{ Here, I use the true weight rather than the estimated weight but all results hold using the estimated weight since the estimator is uniformly consistent by assumption.} dx.$$
   and 
\begin{align*}
   \tau_N(x,\pi):=&\frac{1}{Nh^d}\sum_{i=1}^{N}( \{Y-\mathbbm{E}[Y|X=x,\Pi=\pi, T=1]\}\frac{T_i\cdot\mathbbm{1}(\Pi=\pi)}{P_1(x,\pi_k)}\\
   &-\{Y-\mathbbm{E}[Y|X=x,\Pi=\pi,T=0]\}\frac{(1-T_i)\cdot\mathbbm{1}(\Pi=\pi)}{P_0(x,\pi_k)} )\cdot K\left(\frac{x-X_i}{h}\right).
\end{align*}
    \end{lemma}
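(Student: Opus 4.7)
The plan is to replace $\hat{\tau}(x,\pi_k)-\hat{\tau}(x,\pi_j)$ by a linearized H\'ajek-type representation, show the resulting integrand matches that of $T_{1N}^{*}$ up to $o_p(1)$ uniformly in $x$, and then absorb the weight-estimation error using Assumption \ref{regularity conditions}(h).

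First, using the decomposition of $\hat{\tau}(x,\pi)$ given in the excerpt together with Lemma \ref{three}, one has uniformly in $x\in\mathcal{X}$,
\[
\hat{\tau}(x,\pi_k)-\hat{\tau}(x,\pi_j)=\Gamma(x,\pi_k,\pi_j)+[\tau_{N0}(x,\pi_k)-\tau_{N0}(x,\pi_j)]+[\zeta_N(x,\pi_k)-\zeta_N(x,\pi_j)]+[b_N(x,\pi_k)-b_N(x,\pi_j)]+o_p(N^{-1/2}),
\]
where $b_N(x,\pi):=\mathbbm{E}(\tau_{N0}(x,\pi))-\tau(x;\pi)$ is the usual kernel smoothing bias. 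A direct matching of the $\mu_t$-dependent components (with $\mu_t=\mathbbm{E}[Y|X=x,\Pi=\pi,T=t]$) establishes the algebraic identity $\tau_{N0}(x,\pi)+\zeta_N(x,\pi)=\tau(x;\pi)+\tau_N(x,\pi)$, so the display collapses to $\hat{\tau}(x,\pi_k)-\hat{\tau}(x,\pi_j)=\Gamma(x,\pi_k,\pi_j)+\tau_N(x,\pi_k)-\tau_N(x,\pi_j)+o_p(N^{-1/2})$ uniformly in $x$. The mean $\mathbbm{E}(\tau_N(x,\pi))$ is itself a standard kernel smoothing bias of order $O(h^s)$ (by Taylor expansion using the $s$-order kernel in Assumption \ref{regularity conditions}(d) and the smoothness in Assumption \ref{regularity conditions}(e)), whence $\sqrt{N}\,\mathbbm{E}(\tau_N(x,\pi_k)-\tau_N(x,\pi_j))=O(\sqrt{Ch^{2s}})=o(1)$ uniformly, by Assumption \ref{regularity conditions}(g). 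Adding and subtracting this centering inside the absolute value and invoking the reverse triangle inequality then yields, uniformly in $x$,
\[
\left|\sqrt{N}|\hat{\tau}(x,\pi_k)-\hat{\tau}(x,\pi_j)|-\sqrt{N}|\Gamma(x,\pi_k,\pi_j)+\tau_N(x,\pi_k)-\tau_N(x,\pi_j)-\mathbbm{E}(\tau_N(x,\pi_k)-\tau_N(x,\pi_j))|\right|=o_p(1).
\]

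Next I would split $\Tilde{T}_1(\pi_k,\pi_j)-T_{1N}^{*}(\pi_k,\pi_j)=\int_{\mathcal{X}}(A(x)-B(x))\hat{w}(x,\pi_k,\pi_j)\,dx+\int_{\mathcal{X}}B(x)(\hat{w}(x,\pi_k,\pi_j)-w(x,\pi_k,\pi_j))\,dx$, where $A(x)$ and $B(x)$ denote the $\sqrt{N}$-scaled absolute values in $\Tilde{T}_1$ and $T_{1N}^{*}$ respectively. The first integral is $o_p(1)$ because $\mathcal{X}$ is compact, $\hat{w}$ is uniformly bounded in probability (Assumption \ref{regularity conditions}(c) and (h)), and $|A(x)-B(x)|=o_p(1)$ uniformly by the previous display. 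For the second integral, a variance computation for the H\'ajek-type kernel average (using Assumption \ref{Overlap} to control $1/P_t(x,\pi)$) gives $\mathrm{Var}(\tau_N(x,\pi))=O(1/(Nh^d))$ uniformly in $x$, so $\mathbbm{E}[\sqrt{N}|\tau_N(x,\pi)|]=O(h^{-d/2})$ and hence $\int_{\mathcal{X}}B(x)\,dx=O_p(h^{-d/2})$ by Markov's inequality; combined with $\sup_{x}|\hat{w}-w|=o_p(h^{d/2})$ from Assumption \ref{regularity conditions}(h), the second integral is also $o_p(1)$. Under $H_0^{\Pi}$, $\Gamma(\cdot,\pi_k,\pi_j)\equiv 0$ on $\mathcal{X}$, so $T_{1N}^{*}=T_{1N}$ and the second conclusion follows immediately.

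The main obstacle is the uniform bound $\mathbbm{E}(\tau_N(x,\pi))=O(h^s)$ and the $O_p(h^{-d/2})$ control on the $L_1$-norm of $\sqrt{N}\,\tau_N$; both require a careful variance analysis of the H\'ajek estimator (exploiting the working-independence argument of Section \ref{Main Asymptotic Results} to discard within-cluster dependencies asymptotically) together with the bandwidth restrictions in Assumption \ref{regularity conditions}(g).
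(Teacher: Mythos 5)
Your proof is correct and follows essentially the same route the paper compresses into a single line (``triangle inequality plus the proof of Lemma B.2 in \cite{chang2015nonparametric}''): you linearize $\hat{\tau}$ via the $R_N \approx \zeta_N$ uniform approximation of Lemma \ref{three}, use the algebraic identity $\tau_{N0}(x,\pi)+\zeta_N(x,\pi)=\tau(x;\pi)+\tau_N(x,\pi)$, kill the smoothing bias at rate $\sqrt{N}h^s=O(\sqrt{Ch^{2s}})=o(1)$ by Assumption \ref{regularity conditions}(g), and absorb the weight-estimation error by pairing $\sup_x|\hat{w}-w|=o_p(h^{d/2})$ from Assumption \ref{regularity conditions}(h) with the $O_p(h^{-d/2})$ bound on $\int_{\mathcal{X}}B(x)\,dx$, which is exactly the role that assumption plays here. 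Two minor caveats, neither fatal: in your first display the bracket $[\tau_{N0}(x,\pi_k)-\tau_{N0}(x,\pi_j)]$ should be the \emph{centered} differences $[(\tau_{N0}-\mathbbm{E}\tau_{N0})(x,\pi_k)-(\tau_{N0}-\mathbbm{E}\tau_{N0})(x,\pi_j)]$ (as written it double-counts $\Gamma$ and $b_N$, though your subsequent collapse via the identity is stated correctly), and your bound $\int_{\mathcal{X}}B(x)\,dx=O_p(h^{-d/2})$ silently drops the $\sqrt{N}\,\Gamma$ component of $B$, so it validates the first claim only under the null and the local alternatives \eqref{local alt 1} (the only places the lemma is invoked), not under fixed alternatives --- a scope limitation already implicit in the paper's unconditional statement rather than one you introduced.
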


\begin{proof}
Using the triangle inequality and the proof of Lemma B.2 in \cite{chang2015nonparametric}, the proof of this Lemma is straight forward.
\end{proof}

\subsubsection*{Consistency of the estimators of asymptotic  variance}
\begin{lemma} \label{consis}
 Under the regularity conditions, and using the working under independence approach the following hold:
 \begin{enumerate}
     \item $\sup_{x \in \mathcal{X}}|\hat{\tau}(x, \pi)- \tau(x, \pi) |= O_{p}((Nh^d)^{-1/2}\log N +h^s)$ $\forall \pi \in \Pi,$
     \item $\sup_{x \in \mathcal{X}}|\hat{\rho}_2(x, \pi)- \rho_2(x, \pi) |= O_{p}((Nh^d)^{-1/2}\log N +h^s)$ $\forall \pi \in \Pi.$
 \end{enumerate}
\end{lemma}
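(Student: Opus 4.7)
The plan is to reduce both assertions to the classical uniform convergence rates for Nadaraya--Watson-type kernel sums, and then combine the building blocks through the overlap condition (Assumption \ref{Overlap}) and a first-order expansion of ratios. Throughout, the working-independence treatment lets me treat the kernel averages as classical sums of i.i.d.-like summands, which is exactly the regime covered by the standard Masry/Hansen-type uniform convergence results used in \cite{chang2015nonparametric}.

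For part 1, I would first rewrite the H\'ajek-type estimator in its ratio form $\hat\tau(x,\pi) = \hat m_1(x,\pi)/\hat P_1(x,\pi) - \hat m_0(x,\pi)/\hat P_0(x,\pi)$, where
\begin{equation*}
\hat m_t(x,\pi) := \frac{1}{Nh^d}\sum_{i=1}^{N} Y_i\, \mathbbm{1}(T_i=t,\Pi_i=\pi)\, K\!\left(\frac{x-X_i}{h}\right),
\end{equation*}
and $m_t(x,\pi) := \mathbbm{E}[Y \mathbbm{1}(T=t,\Pi=\pi)|X=x]\,f(x)$. Under Assumption \ref{regularity conditions}(d)--(g), a Bernstein/truncation inequality applied to the i.i.d.-like kernel summands, combined with a covering argument over the compact set $\mathcal{W}_X$, yields
\begin{equation*}
\sup_{x\in\mathcal{W}_X}|\hat m_t(x,\pi)-m_t(x,\pi)| = O_p\!\left((Nh^d)^{-1/2}\log N + h^s\right),
\end{equation*}
together with the same bound for $\hat P_t - P_t$. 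The stochastic term uses Assumption \ref{regularity conditions}(f) (finite third moment of $Y$) and the bounded-variation, compactly supported kernel in \ref{regularity conditions}(d); the $h^s$ bias term uses the $s$-th order kernel and the $s$-times differentiability in \ref{regularity conditions}(e). Assumption \ref{Overlap} guarantees $P_t(x,\pi)\ge\eta>0$, so $\hat P_t$ is also bounded away from zero uniformly with probability tending to one. A first-order expansion
\begin{equation*}
\frac{\hat m_t}{\hat P_t} - \frac{m_t}{P_t} = \frac{\hat m_t-m_t}{P_t} - \frac{m_t(\hat P_t-P_t)}{P_t^2} + O_p\!\big(\sup_x|\hat P_t-P_t|^2\big),
\end{equation*}
followed by differencing the $t=1$ and $t=0$ pieces, delivers the claimed rate for part 1.

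For part 2, I would decompose $\hat\rho_2(x,\pi) = [\hat\mu_1(x,\pi) - \hat\mu_2(x,\pi)]\int K(\xi)^2\,d\xi$ as
\begin{equation*}
\hat\mu_1(x,\pi) = \sum_{t\in\{0,1\}} \frac{\hat g_t(x,\pi)}{\hat P_t^{\,2}(x,\pi)},\qquad
\hat\mu_2(x,\pi) = \sum_{t\in\{0,1\}} \frac{\hat m_t(x,\pi)^2}{\hat P_t^{\,3}(x,\pi)},
\end{equation*}
where $\hat g_t$ is the kernel average built from $Y_i^2\mathbbm{1}(T_i=t,\Pi_i=\pi)$. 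Each of $\hat g_t,\hat m_t,\hat P_t$ admits the same $O_p((Nh^d)^{-1/2}\log N + h^s)$ uniform rate --- note that the third-moment condition in Assumption \ref{regularity conditions}(f) is exactly what is needed to control the $Y^2$-weighted kernel average via the Bernstein step. The overlap-based lower bound on $\hat P_t$ keeps $\hat P_t^{\,2},\hat P_t^{\,3}$ bounded below uniformly in probability, and another first-order expansion of the ratios and the square gives the same rate for $\hat\rho_2$.

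The one genuinely delicate step is the Bernstein bound that produces the $(Nh^d)^{-1/2}\log N$ (rather than $(Nh^d)^{-1/2}\sqrt{\log N}$) factor: with only a third moment on $Y$, the truncation level in the Bernstein inequality must be chosen to balance the exponential tail against the variance term, and this balancing costs one extra $\sqrt{\log N}$ factor. Beyond that, the argument is a standard bookkeeping combination of uniform consistency for each building block with the overlap lower bound, so I would be brief on the remaining algebra.
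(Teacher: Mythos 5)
Your overall route is, in substance, the same one the paper takes: the paper offers no argument of its own for this lemma, deferring entirely to Lemma B.2 of \cite{chang2015nonparametric}, whose proof is exactly your scheme --- an $s$-order bias expansion using Assumption \ref{regularity conditions}(d)--(e), a truncation-plus-Bernstein bound with a covering of the compact set for the stochastic part, the overlap condition to keep $\hat P_t$ bounded away from zero, and a first-order ratio linearization. For part 1 your bookkeeping goes through: with $\sup_x \mathbbm{E}[|Y|^3\mid X=x]<\infty$ one can truncate at $\tau_N \sim N^{1/3}\log N$, the variance of the kernel summands requires only bounded conditional second moments, and the Bernstein remainder $\tau_N h^{-d}\log N/N$ is dominated by $(Nh^d)^{-1/2}\log N$ precisely when $Nh^{3d}\to\infty$, which is Assumption \ref{regularity conditions}(g). (One small correction: the supremum is only controlled over the compact set $\mathcal{W}_X$ on which $f$ and hence $P_t(x,\pi)$ are bounded below; as stated over all of $\mathcal{X}$ the denominators in the ratio expansion are not uniformly bounded away from zero.)

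The genuine gap is in part 2, where you assert that the third-moment condition ``is exactly what is needed'' to control the $Y^2$-weighted average $\hat g_t$ via the Bernstein step. It is not: under Assumption \ref{regularity conditions}(f) the summand $Y_i^2$ has only a conditional moment of order $3/2$, which is below the order-$2$ threshold that any truncation-plus-Bernstein (or Einmahl--Mason/Hansen-type) argument needs to deliver a $(Nh^d)^{-1/2}\log N$ uniform rate. Concretely, truncating $Y_i^2$ at $\tau_N$ produces a smoothed tail bias of order $\tau_N^{-1/2}$ (from $\mathbbm{E}[Y^2\mathbbm{1}\{Y^2>\tau_N\}\mid X]\le C\tau_N^{-1/2}$), the post-truncation variance of the kernel summand is of order $\tau_N^{1/2}h^{-d}$ rather than $h^{-d}$, and ensuring $\max_i Y_i^2\le\tau_N$ with probability tending to one forces $\tau_N\gg N^{2/3}$; optimizing over $\tau_N$ subject to these constraints yields a uniform rate no better than roughly $N^{-1/3}h^{-d}$ up to logarithms, strictly slower than the claimed $(Nh^d)^{-1/2}\log N$. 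To obtain the rate stated in the lemma for $\hat\rho_2$ you need $\sup_x\mathbbm{E}[|Y|^{4+\delta}\mid X=x]<\infty$ (so $Y^2$ has more than two conditional moments) or bounded outcomes --- a gap arguably inherited from the source, since Lemma B.2 of \cite{chang2015nonparametric} is invoked under the same third-moment assumption. Alternatively, since what the paper actually uses downstream is only the weaker requirement $\sup_x|\hat w-w|=o_p(h^{d/2})$ from Assumption \ref{regularity conditions}(h), you could target that slower rate instead, but your current write-up neither proves the claimed rate nor flags the substitution, and even the slower rate requires a bandwidth condition stronger than $Nh^{3d}\to\infty$ to be $o_p(h^{d/2})$.
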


\begin{proof}
This Lemma corresponds to Lemma B.2 in \cite{chang2015nonparametric}. Hence I omit the proof.
\end{proof}

\begin{theorem} \label{main theorem}
Under the regularity conditions, we have  that
$$\frac{T_{1N}(\pi_k,\pi_j)-\mathrm{Bias}(T_{1N}(\pi_k,\pi_j))}{{\sigma}_{1N}(\pi_k,\pi_j)} \overset{d}{\to} N(0,1),$$ where \,\,$\mathrm{Bias}(T_{1N}(\pi_k,\pi_j))$ is the asymptotic bias of \,\,$T_{1N}(\pi_k,\pi_j),$ and
${\sigma}_{1N}^2(\pi_k,\pi_j):=\lim_{N\to \infty} Var(T_{1N}(\pi_k,\pi_j))$ under the null.
\end{theorem}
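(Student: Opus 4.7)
The plan is to adapt the Poissonization strategy of Giné, Mason, and Zaitsev (2003), as invoked by \cite{chang2015nonparametric} in Lemma \ref{normality in Chang}, to the pairwise-difference statistic $T_{1N}(\pi_k,\pi_j)$. Because the working-independence reduction in Section \ref{Main Asymptotic Results} lets us treat observations as i.i.d.\ and because $T_{1N}(\pi_k,\pi_j)$ is, structurally, an integrated absolute value of a local linear functional of the sample, it lies in the exact class of statistics the \cite{gine2003bm} machinery was designed for. First, I would replace $N$ by an independent Poisson random variable $N^{P}$ with mean $N$ to form the Poissonized analogue $T^{P}_{1N}(B;\pi_k,\pi_j)$, restricting the integral to a compact $B \subseteq \mathcal{W}_X$ in order to avoid boundary effects (the restriction will be removed at the end by tightness, using Assumption \ref{regularity conditions}(c) and Lemma \ref{consis}).

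Second, I would partition $B$ into $O(h^{-d})$ cubes of side $h$. On the Poissonized sample the contributions of the kernel-smoothed terms from disjoint cubes become independent (independent increments of the Poisson process), so after writing $T^{P}_{1N}(B;\pi_k,\pi_j) - \mathbbm{E}[T^{P}_{1N}(B;\pi_k,\pi_j)]$ as a sum of independent, centered, uniformly bounded random variables (the uniform bound and third-moment check come from Assumption \ref{regularity conditions}(d) and (f)), a Lindeberg–Lyapunov CLT yields asymptotic normality. The limiting mean and variance coincide with the $\mathrm{Bias}(T_{1N}(\pi_k,\pi_j))$ and $\sigma_{1N}^{2}(\pi_k,\pi_j)$ computed in Appendix \ref{bias and var}: Lemma \ref{Gine} identifies the kernel-integral limits, and a bounded-convergence argument analogous to equation (6.35) of \cite{gine2003bm} converts local covariance structure into the standard-normal-based expression through the correlation function $\rho(x,t,\pi_i,\pi_j,\pi_k,\pi_l)$.

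Third, I would de-Poissonize. The classical argument shows that, conditional on $N^{P}=N$, the Poissonized and original samples coincide in law, and the fluctuation contributed by the random sample size is asymptotically negligible because $T_{1N}(\pi_k,\pi_j)$ depends on the sample only through smooth empirical-process objects whose variation in $N$ is controlled. Applying the symmetrization/coupling step from \cite{gine2003bm} (the non-trivial piece of their argument) transfers the CLT from $T^{P}_{1N}$ to $T_{1N}$ with the same mean and variance. Finally, letting $B \uparrow \mathcal{W}_X$ and using uniform consistency (Lemma \ref{consis}) together with Assumption \ref{regularity conditions}(h) removes the truncation and completes the proof.

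The main obstacle is precisely the de-Poissonization step, because the integrand $|\hat\Gamma(x;\pi_k,\pi_j)|$ is non-smooth, which rules out standard Edgeworth-expansion approaches; this is why the \cite{gine2003bm} coupling is indispensable here. A secondary but routine technical point is that the difference structure $\hat\tau(x;\pi_k) - \hat\tau(x;\pi_j)$ mixes two sub-samples with different exposure values, so one must verify that the independent-cell argument still goes through after grouping observations by $\Pi_i \in \{\pi_k,\pi_j\}$; this follows because under the working-independence approach the two sub-samples are themselves i.i.d., and Lemma \ref{TandTstar} has already reduced $\hat T_1$ to the additively separable surrogate $T_{1N}^{*}$, so Chang et al.'s template applies with only the variance constants modified as in Appendix \ref{bias and var}.
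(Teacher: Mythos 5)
Your outline tracks the paper's actual proof nearly step for step: reduction to the surrogate $T_{1N}^{*}(\pi_k,\pi_j)$ via Lemma \ref{TandTstar}, Poissonization of the sample size, restriction of the integral to a compact region satisfying the analogues of \eqref{M1}--\eqref{M3}, a partition of $\mathcal{B}(M)$ into cubes of side $h$, a CLT for the cell sums, de-Poissonization in the style of \cite{gine2003bm}, and finally removal of the truncation (the paper does this with a second-moment tail bound and Theorem 4.2 of \cite{billingsley1968convergence} rather than Lemma \ref{consis}, but the spirit is the same). However, one step as you wrote it would fail: the claim that contributions from disjoint cubes are \emph{independent}. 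The independent-increments property of the Poisson process makes the observation counts over disjoint cells independent, but the integrand at a point $x$ near the boundary of a side-$h$ cube loads on observations within $h/2$ of $x$, i.e.\ on observations lying in \emph{neighboring} cubes (the kernel support is $\{u:|u|\leq 0.5\}$ at scale $h$). Consequently the cell integrals $\alpha_{\mathbf{i},N}$ form a \emph{one-dependent} random field, not an independent array, and a Lindeberg--Lyapunov CLT for independent summands does not apply. The paper instead verifies a variance condition and an $r$-th absolute moment condition for $2<r<3$ and invokes the CLT for $m$-dependent random fields of \cite{shergin1993central} together with the Cram\'er--Wold device; your argument needs the same substitution.

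Two further points are compressed in a way that hides the substance. First, de-Poissonization is more delicate than ``the fluctuation from the random sample size is negligible'': the paper's conditions \eqref{M1}--\eqref{M3} deliberately reserve mass $\alpha>0$ \emph{outside} $\mathcal{B}(M)$ and establish the joint limit $(S_N,U_N)\overset{d}{\to}(\mathbbm{Z}_1,\sqrt{1-\alpha}\,\mathbbm{Z}_2)$, where $U_N$ is the centered Poissonized count on $\mathcal{B}(M)$; asymptotic independence rests on showing $\mathrm{Cov}(S_N,U_N)=O((Nh^{2d})^{-1/2})$, and only then does conditioning on $\mathcal{N}=N$ transfer the CLT to the un-Poissonized statistic. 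Your appeal to the GMZ coupling implicitly contains this, but the auxiliary count process and the positive-outside-mass device are the whole content of the step and must appear explicitly. Second, in the variance-convergence step (the paper's Lemma \ref{asymptotic variance}), the Berry--Esseen comparisons in Fact 6.1 of \cite{gine2003bm} require the correlation to satisfy $\rho^2<1$ strictly; the paper handles possible degeneracy by perturbing with the two-point variables $\mathcal{Q}_1,\mathcal{Q}_2$ before passing to the limit, which your citation of the bounded-convergence argument around (6.35) of \cite{gine2003bm} glosses over. All of these defects are repairable with the machinery you already cite, so the route is the right one, but as drafted the CLT step is incorrect and the de-Poissonization is under-specified.
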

\begin{proof}
    The proof of Theorem \ref{main theorem} is very lengthy so I defer it to Appendix \ref{appen c}.
\end{proof}

Now, from Lemma \ref{TandTstar}  and Theorem \ref{main theorem}, we have   \begin{align}
    \frac{\Tilde{T}_{1}(\pi_k,\pi_j)-\mathrm{Bias}(\Tilde{T}_{1}(\pi_k,\pi_j))}{\tilde{\sigma}_1(\pi_k,\pi_j)} \overset{d}{\to} N(0,1),
\end{align}where $\mathrm{Bias}(\Tilde{T}_{1}(\pi_k,\pi_j))=\mathrm{Bias}(T_{1N}(\pi_k,\pi_j))$ is the asymptotic bias of $\Tilde{T}_{1}(\pi_k,\pi_j)$ and 
$\tilde{\sigma}^2_1(\pi_k,\pi_j):=\lim_{N\to \infty} Var(\Tilde {T}_{1}(\pi_k,\pi_j))=\lim_{N\to \infty} Var(T_{1N}(\pi_k,\pi_j))$ under the null.

Next, observe that our  test statistic $\hat{T}_1$ can be written as 
$$\hat{T}_1=\sum_{k=1}^K\sum_{j>k}^K \Tilde{T}_{1}(\pi_k,\pi_j),$$ which asymptotically is the sum of dependent normal random variables with $\{\mathrm{Bias}(\Tilde{T}_{1}(\pi_k,\pi_j)),\pi_k,\pi_j\in \boldsymbol{\Pi}\}$ as the asymptotic biases and, $\{\tilde{\sigma}^2_1(\pi_k,\pi_j),\pi_k,\pi_j\in \boldsymbol{\Pi}\}$ as the asymptotic variances under the null hypothesis. Thus,  $$\frac{\hat{T}_{1}-\sum_{k=1}^K\sum_{j>k}^K\mathrm{Bias}(\Tilde{T}_{1}(\pi_k,\pi_j))}{\hat{\sigma}_1} \overset{d}{\to} N(0,1)$$
where  $\sum_{k=1}^K\sum_{j>k}^K\mathrm{Bias}(\Tilde{T}_{1}(\pi_k,\pi_j))=:a_1,$ $\hat{\sigma}^2_1:=\lim_{N\to \infty} \widehat{Var}(\hat{T}_{1}),$ with $a_1$ and  $\hat{\sigma}^2_1$ defined in Proposition \ref{normality}. So, note that  $\hat{\sigma}^2_1$ can also be viewed as the sum of the asymptotic variances of $\Tilde{T}_{1}(\pi_k,\pi_j)$ and their pairwise asymptotic covariances.

\subsubsection{Proof of Asymptotic normality of $\hat{S}_2$}
\subsubsection*{Uniform Asymptotic approximation } 
Recall that we can write 
 $$\hat{\tau}(x,\pi)=\tau(x;\pi)+ (\tau_{N0}(x,\pi)-\mathbbm{E}(\tau_{N0}(x,\pi)))+(\mathbbm{E}(\tau_{N0}(x,\pi))-\tau(x;\pi))+ R_N(x,\pi)$$
    Therefore,
\begin{align*}
    \hat{\Gamma}(x,x',\pi_k)=&\tau(x;\pi_k)-\tau(x',\pi_k)+ (\tau_{N0}(x,\pi_k)-\tau_{N0}(x',\pi_k)-\mathbbm{E}(\tau_{N0}(x,\pi_k)-\tau_{N0}(x',\pi_k)))\\
    &+(\mathbbm{E}(\tau_{N0}(x,\pi_k)-\tau_{N0}(x',\pi_k))-\tau(x;\pi_k)-\tau(x',\pi_k)))+ R_N(x,\pi_k)-R_N(x',\pi_k)
\end{align*}

Now, define
    \begin{align*}
        \zeta_{_N}(x,\pi)=&\mathbbm{E}[Y|X=x,\Pi=\pi, T=1]-\mathbbm{E}[Y|X=x,\Pi=\pi, T=0]\\
        &-\mathbbm{E}[Y|X=x,\Pi=\pi, T=1]\frac{1}{Nh^dP_1(x,\pi)}\sum_{i=1}^{N}T_i\mathbbm{1}(\Pi=\pi)K\left(\frac{x-X_i}{h}\right)\\
        &+\mathbbm{E}[Y|X=x,\Pi=\pi, T=0]\frac{1}{Nh^dP_0(x,\pi)}\sum_{i=1}^{N}(1-T_i)\mathbbm{1}(\Pi=\pi)K\left(\frac{x-X_i}{h}\right).
    \end{align*}
The following lemma shows that $R_N(x,\pi)$ can be approximated by $\zeta_{_N}(x,\pi)$ uniformly over $x$ at a rate faster than $N^{-1/2}$.

Under the regularity conditions, we find that for $\pi_k\in \mathbf{\Pi},$ $$\sup_{x \in \mathcal{X}}|(R_N(x,\pi_k)-R_N(x',\pi_k)) -(\zeta_{_N}(x,\pi_k)-\zeta_{_N}(x',\pi_k))|= o_p(N^{-1/2}).$$
  
   \begin{proof}
  Similar to the proof of Lemma \ref{three},
  for $\pi_k \in \Pi,$ we have 
  $$\sup_{x \in \mathcal{X}}|R_{N}(x,\pi_k)-\zeta_{_{N}}(x,\pi_k))|=o_p(N^{-1/2}).$$
  Now,
  \begin{align*}
    \sup_{(x, x') \in \mathcal{X}^2}&|(R_N(x,\pi_k)-R_N(x',\pi_k)) -(\zeta_{_N}(x,\pi_k)-\zeta_{_N}(x',\pi_k))|\\
    \leq & \sup_{(x, x') \in \mathcal{X}^2}\{|R_N(x,\pi_k)-\zeta_{_N}(x,\pi_k)|+|R_N(x',\pi_k)-\zeta_{_N}(x',\pi_k)|\}\\
     \leq & \sup_{x \in \mathcal{X}}\{|R_N(x,\pi_k)-\zeta_{_N}(x,\pi_k)|\}+\sup_{x' \in \mathcal{X}}\{|R_N(x',\pi_k)-\zeta_{_N}(x',\pi_k)|\}\\
   = & o_p(N^{-1/2})+o_p(N^{-1/2})= o_p(N^{-1/2}).
  \end{align*}
\end{proof} 

 Under the regularity conditions, we have 
    $$\tilde{T}_2(\pi_k)-T_{2N}^{*}(\pi_k)= o_p(1),$$
where 
 \begin{align*}
     \hat{T}_2(\pi_k) \coloneqq &  \int_{\mathcal{X}}\int_{\mathcal{X}}  \left\{\sqrt{N}|\hat{\tau}(x;\pi_k)-\hat{\tau}(x';\pi_k)|\right\} \frac{\hat{w}(x, x', \pi_k)}{2}dxdx' 
 \end{align*}
\begin{align*}
    T_{2N}^{*}(\pi_k):= &\int_{\mathcal{X}} \int_{\mathcal{X}} \Bigg\{\sqrt{N}\Bigg|\Gamma(x, x',\pi_k)+ [\tau_N(x,\pi_k)-\tau_N(x',\pi_k)]
    -\mathbbm{E}[\tau_N(x,\pi_k)-\tau_N(x',\pi_k)]\Bigg|\Bigg\}\frac{w(x,x',\pi_k)}{2} dxdx'
\end{align*}
and 
$$\tau_N(x,\pi)=\tau_{N0}(x,\pi)+ \zeta_N(x, \pi).$$
Hence, under the null hypothesis such that $\tau(x;\pi_k)=\tau(x',\pi_k)$ on $\mathcal{X}\times \Pi$, we have $$\tilde{T}_2(\pi_k)=T_{2N}(\pi_k) +o_p(1),$$ where 
    $$T_{2N}(\pi_k):=   \int_{\mathcal{X}}  \int_{\mathcal{X}}\left\{\sqrt{N}\left|[\tau_N(x,\pi_k)-\tau_N(x',\pi_k)]-\mathbbm{E}[\tau_N(x,\pi_k)-\tau_N(x',\pi_k)]\right|\right\}\frac{w(x,x',\pi_k)}{2} dxdx'$$
and 
\begin{align*}
   \tau_N(x,\pi):=&\frac{1}{Nh^d}\sum_{i=1}^{N}\Bigg( \{Y-\mathbbm{E}[Y|X=x,\Pi=\pi, T=1]\}\frac{T_i\cdot\mathbbm{1}(\Pi=\pi)}{P_1(x,\pi_k)}\\
   &-\{Y-\mathbbm{E}[Y|X=x,\Pi=\pi,T=0]\}\frac{(1-T_i)\cdot\mathbbm{1}(\Pi=\pi)}{P_0(x,\pi_k)} \Bigg)\cdot K\left(\frac{x-X_i}{h}\right).
\end{align*}

\begin{theorem} \label{main theorem 2}
Under the regularity conditions, we have  that
$$\frac{T_{2N}(\pi_k)-\mathrm{Bias}(T_{2N}(\pi_k))}{\sigma_{2N}(\pi_k)} \overset{d}{\to} N(0,1),$$ where \,\,$\mathrm{Bias}(T_{2N}(\pi_k))$ is the asymptotic bias of \,\,$T_{2N}(\pi_k),$ and
${\sigma}_{2N}^2(\pi_k):=\lim_{N\to \infty} Var(T_{2N}(\pi_k))$ under the null.
\end{theorem}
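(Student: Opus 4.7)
The plan is to mirror the proof of Theorem \ref{main theorem} (deferred to Appendix \ref{appen c}), which adapts the Poissonization machinery of \cite{gine2003bm} and \cite{chang2015nonparametric} to the kernel-based CATE estimator under the working-independence reduction. After the uniform approximation results stated just before Theorem \ref{main theorem 2} (which replace $\hat\tau(x,\pi_k)-\hat\tau(x',\pi_k)$ by $\tau_N(x,\pi_k)-\tau_N(x',\pi_k)$ up to $o_p(N^{-1/2})$ uniformly in $(x,x')\in\mathcal{X}^2$), the whole statistic becomes an explicit linear functional of the underlying i.i.d.\ sample, and it suffices to prove a CLT for $T_{2N}(\pi_k)$ itself. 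The new wrinkle relative to Theorem \ref{main theorem} is that the integration is over $\mathcal{X}\times\mathcal{X}$ rather than over $\mathcal{X}$, which produces a richer dependence structure among the integrands at different pairs $(x,x')$.

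First, I would introduce an independent Poisson sample size $N'\sim\mathrm{Poisson}(N)$ and form the Poissonized version $T^P_{2N}(\pi_k)$, which inherits the infinite-divisibility and independent-increments properties of the Poisson process. Second, I would truncate the integration to a compact set $B\subset\mathcal{X}$, showing via a moment-type tail bound that $T^P_{2N}(\pi_k;B)-T^P_{2N}(\pi_k)\to 0$ as $B\uparrow\mathcal{X}$. Third, I would apply a Lyapunov CLT to the truncated Poissonized functional, using the representation of each integrand $|[\tau_N(x,\pi_k)-\tau_N(x',\pi_k)]-\mathbbm{E}[\cdot]|$ as a centered absolute value of an asymptotically Gaussian object whose conditional covariance is concentrated near the diagonal $\{x=x'\}$, as already spelled out in the derivation of \eqref{variance 2} in Appendix \ref{bias and var}. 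Fourth, I would de-Poissonize through the coupling argument of \cite{gine2003bm} (their Lemma 2.1), exploiting the approximate affine dependence of $T_{2N}(\pi_k)$ on the sample size after linearization. Letting $B\uparrow\mathcal{X}$ then finishes the proof, with the bias and variance matching the expressions $\mathrm{Bias}(T_{2N}(\pi_k))$ and $\sigma^2_{2N}(\pi_k)$ from Appendix \ref{bias and var}.

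The main obstacle is the covariance calculation underlying the CLT step. In the single-integral case of Theorem \ref{main theorem}, the covariance of the integrand at $x$ and at $x''$ is supported on $|x-x''|\leq h$ and has a fairly transparent two-point structure. For the double integral, the covariance of the integrand at $(x,x')$ and at $(x'',x''')$ is non-negligible as soon as any one of the four pairwise differences $x-x''$, $x-x'''$, $x'-x''$, $x'-x'''$ lies in $[-h,h]^d$; the variance derivation in Appendix \ref{bias and var} enumerates all fifteen resulting indicator products and argues that only the four ``single-near-diagonal'' terms survive the limit $h\to 0$, because the probability of two or more coincidences shrinks like an extra factor of $h^d$ after changes of variables. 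Rigorously justifying this vanishing requires careful changes of variable, uniform control of the conditional covariances of $\tau_N$ via Assumptions \ref{regularity conditions}(e)--(f), and the bounded convergence theorem. Once this is in hand, verification of the Lyapunov moment condition is routine under Assumption \ref{regularity conditions}(f) and the argument closes.
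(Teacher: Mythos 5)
Your plan coincides with the paper's own treatment: the paper explicitly omits the proof of Theorem \ref{main theorem 2}, noting only that it is ``very lengthy but similar to Theorem \ref{main theorem}'' --- i.e., the same uniform-approximation, Poissonization, truncation-to-$B$, CLT, de-Poissonization, $B\uparrow\mathcal{X}$ route from \cite{gine2003bm} and \cite{chang2015nonparametric} that you outline --- and you correctly isolate the one genuinely new ingredient, namely the double-integral covariance structure in which, of the fifteen coincidence patterns among the four pairwise differences, only the four single-near-diagonal terms survive as $h\to 0$, exactly as in the derivation of $\hat{\sigma}^2_2$ in Appendix \ref{bias and var}. The only cosmetic imprecision is that the CLT step in the paper's template is not a plain Lyapunov CLT applied to the functional but Shergin's CLT for the one-dependent random field obtained by partitioning $B$ into cells of mesh $h$, proved jointly with the Poisson count $U_N$ (which is what licenses de-Poissonization); since the $r$-th moment condition with $2<r<3$ it requires is of Lyapunov type, your step is substantively the same.
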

\begin{proof}
    The proof of Theorem \ref{main theorem 2} is also very lengthy but similar to Theorem \ref{main theorem}. I omit the proof  to save space.
\end{proof}

Now based on the fact that $\hat{T}_2=T_{2N} +o_p(1),$  and Theorem \ref{main theorem 2},
\begin{align}
    \frac{\Tilde{T}_{2}(\pi_k)-\mathrm{Bias}(\Tilde{T}_{2}(\pi_k))}{\tilde{\sigma}_2(\pi_k)} \overset{d}{\to} N(0,1),
\end{align} where $\mathrm{Bias}(\Tilde{T}_{2}(\pi_k))=\mathrm{Bias}({T}_{2N}(\pi_k))$ is the asymptotic bias of $\Tilde{T}_{2}(\pi_k)$  and
$\tilde{\sigma}^2_2(\pi_k):=\lim_{N\to \infty} Var(\Tilde {T}_{2}(\pi_k))=\lim_{N\to \infty} Var( {T}_{2N}(\pi_k)) $ under the null.

Finally, observe that our  test statistic $\hat{T}_2$ can be written as 
$$\hat{T}_2=\sum_{k=1}^K \Tilde{T}_{2}(\pi_k),$$ which asymptotically is the sum of independent normal variables with $\{\mathrm{Bias}(\Tilde{T}_{2}(\pi_k)),\pi_k,\in \boldsymbol{\Pi}\}$ as the asymptotic biases and, $\{\tilde{\sigma}^2_1(\pi_k,),\pi_k,\in \boldsymbol{\Pi}\}$ as the asymptotic variances under the null hypothesis. As a result,  
$$\frac{\hat{T}_{2}-\sum_{k=1}^K\mathrm{Bias}(\Tilde{T}_{2}(\pi_k))}{\hat{\sigma}_2} \overset{d}{\to} N(0,1)$$
where  $\sum_{k=1}^K\mathrm{Bias}(\hat{T}_{2})=:a_2,$ $\hat{\sigma}^2_2(\pi_k):=\lim_{N\to \infty} \widehat{Var}(\hat{T}_{2}),$ with $a_2$ and  $\hat{\sigma}^2_2$ defined in Proposition \ref{normality} (b). So, one can view  $\hat{\sigma}^2_1$ the sum of the asymptotic variances of $\Tilde{T}_{2}(\pi_k).$

\subsection{Proof of Theorem \ref{valid size}}

\begin{proof}
I prove  the first part  of Theorem \ref{valid size}. 
To save space, I omit the proof of the second part of Theorem \ref{valid size} because  it is similar to the first part.
\begin{align*}
    \begin{split}
        \Pr(\hat{S}_1>z_{1-\alpha})=&\Pr(\hat{T}_{1}>a_{1} +\hat{\sigma}_1z_{1-\alpha})\\
        =&1 - \Pr(\hat{T}_{1}\leq a_{1} +\hat{\sigma}_1z_{1-\alpha})\\
        \to& 1-(1-\alpha)\\
        =& \alpha.
    \end{split}
\end{align*}
 The convergence to $1- (1-\alpha)$ follows from Theorem  \ref{main theorem}. 
\end{proof}

\subsection{Proof of Theorem \ref{power theorem}}
\begin{proof}
 I prove the first part  of Theorem \ref{power theorem}. We can use similar arguments to prove the second part.
\begin{align*}
 \Pr(\hat{S}_1>z_{1-\alpha})  =& \Pr(\hat{T}_{1}>a_{1} +\hat{\sigma}z_{1-\alpha})\\
 =& \Pr\left(\frac{\hat{T}_{1}}{\sqrt{N}}>\frac{a_{_{1}} +\hat{\sigma}z_{1-\alpha}}{\sqrt{N}}\right)\\
 =& \Pr\left(\frac{\hat{T}_{1}}{\sqrt{N}}>0\right)-\Pr\left(0<\frac{\hat{T}_{1}}{\sqrt{N}}< \frac{a_{_{1N}} +\hat{\sigma}z_{1-\alpha}}{\sqrt{N}}\right)\\
 =& \Pr\left(\frac{\hat{T}_{1}}{\sqrt{N}}>0\right)-o(1)\\
 \to& 1,
\end{align*}
where the third equality holds because $(a_{_{1}} +\hat{\sigma}z_{1-\alpha})/\sqrt{N}= o(1)$ since $\sqrt{Nh^d}\to \infty$ as $C\to \infty.$ The last convergence to one follows from the definition of the alternative hypothesis, i.e.,  $H^{\Pi}_{1}: \int_{\mathcal{X}} \sum_{k =1}^K \sum_{j>k}^K \left\{|\tau(x;\pi_k)-\tau(x;\pi_j)|\right\} w(x, \pi_k, \pi_j)dx>0.$ 
\end{proof}

\subsection{Proof of Theorem \ref{local power theorem}}
\begin{proof} I prove the first part (i) of Theorem \ref{local power theorem}. We can use similar arguments to prove the second part (ii).
Under $ H^{\Pi}_{a}: \tau(x;\pi)- \tau(x;\pi')= C^{-1/2}h^{-d/4} N_0^{-1/2}\delta_1(x, \pi, \pi')$,  with a slight modification of the arguments in the proof of Theorem \ref{main theorem}, I can also show that  under $H_{1a},$  $T_{1N}^*(\pi_k,\pi_j) $ in \eqref{TNstar} is asymptotically normal, i.e., 
\begin{align}\label{normality under local alternatives}
    \frac{T_{1N}^*(\pi_k,\pi_j)-\mathbbm{E}\left[T_{1N}^{*}(\pi_k,\pi_j)\right]}{{\sigma}_{1N}(\pi_k,\pi_j)}\overset{d}{\to}N(0,1),
\end{align}
where 
$T_{1N}^{*}(\pi_k,\pi_j):=$  
$$\int_{\mathcal{X}}\left\{\left|h^{-d/4}\cdot\delta_1(x, \pi_k, \pi_j)+ \sqrt{N}\Big([\tau_N(x,\pi_k)-\tau_N(x,\pi_j)]-\mathbbm{E}[\tau_N(x,\pi_k)-\tau_N(x,\pi_j)]\Big)\right|\right\}w(x,\pi_j,\pi_k) dx,$$ 
$$\mathbbm{E}\left[T_{1N}^{*}(\pi_k,\pi_j)\right]= \int_{\mathcal{X}} \mathbbm{E}[|\mathbbm{Z}_1\cdot h^{-d/2}\sqrt{\rho_2(x, \pi_k, \pi_j)}+ h^{-d/4}\delta(x, \pi_k, \pi_j)|]\cdot w(x, \pi_k, \pi_j)dx,$$
 and,  ${\sigma}^2_{1N}(\pi_k,\pi_j):=Var(T_{1N}^{*}(\pi_k,\pi_j))$ which is not affected by the restrictions of  the alternative hypothesis. Thus,
 from Lemma \ref{TandTstar}  and \eqref{normality under local alternatives},  under $H^{\Pi}_{a},$ we have   
 \begin{align}
    \frac{\Tilde{T}_{1}(\pi_k,\pi_j)-\mathbbm{E}\left[\Tilde T_{1}(\pi_k,\pi_j)\right]}{\tilde{\sigma}_1(\pi_k,\pi_j)} \overset{d}{\to} N(0,1),
\end{align}where $\mathbbm{E}\left[\Tilde T_{1}(\pi_k,\pi_j)\right]=\mathbbm{E}\left[T_{1N}^{*}(\pi_k,\pi_j)\right]$ is the asymptotic bias of $\Tilde{T}_{1}(\pi_k,\pi_j)$ and 
$\tilde{\sigma}^2_1(\pi_k,\pi_j):=\lim_{N\to \infty} Var(\Tilde {T}_{1}(\pi_k,\pi_j))=\lim_{N\to \infty} Var(T_{1N}^{*}(\pi_k,\pi_j))$ under $H^{\Pi}_{a}.$

As a result, under $H^{\Pi}_{a},$  
$$\frac{\hat{T}_{1}-\sum_{k=1}^K\sum_{j>k}^K\mathbbm{E}\left[T_{1N}^{*}(\pi_k,\pi_j)\right]}{\hat{\sigma}_1} \overset{d}{\to} N(0,1)$$
where, $\hat{\sigma}^2_1:=\lim_{N\to \infty} \widehat{Var}(\hat{T}_{1}),$ as  defined in Proposition \ref{normality}.

Moreover, analogous to arguments in the proof of Theorem 4.3 in \cite{chang2015nonparametric}, we have 
$$\lim_{N\to \infty} \{\mathbbm{E}\left[T_{1N}^{*}(\pi_k,\pi_j)\right]- \mathrm{Bias}(T_{1N}(\pi_k,\pi_j))\}= \frac{1}{2\sqrt{\pi}}\int\delta^2(x, \pi_k, \pi_j)dx$$ which gives us 
$$\lim_{N\to \infty} \left\{\sum_{k=1}^K\sum_{j>k}^K\mathbbm{E}\left[T_{1N}^{*}(\pi_k,\pi_j)\right]- \sum_{k=1}^K\sum_{j>k}^K\mathrm{Bias}(T_{1N}(\pi_k,\pi_j))\right \}= \frac{1}{2\sqrt{\pi}}\int\sum_{k=1}^K\sum_{j>k}^K\delta^2(x, \pi_k, \pi_j)dx$$

Thus, under $H^{\Pi}_{a},$ 
\begin{align*}
    &\Pr(\hat{S}_1>z_{1-\alpha})\\
    &=\Pr(\hat{T}_{_{1}}> {a}_{_{1}} + \hat{\sigma}_1z_{1-\alpha})\\
    &= \Pr\Bigg(\frac{\hat{T}_{_{1}}- \sum_{k=1}^K\sum_{j>k}^K\mathbbm{E}\left[T_{1N}^{*}(\pi_k,\pi_j)\right]}{\hat\sigma_1}> \frac{-\sum_{k=1}^K\sum_{j>k}^K\mathbbm{E}\left[T_{1N}^{*}(\pi_k,\pi_j)\right]+a_{_{1}}}{\hat\sigma_1} + z_{1-\alpha} \Bigg)\\
    &= \Pr\Bigg(\frac{\hat{T}_{_{1}}- \sum_{k=1}^K\sum_{j>k}^K\mathbbm{E}\left[T_{1N}^{*}(\pi_k,\pi_j)\right]}{\hat\sigma_1}> \frac{-\sum_{k=1}^K\sum_{j>k}^K\mathbbm{E}\left[T_{1N}^{*}(\pi_k,\pi_j)\right]+\sum_{k=1}^K\sum_{j>k}^K\mathrm{Bias}(T_{1N}(\pi_k,\pi_j))}{\hat\sigma_1} + z_{1-\alpha} \Bigg)\\
    &\to \Pr\Bigg(\frac{\hat{T}_{_{1}}- \sum_{k=1}^K\sum_{j>k}^K\mathbbm{E}\left[T_{1N}^{*}(\pi_k,\pi_j)\right]}{\sigma_1}> \frac{-\int\sum_{k=1}^K\sum_{j>k}^K\delta^2(x, \pi_k, \pi_j)}{2\sqrt{\pi}\sigma_1} + z_{1-\alpha} \Bigg)\\
    &= 1- \Phi\Bigg(z_{1-\alpha}- \frac{1}{\sqrt{2\pi}\sigma_1}\int\sum_{k=1}^K\sum_{j=1}^K\delta^2(x, \pi_k, \pi_j)dx\Bigg)
\end{align*}
\end{proof}

\section{Proof of Theorem \ref{main theorem}} \label{appen c}
Recall that Theorem \ref{main theorem} says that
under the regularity conditions, we have  that
$$\frac{T_{1N}(\pi_k,\pi_j)-\mathrm{Bias}(T_{1N}(\pi_k,\pi_j))}{{\sigma}_{1N}(\pi_k,\pi_j)} \overset{d}{\to} N(0,1),$$ where
$T_{1N}(\pi_k,\pi_j):=   \int_{\mathcal{X}}\left\{\sqrt{N}\left|[\tau_N(x,\pi_k)-\tau_N(x,\pi_j)]-\mathbbm{E}[\tau_N(x,\pi_k)-\tau_N(x,\pi_j)]\right|\right\}w(x,\pi_j,\pi_k),$\\
$\mathrm{Bias}(T_{1N}(\pi_k,\pi_j))$ is the asymptotic bias of \,\,$T_{1N}(\pi_k,\pi_j),$ and
${\sigma}_{1N}^2(\pi_k,\pi_j):=\lim_{N\to \infty} Var(T_{1N}(\pi_k,\pi_j))$ under the null.

The proof relies on the proofs in   \cite{gine2003bm} and \cite{chang2015nonparametric}. I borrow most of the notation from  the proof of Lemma B.10 in \cite{chang2015nonparametric}.  To begin, I restate the following fact which  is  Fact 6.1 in \cite{gine2003bm} 

\begin{fact}
 Let $\{(W_i,V_i)': i=1,\dots, N\}$ be a sequence of iid random vectors in $\mathbbm{R}^2$ such that each component has mean zero and variance one and finite absolute moments of the third order. Also, let $(Z_1, Z_2)'$  be a bivariate normal with $\mathbbm{E}[Z_1]=\mathbbm{E}[Z_2]= 0$, $Var(Z_1)=Var(Z_2)= 1$ and $\mathrm{Cov}(Z_1, Z_2)=\mathrm{Cov}(W_i,V_i)=\rho$. Then there exist universal positive constants $A_1, A_2$ and $A_3$ such that
 \begin{align}
     \left|\mathbbm{E}\left| \frac{\sum_{i=1}^N W_i}{\sqrt{N}} \right|- \mathbbm{E}|Z_1|      \right| \leq \frac{A_1}{\sqrt{N}}\mathbbm{E}|W_i|^3
 \end{align}
  and, whenever $\rho^2< 1$
  \begin{align}
     \left|\mathbbm{E}\left| \frac{\sum_{i=1}^N W_i}{\sqrt{N}}\cdot  \frac{\sum_{i=1}^N V_i}{\sqrt{N}} \right|- \mathbbm{E}|Z_1Z_2|      \right| \leq \frac{A_2}{(1-\rho^2)^{3/2}\sqrt{N}}(\mathbbm{E}|W_i|^3 + \mathbbm{E}|V_i|^3)
 \end{align}
 and 
  \begin{align}\label{Berry essen 3}
     \left|\mathbbm{E} \left[\frac{\sum_{i=1}^N W_i}{\sqrt{N}}  \cdot\left| \frac{\sum_{i=1}^N V_i}{\sqrt{N}} \right|  \right]- \mathbbm{E}[Z_1|Z_2|]\right| \leq \frac{A_3}{(1-\rho^2)^{3/2}\sqrt{N}}(\mathbbm{E}|W_i|^3 + \mathbbm{E}|V_i|^3)
 \end{align}
 \end{fact}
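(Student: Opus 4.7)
The plan is to prove all three bounds by reducing to classical Berry-Esseen theorems (one-dimensional and bivariate) and then handling the non-smooth test functionals $x\mapsto|x|$, $(x,y)\mapsto|xy|$, and $(x,y)\mapsto x|y|$ through integration against their level sets. Throughout I write $S_N^W := N^{-1/2}\sum_{i=1}^N W_i$ and $S_N^V := N^{-1/2}\sum_{i=1}^N V_i$, so that the pair $(S_N^W, S_N^V)$ has mean zero, unit marginal variances, and covariance $\rho$, exactly matching $(Z_1, Z_2)$.

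For the first inequality, since $x\mapsto|x|$ is $1$-Lipschitz, one has $\left|\mathbbm{E}|S_N^W| - \mathbbm{E}|Z_1|\right| \leq W_1(S_N^W, Z_1)$ in Kantorovich-Rubinstein distance, and the rate $W_1(S_N^W, Z_1) \leq A_1 N^{-1/2} \mathbbm{E}|W|^3$ is classical. One may derive it directly by combining $\mathbbm{E}|X| = \int_0^\infty[P(X>t)+P(X<-t)]\,dt$, the one-dimensional Berry-Esseen $\sup_t|P(S_N^W \leq t) - \Phi(t)| \leq C \mathbbm{E}|W|^3/\sqrt{N}$, and a Chebyshev truncation at level $\mathbbm{E}(S_N^W)^2 = 1$ to absorb the tail of the integral. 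Alternatively, mollifying $|\cdot|$ with a Gaussian of bandwidth $\varepsilon \asymp N^{-1/2}$ and running a Lindeberg swap against the smoothed function gives the same $N^{-1/2}$ rate with the same dependence on $\mathbbm{E}|W|^3$.

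For the second and third inequalities I apply a bivariate Berry-Esseen bound over convex Borel sets (Bhattacharya and Ranga Rao),
\[
\sup_{B\text{ convex}} \left|P((S_N^W, S_N^V) \in B) - P((Z_1, Z_2) \in B)\right| \leq \frac{C\,(\mathbbm{E}|W|^3 + \mathbbm{E}|V|^3)}{(\det\Sigma)^{3/2}\sqrt{N}},
\]
where $\Sigma$ is the covariance of $(W,V)$ and $\det\Sigma = 1 - \rho^2$; this is precisely where the denominator $(1-\rho^2)^{3/2}$ originates. I then express the target differences in expectation through their level sets: for $g\in\{xy,\,x|y|\}$, writing $g_N := g(S_N^W,S_N^V)$ and $g_Z := g(Z_1,Z_2)$,
\[
\mathbbm{E} g_N - \mathbbm{E} g_Z = \int_0^\infty\!\!\bigl[P(g_N>t) - P(g_Z>t)\bigr]\,dt \;-\; \int_0^\infty\!\!\bigl[P(g_N<-t) - P(g_Z<-t)\bigr]\,dt,
\]
and each level set $\{g\gtrless\pm t\}$ is a finite union of convex regions (two convex pieces for $\{xy>t\}$ bounded by hyperbolas, and splitting by the sign of $y$ for $\{x|y|>t\}$). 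Pointwise applications of the bivariate bound then give an integrand of order $(1-\rho^2)^{-3/2}(\mathbbm{E}|W|^3+\mathbbm{E}|V|^3)/\sqrt{N}$ on $\{|t|\leq T\}$, while the tail contribution is controlled by Chebyshev via $P(|g_N|>t)\lesssim t^{-1}$ and an analogous bound for $g_Z$, combined with Young's inequality $\mathbbm{E}|WV|\leq(\mathbbm{E}|W|^3 + \mathbbm{E}|V|^3)/2$. The third inequality proceeds identically with $g(x,y) = x|y|$ in place of $xy$.

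The main obstacle will be the last step: making the bivariate tail truncation deliver the sharp $N^{-1/2}$ rate without logarithmic losses, and keeping explicit track of how the bivariate Berry-Esseen constant degrades as $\rho \to \pm 1$ so that the stated $(1-\rho^2)^{-3/2}$ dependence is correct (not merely upper bounded by a cruder power). Once the balance between the Berry-Esseen contribution on $\{|g|\leq T\}$ and the tail probabilities is optimized over $T$, the three universal constants $A_1, A_2, A_3$ emerge directly from the numerical constants in the one- and two-dimensional Berry-Esseen theorems and the Young/Chebyshev truncation constants, which are all absolute.
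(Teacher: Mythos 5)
Your proposal should first be calibrated against what the paper actually does: the paper contains no proof of this statement at all --- it is restated verbatim as Fact 6.1 of \cite{gine2003bm}, with the proof deferred entirely to that reference. So any correct self-contained argument would already go beyond the paper. Your first inequality is indeed handled correctly: the Kantorovich--Rubinstein bound $\lvert\mathbbm{E}|S_N^W|-\mathbbm{E}|Z_1|\rvert\leq W_1(S_N^W,Z_1)$ combined with the classical $W_1$-Berry--Esseen rate $W_1(S_N^W,Z_1)\leq A_1\,\mathbbm{E}|W|^3/\sqrt{N}$ (available, e.g., via Stein's method or the Zolotarev $\zeta_1$ metric) gives exactly the stated bound, and this part needs no repair.

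The genuine gap is in the second and third inequalities, and it is precisely the obstacle you flag at the end without resolving: the level-set decomposition plus the \emph{uniform} Bhattacharya--Ranga Rao bound cannot deliver the $N^{-1/2}$ rate, and the loss is polynomial, not logarithmic. Concretely, on $[0,T]$ your integrand is bounded only by the constant $\epsilon_N \asymp (1-\rho^2)^{-3/2}(\mathbbm{E}|W|^3+\mathbbm{E}|V|^3)N^{-1/2}$, contributing $T\epsilon_N$; on the tail, since only third absolute moments are assumed, the best available control is of Rosenthal--Markov type: $\mathbbm{E}|S_N^W S_N^V|^{3/2}\leq(\mathbbm{E}|S_N^W|^3\,\mathbbm{E}|S_N^V|^3)^{1/2}\lesssim \mathbbm{E}|W|^3+\mathbbm{E}|V|^3$, whence $P(|g_N|>t)\lesssim t^{-3/2}$ and a tail contribution of order $T^{-1/2}$. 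Optimizing over $T$ yields a total error of order $\epsilon_N^{1/3}\asymp N^{-1/6}$, so no choice of truncation level recovers $\sqrt{N}$; the Gaussian-side exponential decay of $P(|Z_1Z_2|>t)$ does not help because the pre-limit product $S_N^W S_N^V$ has no such decay under third moments. The missing ingredient is a tail-sensitive, \emph{non-uniform} CLT estimate for expectations of functionals of polynomial growth --- exactly the type of result (due to Sweeting, 1977, on rates of convergence of $\mathbbm{E}f(S_N)$ for functions with controlled growth and local smoothness) from which \cite{gine2003bm} obtain their Fact 6.1. Without importing such a bound, your scheme proves a strictly weaker statement with rate $N^{-1/6}$, and the claimed universal constants $A_2,A_3$ with the sharp $(1-\rho^2)^{-3/2}$ dependence do not emerge from the argument as written.
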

 
 Now I use the "Poissonization"  technique of \cite{gine2003bm}. Let $\mathcal{N}$ denote a Poisson random variable with mean  $N$ defined on the same probability space as the sequence $\{W_i:i\geq 1\}:=\{(Y_i,X_i,T_i, \Pi_i ): i\geq 1\}$ and independent of this sequence.
 Define
 $$\chi_{_{t,k}}:=\frac{\mathbbm{E}[Y|X=x,\Pi=\pi_k, T=t]}{P_t(x,\pi_k)},$$
 $$\chi(\pi_k, x,T):=[\chi_{1,k}(\pi_k, x)\cdot T-\chi_{0,k}(\pi_k, x)\cdot(1-T)]\cdot\mathbbm{1}(\Pi=\pi_k),$$
 and 
 $$\psi(W_i,x,\pi_k):=[Y_i\cdot\mathbbm{1}(\Pi_i=\pi_k)\phi(x,\pi_k,T_i)-\chi(\pi_k, x,T_i)]\frac{1}{h^d}K\left(\frac{x-X_i}{h}\right)+\tau(x;\pi_k).$$
Then
    $$\tau_{_N}(x,\pi_k)=\tau_{_{N0}}(x,\pi_k)+\zeta_{_N}(x,\pi_k)=\frac{1}{N}\sum_{i=1}^{N}\psi(W_i,x,\pi_k).$$
Hence define, 
\begin{align*}
   \Gamma_N(x, \pi_k, \pi_j):=&\tau_{_N}(x,\pi_k)-\tau_{_N}(x,\pi_j)\\
   =&\frac{1}{N}\sum_{i=1}^{N}\psi(W_i,x,\pi_k) -\frac{1}{N}\sum_{i=1}^{N}\psi(W_i,x,\pi_j)\\
   =&\frac{1}{N}\sum_{i=1}^{N}\Theta(W_i,x,\pi_k, \pi_j), 
\end{align*}

where
\begin{align*}
    \Theta(W_i,x,\pi_k, \pi_j):=&\Bigg\{[Y_i\cdot[\mathbbm{1}(\Pi_i=\pi_k)\phi(x,T_i,\pi_k)-\mathbbm{1}(\Pi_i=\pi_j)\phi(x,T_i,\pi_j)]\\
    &-[\chi(\pi_k, x,T_i)-\chi(\pi_j, x,T_i)]\Bigg\}
    \cdot \frac{1}{h^d}K\left(\frac{x-X_i}{h}\right)
+\Gamma(x, \pi_k, \pi_j).
\end{align*}

Now we will Poissonize $\Gamma_N(x, \pi_k, \pi_j).$  To do so, again define
$$\Gamma_{\mathcal{N}}(x,\pi_k, \pi_j)=\frac{1}{N}\sum_{i=1}^{\mathcal{N}}\Theta(W_i,x,\pi_k, \pi_j)$$ where the empty sum is defined to be zero.
Note that by the law of iterated expectation and variance,
\begin{align}
    \mathbbm{E}\Gamma_{\mathcal{N}}(x,\pi_k, \pi_j)=\mathbbm{E}\Gamma_N(x, \pi_k, \pi_j)= \mathbbm{E}[\psi(W,x,\pi_k)] -\mathbbm{E}[\psi(W,x,\pi_j)],
\end{align}

\begin{align}
   \kappa_{\tau, N}(x,\pi_k,\pi_j):=NVar(\Gamma_{\mathcal{N}}(x,\pi_k, \pi_j))= \mathbbm{E}[\Theta^2(W,x,\pi_k, \pi_j)],
\end{align}

\begin{align}
   \kappa_{\tau, N}(x,\pi):=NVar(\tau_{\mathcal{N}}(x,\pi))
\end{align}

and 
\begin{align}
    NVar(\Gamma_N(x,\pi_k, \pi_j))= \mathbbm{E}[\Theta^2(W_i,x,\pi_k, \pi_j)]-\{\mathbbm{E}[\Theta(W_i,x,\pi_k, \pi_j)]\}^2.
\end{align}

Let $\epsilon \in \left(0,\int_{\mathcal{X}}f(x)dx\right)$
be an arbitrary constant. For constant $\{M_l>0: l=1,\dots, d\},$ let $\mathcal{B}(M)=\Pi_{l=1}^d[-M_l, M_l] \subset\mathcal{X}$ denote a Borel set in  $\mathbbm{R}^d$ with nonempty interior with finite Lebesgue measure $\lambda(\mathcal{B}(M)).$ For $v>0,$ define $\mathcal{B}(M,v)$ to be the $v$-contraction of $\mathcal{B}(M),$ i.e., $\mathcal{B}(M,v)=\{x \in \mathcal{B}(M):\inf_{y \in \mathbbm{R}^d \setminus B(M)}\{||x-y||\}\geq v\}, $  Choose $M,v>0$ and a Borel set $B$ such that
\begin{align}\label{M1}
    B \subset \mathcal{B}(M,v),
\end{align}

\begin{align}\label{M2}
 \int_{\mathbbm{R}^d\setminus\mathcal{B}(M)} f(x) dx:=\alpha >0,
\end{align}
\begin{align}\label{M3}
\int_{\mathcal{X}} f(x) dx -\int_{B} f(x)dx>\epsilon.
\end{align}
Such $M,v$ and $B$ exist by the absolute continuity of the  density $f$.
Lets define a  Poissonized version of  $T_{1N}(\pi_k,\pi_j)$ (restricted to B)  under the null hypothesis to be:
\begin{align*}
    T_{1N}^P(B):=& \int_{B}  \left\{\sqrt{N}\left|[(\Gamma_{\mathcal{N}}(x,\pi_k, \pi_j)]-\mathbbm{E}[(\Gamma_{\mathcal{N}}(x,\pi_k, \pi_j)]\right|\right\}w(x,\pi_j,\pi_k)dx\\
    &-\int_{B}  \left\{\sqrt{N}\mathbbm{E}\left|[(\Gamma_{\mathcal{N}}(x,\pi_k, \pi_j)]-\mathbbm{E}[(\Gamma_{\mathcal{N}}(x,\pi_k, \pi_j)]\right|\right\}w(x,\pi_j,\pi_k) dx.
\end{align*}

Let $$\sigma^2_{1N}(B)=\mathrm{Var}(T_{1N}^P(B)).$$
The following lemma shows that the variance of the approximated and poissonized version of $\hat{T}_1$ in $B$ converges to the variance of the "un-approximated" and  "un-poissonized" version  in $B$
\begin{lemma}\label{asymptotic variance}
 If the regularity conditions holds and $B$  satisfies (\ref{M1})-(\ref{M3}), then 
 \begin{align}
     \lim_{C\mapsto \infty}\sigma^2_{1N}(B)=\mathrm{P}(B)\cdot \mathrm{Var}(\hat{T}_1)= \sigma^2_{1,B},
 \end{align}
 where 
\begin{align}
  \sigma^2_{1,B}:=&\int_{-1}^1\int_{B_0}\mathrm{Cov}(|\sqrt{1-\rho^2(x,\pi_i, \pi_j,\pi_k, \pi_l,r)}\mathbbm{Z}_1 + \rho(x,\pi_i, \pi_j,\pi_k, \pi_l,r)\mathbbm{Z}_2 |, |\mathbbm{Z}_2|)\nonumber\\
  &\cdot \sqrt{\rho_2(x,\pi_i,\pi_j)\rho_2(x,\pi_k,\pi_l)}\cdot w(x,\pi_i, \pi_j) w(x',\pi_k, \pi_l)dxdr. 
\end{align}
\end{lemma}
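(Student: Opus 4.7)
The plan is to exploit the infinite divisibility of the Poisson process to reduce the variance of $T_{1N}^P(B)$ to a double integral over $B\times B$, then change variables to extract the effective bandwidth, and finally apply a Berry--Esseen-type approximation coupled with dominated convergence to identify the Gaussian limit.

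First, I would expand
\begin{equation*}
\sigma^2_{1N}(B) \;=\; \sum_{i,j,k,l} \int_{B}\!\!\int_{B} \mathrm{Cov}\!\Bigl(\sqrt{N}\bigl|\Gamma_{\mathcal{N}}(x,\pi_i,\pi_j)-\mathbb{E}\Gamma_{\mathcal{N}}(x,\pi_i,\pi_j)\bigr|,\;\sqrt{N}\bigl|\Gamma_{\mathcal{N}}(x',\pi_k,\pi_l)-\mathbb{E}\Gamma_{\mathcal{N}}(x',\pi_k,\pi_l)\bigr|\Bigr) w(x,\pi_i,\pi_j)\,w(x',\pi_k,\pi_l)\,dx\,dx'.
\end{equation*}
Since $\mathcal{N}$ is Poisson, the random field $\Gamma_{\mathcal{N}}$ is compound Poisson and has independent increments over disjoint sets of $W_i$; the only contribution to the joint covariance therefore comes from pairs $(x,x')$ for which the kernel supports overlap, i.e.\ $(x-x')/h \in \mathrm{supp}(K)$. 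This confines the double integral to a neighborhood of the diagonal of $B \times B$ of width $O(h)$.

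Second, I would apply the change of variables $r = (x-x')/h$, so $dx' = h^d dr$, which yields
\begin{equation*}
\sigma^2_{1N}(B) \;=\; \sum_{i,j,k,l} \int_{B}\!\!\int_{[-1,1]^d} h^d\, \mathbbm{1}\{x - r h \in B\}\, \mathrm{Cov}\!\Bigl(\sqrt{N}\bigl|\Delta_{\mathcal{N}}(x,\pi_i,\pi_j)\bigr|,\; \sqrt{N}\bigl|\Delta_{\mathcal{N}}(x-rh,\pi_k,\pi_l)\bigr|\Bigr) w(x,\pi_i,\pi_j)\,w(x-rh,\pi_k,\pi_l)\,dx\,dr,
\end{equation*}
where $\Delta_{\mathcal{N}}$ denotes the centered field. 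Using that $h^d\, \mathrm{Var}(\sqrt{N}\Delta_{\mathcal{N}}(x,\pi_i,\pi_j)) = h^d\, \kappa_{\tau,N}(x,\pi_i,\pi_j) \to \rho_2(x,\pi_i,\pi_j)$ by Assumption \ref{regularity conditions}(e) and Lemma \ref{Gine}, the extra $h^d$ is exactly what is needed to normalize the $1/h^d$ inside $\Theta(W,x,\pi,\pi')$. I would then apply the standardization $\Delta_{\mathcal{N}}/\sqrt{\mathrm{Var}(\Delta_{\mathcal{N}})}$ inside the covariance, factor out $\sqrt{\rho_2(x,\pi_i,\pi_j)\rho_2(x,\pi_k,\pi_l)}$, and combine with the weights so that the prefactor in the limit becomes $\sqrt{\rho_2(x,\pi_i,\pi_j)\rho_2(x,\pi_k,\pi_l)}\,w(x,\pi_i,\pi_j)\,w(x,\pi_k,\pi_l)$.

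Third, for fixed $(x,r)$ the standardized pair $\bigl(\sqrt{N}\Delta_{\mathcal{N}}(x,\pi_i,\pi_j),\sqrt{N}\Delta_{\mathcal{N}}(x-rh,\pi_k,\pi_l)\bigr)/\{\text{standard errors}\}$ is a Poissonized sum with pairwise correlation converging to $\rho(x,\pi_i,\pi_j,\pi_k,\pi_l,r)$, which is the kernel-overlap correlation (computable along the lines of the variance derivation in Appendix \ref{bias and var}). Applying the Berry--Esseen bound \eqref{Berry essen 3} --- whose third-moment factor is uniformly bounded thanks to Assumption \ref{regularity conditions}(f) and the boundedness of $K$ --- the covariance of the absolute values converges pointwise in $(x,r)$ to $\mathrm{Cov}(|\sqrt{1-\rho^{2}}\mathbbm{Z}_1 + \rho\mathbbm{Z}_2|,|\mathbbm{Z}_2|)$ for every $x$ at which $f$, $\rho_2$, $w$ are continuous, which by Assumption \ref{regularity conditions}(b)(c) is a.e.\ on $B$. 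I would then take the limit through the integral using dominated convergence: the integrand is dominated uniformly in $C$ by a multiple of $\|w\|_\infty^2 \sup_x\mathrm{Var}(|Z|)$, which is integrable on the bounded set $B \times [-1,1]^d$ by \eqref{M1}--\eqref{M3} and Assumption \ref{regularity conditions}(c). The main technical obstacle here will be justifying this domination uniformly in $C$ --- in particular, ensuring that $\hat{\rho}_2$ stays bounded away from zero on $B$ so that the inverse-variance weight $w$ remains controlled, which is precisely why the $v$-contraction condition \eqref{M1} was imposed. Identifying the pointwise limit gives the expression for $\sigma^2_{1,B}$, and recognizing it as $P(B)\cdot \mathrm{Var}(\hat{T}_1)$ (after summing over the index set) completes the proof.
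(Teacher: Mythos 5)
Your route is the same as the paper's up to its last step: the paper likewise uses the independent-increments property of the Poisson process to insert the indicator $\mathbbm{1}(h^{-1}(x-x')\in[-1,1]^d)$, standardizes by $\sqrt{\kappa_{\Gamma,N}}$ and replaces it by $h^{-d/2}\sqrt{\rho_2}$ via Lemma \ref{Gine}, changes variables $x'=x+th$, and identifies the limiting correlation with the kernel-overlap ratio. The genuine gap is in your third step. You claim the Berry--Esseen bound \eqref{Berry essen 3} applies with a uniformly bounded factor, crediting Assumption \ref{regularity conditions}(f) and the boundedness of $K$. But that bound carries the constant $A_3\,(1-\rho^2)^{-3/2}N^{-1/2}\bigl(\mathbbm{E}|W_i|^3+\mathbbm{E}|V_i|^3\bigr)$: the moment terms are indeed controlled (after standardization they contribute $O((Nh^d)^{-1/2})$), but the factor $(1-\rho^2)^{-3/2}$ degenerates precisely where your integral concentrates. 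As $r\to 0$ with coinciding index pairs $(i,j)=(k,l)$, the correlation $\rho^*_N$ of the standardized Poissonized sums tends to one, so the bound is vacuous on a neighborhood of the diagonal; worse, since $1-\rho^2(r)$ vanishes quadratically near $r=0$, integrating the bound in $r$ produces a nonintegrable singularity of order $\|r\|^{-3}$ in low dimensions. Your appeal to dominated convergence does not repair this as written, because the pointwise convergence you feed into it is itself obtained from the (near-diagonal invalid) application of \eqref{Berry essen 3}, with no argument controlling the region where $\rho^{*2}_N$ is close to one.

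The paper closes exactly this hole with the two-point perturbation device of \cite{gine2003bm}: it introduces independent symmetric two-point variables $\mathcal{Q}_1,\mathcal{Q}_2$ with variance $c(\epsilon_0)=(1+\epsilon_0)^2-1$, replaces $S_{\tau,\mathcal{N}}$ by $S^{\mathcal{Q}}_{\tau,\mathcal{N}}=(S_{\tau,\mathcal{N}}+2\mathcal{Q})/(1+\epsilon_0)$, which caps every correlation by $1/(1+\epsilon_0)^2<1$ uniformly over $(x,t)$ and all index combinations, applies \eqref{Berry essen 3} with a now-uniform constant to conclude $\Delta_{1N,\mathcal{Q}}=o(1)$ and $\Delta_{2N,\mathcal{Q}}=o(1)$, and finally lets $\epsilon_0\to 0$, using $|\bar{\sigma}^2_{N,\mathcal{Q}}-\bar{\sigma}^2_{N,0}|\to 0$ and $|\bar{\tau}^2_{N,\mathcal{Q}}-\bar{\tau}^2_{N,0}|\to 0$ together with the triangle inequality. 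If you prefer to stay closer to your own outline, an alternative repair is to excise a band $\{\|r\|\le\delta\}$ whose contribution is $O(\delta^d)$ uniformly in $C$ (the covariance of the standardized absolute values is $O(1)$ by Cauchy--Schwarz, as the paper notes), apply \eqref{Berry essen 3} pointwise off the band where $\limsup_N|\rho^*_N|<1$, and let $\delta\to 0$ at the end. One smaller misattribution: the weight appearing in $T^P_{1N}(B)$ is the true $w$, which is controlled on the compact set directly by Assumption \ref{regularity conditions}(c); the $v$-contraction \eqref{M1} is imposed so that kernel windows centered at points of $B$ remain inside $\mathcal{B}(M)$, which is what the independence and de-Poissonization arguments require, not to keep $\hat{\rho}_2$ away from zero.
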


\begin{proof}
Note that for each $(x,\pi_i,\pi_j),(x',\pi_k,\pi_l) \in \mathbbm{R}^d\times \Pi^2$ such that $||x-x'||>h,$ the random variables $\Gamma_{\mathcal{N}}(x,\pi_i, \pi_j)-\mathbbm{E}[\Gamma_{\mathcal{N}}(x,\pi_i, \pi_j)]$  and $\Gamma_{\mathcal{N}}(x',\pi_k, \pi_l)-\mathbbm{E}[\Gamma_{\mathcal{N}}(x',\pi_k, \pi_l)]$ are independent because they are functions of independent increments of a Poisson process and the kernel $K$ vanishes outside of the closed rectangle $[-1,1]^d$.
Therefore,
\begin{align*}
 \mathrm{Var}(T_{1N}^P(B))=& \int_{\mathcal{X}}\int_{\mathcal{X}} \mathrm{Cov}(\sqrt{N}|\Gamma_{\mathcal{N}}(x,\pi_i, \pi_j)-\mathbbm{E}[\Gamma_{\mathcal{N}}(x,\pi_i, \pi_j)]|,\sqrt{N}|\Gamma_{\mathcal{N}}(x',\pi_k, \pi_l)-\mathbbm{E}[\Gamma_{\mathcal{N}}(x',\pi_k, \pi_l)]|)\\
 &\cdot w(x,\pi_i, \pi_j) w(x',\pi_k, \pi_l)dxdx' \\ 
 =& \int_{\mathcal{X}}\int_{\mathcal{X}} \mathrm{Cov}(\sqrt{N}|\Gamma_{\mathcal{N}}(x,\pi_i, \pi_j)-\mathbbm{E}[\Gamma_{\mathcal{N}}(x,\pi_i, \pi_j)]|,\sqrt{N}|\Gamma_{\mathcal{N}}(x',\pi_k, \pi_l)-\mathbbm{E}[\Gamma_{\mathcal{N}}(x',\pi_k, \pi_l)]|)\\
 &\cdot \mathbbm{1}(h^{-1}(x-x')\in [-1,1]^d)\cdot w(x,\pi_i, \pi_j) w(x',\pi_k, \pi_l)dxdx',
\end{align*}

Set 
\begin{align}
    S_{\tau,\mathcal{N} }(x,\pi, \pi')=&\frac{\sqrt N\{\Gamma_{\mathcal{N}}(x,\pi, \pi')-\mathbbm{E}[\tau_{\mathcal{N}}(x,\pi, \pi')]\}}{\sqrt{\kappa_{\Gamma,N}(x,\pi,\pi')}}\nonumber\\
    =& \frac{\sqrt N\{\tau_{\mathcal{N}}(x,\pi )-\mathbbm{E}[\tau_{\mathcal{N}}(x,\pi )]\}}{\sqrt{\kappa_{\Gamma,N}(x,\pi,\pi')}} + \frac{\sqrt N\{\tau_{\mathcal{N}}(x,\pi' )-\mathbbm{E}[\tau_{\mathcal{N}}(x,\pi' )]\}}{\sqrt{\kappa_{\Gamma,N}(x,\pi,\pi')}}\nonumber\\
    \leq& \frac{\sqrt N\{\tau_{\mathcal{N}}(x,\pi )-\mathbbm{E}[\tau_{\mathcal{N}}(x,\pi )]\}}{\sqrt{\kappa_{\tau,N}(x,\pi)}} + \frac{\sqrt N\{\tau_{\mathcal{N}}(x,\pi' )-\mathbbm{E}[\tau_{\mathcal{N}}(x,\pi' )]\}}{\sqrt{\kappa_{\tau,N}(x,\pi')}} \nonumber\\
    =:& S_{\tau,\mathcal{N} }(x, \pi)+ S_{\tau,\mathcal{N} }(x,\pi').
\end{align}

Now, with  $\int_{B_0} dx< \infty,$ by the Lemma \ref{Gine}, we can show that
\begin{align}
    \sup_{x\in B_0 }\left| \sqrt{\kappa_{\Gamma,N}(x,\pi,\pi')} - h^{-d/2}\sqrt{\rho_2(x,\pi, \pi')}   \right|= O(h^{d/2}),
\end{align}
where 
\footnotesize
\begin{equation*}
 \rho_2(x,\pi, \pi')=\left\{ \sum_{t \in \{0,1\}}\sum_{
\pi_0\in \{\pi,\pi'\}}\frac{\mathbbm{E}[Y^2|X=x,T=t, \Pi=\pi_0]-(\mathbbm{E}[Y|X=x,T=t, \Pi=\pi_0] )^2}{P_t(x,\pi_0)}\right\}\cdot \int K^2(\xi)d\xi
\end{equation*}
\normalsize
We also have that
\begin{align}
     \int_{B_0}\int_{B_0}  \mathbbm{1}(h^{-1}(x-x')\in  [-1,1]^d)\cdot w(x,\pi_i, \pi_j) w(x',\pi_k, \pi_l)dxdx'=O(h^d). 
\end{align}
 And finally, by the  Cauchy Schwartz Inequality, we have 
 \begin{align}
    \sup_{(x,x') \in B_0 ^2 }\left|\mathrm{Cov}(| S_{\tau,\mathcal{N} }(x,\pi_i, \pi_j)|,| S_{\tau,\mathcal{N} }(x',\pi_k, \pi_l)|)    \right|= O(1), \,\, \forall\,\, (\pi_i, \pi_j, \pi_k, \pi_l) \in \boldsymbol{\Pi}^4. 
\end{align}
Therefore, we have 
$$\mathrm{Var}(T_{1N}^P(B))=\bar{\sigma}^2_{1N} + o(1),$$
where 
\begin{align*}
   \bar{\sigma}^2_{1N}:= \int_{\mathcal{X}}\int_{\mathcal{X}}  & \mathrm{Cov}(| S_{\tau,\mathcal{N} }(x,\pi_i, \pi_j)|,| S_{\tau,\mathcal{N} }(x',\pi_k, \pi_l)|)
  \cdot \sqrt{\rho_2(x,\pi_i,\pi_j)\rho_2(x',\pi_k,\pi_l)}\\
 &\times h^{-d}\cdot \mathbbm{1}(h^{-1}(x-x')\in  [-1,1]^d)\cdot w(x,\pi_i, \pi_j) w(x',\pi_k, \pi_l)dxdx'
\end{align*}

Now, for $b \in \{1,2\}$, let $Z_{bn}(x,\pi_i,\pi_j)=Z_{bn}(x,\pi_i)+Z_{bn}(x,\pi_j)$  where ($Z_{1n}(x,\pi_i)$, $Z_{2n}(x,\pi_i)$, $Z_{1n}(x,\pi_j)$, $Z_{2n}(x,\pi_j)$) are mean zero pairwise independent Gaussian processes.\\ Then  $(Z_{1n}(x,\pi_i,\pi_j), Z_{2n}(x',\pi_k,\pi_l))$ for $(x,\pi_i,\pi_j),(x',\pi_k,\pi_l) \in \mathbbm{R}^d\times\Pi^2,$ is  a mean zero bivariate Gaussian process. Now  for each $(x,\pi_i,\pi_j)\in \mathbbm{R}^d\times\Pi^2,(x',\pi_k,\pi_l) \in \mathbbm{R}^d\times\Pi^2,$  let \,\,$(Z_{1n}(x,\pi_i,\pi_j), Z_{2n}(x',\pi_k,\pi_l))$ and \,\,$(S_{\tau,\mathcal{N} }(x,\pi_i, \pi_j), S_{\tau,\mathcal{N} }(x',\pi_k, \pi_l))$ \,\,have the same covariance structure. That is,
$$(Z_{1n}(x,\pi_i,\pi_j), Z_{2n}(x,\pi_k,\pi_l))\overset{d}{=}\left(\sqrt{1-\rho^*_N(x,\pi_i, \pi_j,\pi_k, \pi_l)^2}\mathbbm{Z}_1 +\rho^*_N(x',\pi_i, \pi_j,\pi_k, \pi_l)\mathbbm{Z}_2, \mathbbm{Z}_2\right),$$
where $\mathbbm{Z}_1$ and $\mathbbm{Z}_2$ are independent standard normal random variables and 
 $$\rho^*_N(x,x',\pi_i, \pi_j,\pi_k, \pi_l):=\mathbbm{E}[S_{\tau,\mathcal{N} }(x,\pi_i, \pi_j) S_{\tau,\mathcal{N} }(x',\pi_k, \pi_l)].$$

Let 
\begin{align*}
   \bar{\tau}^2_{N,0}=&\int_{B_0}\int_{B_0}\mathrm{Cov}(| Z_{1n}(x,\pi_i,\pi_j)|,| Z_{2n}(x',\pi_k,\pi_l)|)
  \cdot \sqrt{\rho_2(x,\pi_i,\pi_j)\rho_2(x',\pi_k,\pi_l)}\\
 & \times h^{-d}\cdot\mathbbm{1}(h^{-1}(x-x')\in  [-1,1]^d)\cdot w(x,\pi_i, \pi_j) w(x',\pi_k, \pi_l)dxdx'.
\end{align*}
 By a change of variables $x'=x+th,$ we can write

\begin{align*}
   \bar{\tau}^2_{N,0}=&\int_{B_0}\int_{[-1,1]^d}\mathrm{Cov}(| Z_{1n}(x,\pi_i,\pi_j)|,| Z_{2n}(x+th,\pi_k,\pi_l)|)\\
 &  \cdot \sqrt{\rho_2(x,\pi_i,\pi_j)\rho_2(x+th,\pi_k,\pi_l)}\times\mathbbm{1}(x\in B_0)\mathbbm{1}(x+th\in B_0)\\
 & \cdot w(x,\pi_i, \pi_j) w(x+th,\pi_k, \pi_l)dxdt.
\end{align*}
Note that 
\begin{align*}
    \rho^*_N(x,x',\pi_i, \pi_j,\pi_k, \pi_l)=& N\mathbbm{E}\left[\frac{\{\Gamma_{\mathcal{N}}(x,\pi_i, \pi_j)-\mathbbm{E}[\Gamma_{\mathcal{N}}(x,\pi_i, \pi_j)]\}}{\sqrt{\kappa_{\Gamma,N}(x,\pi_i,\pi_j)}}\frac{\{\Gamma_{\mathcal{N}}(x',\pi_k, \pi_l)-\mathbbm{E}[\Gamma_{\mathcal{N}}(x',\pi_k, \pi_l)]\}}{\sqrt{\kappa_{\Gamma,N}(x',\pi_k,\pi_l)}}\right]\\
    =& \frac{\mathbbm{E}\left[Q(x,\pi_i,\pi_j)Q(x',\pi_i,\pi_j)Q(x,\pi_k,\pi_l)Q(x',\pi_k,\pi_l)\cdot \frac{1}{h^d}K\left(\frac{x-X}{h}\right)\frac{1}{h^d}K\left(\frac{x'-X}{h}\right)\right]}{\sqrt{\kappa_{\Gamma,N}(x,\pi_i,\pi_j)\kappa_{\Gamma,N}(x',\pi_k,\pi_l)}},
\end{align*}
where $$Q(x,\pi,\pi'):= \left\{[Y\cdot[\mathbbm{1}(\Pi=\pi)\hat{\phi}(x,T,\pi)-\mathbbm{1}(\Pi=\pi')\hat{\phi}(x,T,\pi')]-[\chi(\pi, x,T)-\chi(\pi', x,T)]\right\}.$$
By Lemma \ref{Gine} and a change of variable $x'=x+th$, we can show that
$$\rho^*_N(x,x',\pi_i, \pi_j,\pi_k, \pi_l)\to\frac{\int K(\xi)K(\xi+ t)d\xi}{\int K(\xi)d\xi}.$$

Therefore, by the bounded convergence theorem, we have that 
$$\lim_{n \to \infty}\bar{\tau}^2_{N,0}= \sigma^2_{1,B}.$$
Now, the desired result holds if 
\begin{equation}\label{lastr}
\bar{\tau}^2_{N,0}\to \bar{\sigma}^2_{N,0}\,\, \text{as}\,\, C\to \infty.    
\end{equation}

To prove (\ref{lastr}), let $\epsilon_0 \in (0, 1]$ and let $c(\epsilon_0) = (1 + \epsilon_0)^2 - 1$. Let $\mathcal{Q}_1$ and $\mathcal{Q}_2$ be two independent random
variables that are independent of $(\{Y_i, X_i\}_{i=1}^{\infty}, \mathcal{N})$, each having a two-point distribution that gives two points,
$\{\sqrt{c(\epsilon_0)}\}$ and $\{-\sqrt{c(\epsilon_0)}\}$, the equal mass of 1/2, so that $\mathbbm{E}[\mathcal{Q}_1]=\mathbbm{E}[\mathcal{Q}_2]=0$ and $\mathrm{Var}(\mathcal{Q}_1) = \mathrm{Var}(\mathcal{Q}_2) = c(\epsilon_0)$.
Let $S^\mathcal{Q}_{\tau, \mathcal{N}, 1}(x, \pi, \pi')= \frac{S_{\tau, \mathcal{N}}(x, \pi, \pi') + 2\mathcal{Q}_1}{1+ \epsilon_0}=\frac{S_{\tau, \mathcal{N}}(x, \pi) + \mathcal{Q}_1 +S_{\tau, \mathcal{N}}(x, \pi') + \mathcal{Q}_1}{1+ \epsilon_0}=:S^\mathcal{Q}_{\tau, \mathcal{N}, 1}(x, \pi)+S^\mathcal{Q}_{\tau, \mathcal{N}, 1}(x, \pi') $ and $S^\mathcal{Q}_{\tau, \mathcal{N}, 2}(x, \pi, \pi')= \frac{S_{\tau, \mathcal{N}}(x, \pi \pi') + 2\mathcal{Q}_2}{1+ \epsilon_0}=\frac{S_{\tau, \mathcal{N}}(x, \pi) + \mathcal{Q}_2 +S_{\tau, \mathcal{N}}(x, \pi') + \mathcal{Q}_2}{1+ \epsilon_0}=:S^\mathcal{Q}_{\tau, \mathcal{N}, 2}(x, \pi)+S^\mathcal{Q}_{\tau, \mathcal{N}, 2}(x, \pi').$
Define
\begin{align}
   \bar{\sigma}^2_{1N,\mathcal{Q}}:= &\int_{\mathcal{X}}\int_{\mathcal{X}} \mathrm{Cov}(| S^\mathcal{Q}_{\tau,\mathcal{N}, 1 }(x,\pi_i, \pi_j)|,| S^\mathcal{Q}_{\tau,\mathcal{N}, 2 }(x,\pi_k, \pi_l)|)
  \cdot \sqrt{\rho_2(x,\pi_i,\pi_j)\rho_2(x',\pi_k,\pi_l)} \nonumber\\
 & \times h^{-d}\cdot \mathbbm{1}(h^{-1}(x-x')\in  [-1,1]^d)\cdot w(x,\pi_i, \pi_j) w(x',\pi_k, \pi_l)dxdx',
\end{align}
and let \,$Z^{\mathcal{Q}}_{1, N}(x, \pi, \pi')= \frac{Z_{1, N}(x, \pi) + Z_{1, N}(x, \pi')+ 2\mathcal{Q}_1}{1+ \epsilon_0}=:Z^{\mathcal{Q}}_{1, N}(x, \pi)+Z^{\mathcal{Q}}_{1, N}(x, \pi')$ and $Z^{\mathcal{Q}}_{2, N}(x, \pi, \pi')= \frac{Z_{2, N}(x, \pi)+Z_{2, N}(x, \pi') + 2\mathcal{Q}_2}{1+ \epsilon_0}=:Z^{\mathcal{Q}}_{2, N}(x, \pi)+Z^{\mathcal{Q}}_{2, N}(x, \pi').$
Then $(Z^{\mathcal{Q}}_{1,N}(x, \pi_i, \pi_j),Z^{\mathcal{Q}}_{2,N}(x', \pi_k, \pi_l))$ is a mean zero multivariate Gaussian process such that, for each $(x, \pi_i, \pi_j) \in
\mathbbm{R}\times \Pi^2$ and $(x, \pi_k, \pi_l) \in \mathbbm{R}\times \Pi^2$, $(Z^{\mathcal{Q}}_{1,N}(x, \pi_i, \pi_j),Z^{\mathcal{Q}}_{2,N}(x', \pi_k, \pi_l))$ and $(S^\mathcal{Q}_{\tau, \mathcal{N}, 1}(x, \pi_i, \pi_j),S^\mathcal{Q}_{\tau, \mathcal{N}, 2}(x', \pi_k, \pi_l))$ have the same covariance
structure.

 Also, define
 \begin{align*}
   \bar{\tau}^2_{N,\mathcal{Q}}=&\int_{B_0}\int_{B_0} \mathrm{Cov}(| Z^{\mathcal{Q}}_{1,N}(x, \pi_i, \pi_j)|,| Z^{\mathcal{Q}}_{2,N}(x', \pi_k, \pi_l)|)\\
 & \cdot \sqrt{\rho_2(x,\pi_i,\pi_j)\rho_2(x',\pi_k,\pi_l)}\times h^{-d}\cdot\mathbbm{1}(h^{-1}(x-x')\in  [-1,1]^d)\\
 &\cdot w(x,\pi_i, \pi_j) w(x',\pi_k, \pi_l)dxdx'.
\end{align*}
Using the triangle inequality, we have
\footnotesize
\begin{align*}
 |\bar{\tau}^2_{N,\mathcal{Q}}-\bar{\sigma}^2_{N,\mathcal{Q}} |  =&\Bigg|\int_{B_0}\int_{B_0}\Bigg(\mathrm{Cov}(| Z^{\mathcal{Q}}_{1,N}(x, \pi_i, \pi_j)|,| Z^{\mathcal{Q}}_{2,N}(x', \pi_k, \pi_l)|) -\mathrm{Cov}(| S^\mathcal{Q}_{\tau,\mathcal{N}, 1 }(x,\pi_i, \pi_j)|,| S^\mathcal{Q}_{\tau,\mathcal{N}, 2 }(x,\pi_k, \pi_l)|)\Bigg)\\ &\cdot \sqrt{\rho_2(x,\pi_i,\pi_j)\rho_2(x',\pi_k,\pi_l)}
  \times h^{-d}\cdot\mathbbm{1}(h^{-1}(x-x')\in  [-1,1]^d)\cdot w(x,\pi_i, \pi_j) w(x',\pi_k, \pi_l)dxdx'\Bigg|\\
 \leq& \int_{B_0}\int_{B_0}\Bigg| \Bigg(\mathbbm{E}| Z^{\mathcal{Q}}_{1,N}(x, \pi_i, \pi_j)|\mathbbm{E}| Z^{\mathcal{Q}}_{2,N}(x', \pi_k, \pi_l)| 
 -\mathbbm{E}| S^\mathcal{Q}_{\tau,\mathcal{N}, 1 }(x,\pi_i, \pi_j)|\mathbbm{E}| S^\mathcal{Q}_{\tau,\mathcal{N}, 2 }(x,\pi_k, \pi_l)|\Bigg) \\ & \cdot \sqrt{\rho_2(x,\pi_i,\pi_j)\rho_2(x',\pi_k,\pi_l)}
  \times h^{-d}\cdot\mathbbm{1}(h^{-1}(x-x')\in  [-1,1]^d)\cdot w(x,\pi_i, \pi_j) w(x',\pi_k, \pi_l)dxdx'\Bigg|\\
 +& \int_{B_0}\int_{B_0}\Bigg| \Bigg(\mathbbm{E}| Z^{\mathcal{Q}}_{1,N}(x, \pi_i, \pi_j)|| Z^{\mathcal{Q}}_{2,N}(x', \pi_k, \pi_l)| -\mathbbm{E}| S^\mathcal{Q}_{\tau,\mathcal{N}, 1 }(x,\pi_i, \pi_j)|| S^\mathcal{Q}_{\tau,\mathcal{N}, 2 }(x,\pi_k, \pi_l)|\Bigg) \\
 & \cdot \sqrt{\rho_2(x,\pi_i,\pi_j)\rho_2(x',\pi_k,\pi_l)}
  \times h^{-d}\cdot\mathbbm{1}(h^{-1}(x-x')\in  [-1,1]^d)\cdot w(x,\pi_i, \pi_j) w(x',\pi_k, \pi_l)dxdx'\Bigg|\\
 \leq& \int_{B_0}\int_{B_0}\Bigg| \Bigg(\mathbbm{E}| Z^{\mathcal{Q}}_{1,N}(x, \pi_i)|\mathbbm{E}| Z^{\mathcal{Q}}_{2,N}(x', \pi_k)| -\mathbbm{E}| S_{\tau,\mathcal{N}, 1 }(x,\pi_i)|\mathbbm{E}| S_{\tau,\mathcal{N}, 2 }(x',\pi_k)| \\
 &+ \mathbbm{E}| Z^{\mathcal{Q}}_{1,N}(x, \pi_i)|\mathbbm{E}| Z^{\mathcal{Q}}_{2,N}(x', \pi_l)| -\mathbbm{E}| S_{\tau,\mathcal{N}, 1 }(x,\pi_i)|\mathbbm{E}| S_{\tau,\mathcal{N}, 2 }(x',\pi_l)|\\
 &+ \mathbbm{E}| Z^{\mathcal{Q}}_{1,N}(x, \pi_j)|\mathbbm{E}| Z^{\mathcal{Q}}_{2,N}(x', \pi_k)| -\mathbbm{E}| S_{\tau,\mathcal{N}, 1 }(x,\pi_j)|\mathbbm{E}| S_{\tau,\mathcal{N}, 2 }(x',\pi_k)|\\
  &+ \mathbbm{E}| Z^{\mathcal{Q}}_{1,N}(x, \pi_j)|\mathbbm{E}| Z^{\mathcal{Q}}_{2,N}(x', \pi_l)| -\mathbbm{E}| S_{\tau,\mathcal{N}, 1 }(x,\pi_j)|\mathbbm{E}| S_{\tau,\mathcal{N}, 2 }(x',\pi_l)|
 \Bigg) \\ 
 & \cdot \sqrt{\rho_2(x,\pi_i,\pi_j)\rho_2(x',\pi_k,\pi_l)}
 \times h^{-d}\cdot\mathbbm{1}(h^{-1}(x-x')\in  [-1,1]^d)\cdot w(x,\pi_i, \pi_j) w(x',\pi_k, \pi_l)dxdx'\Bigg|\\
 &+ \int_{B_0}\int_{B_0}\Bigg| \Bigg(\mathbbm{E}| Z^{\mathcal{Q}}_{1,N}(x, \pi_i)|| Z^{\mathcal{Q}}_{2,N}(x', \pi_k)| -\mathbbm{E}| S_{\tau,\mathcal{N}, 1 }(x,\pi_i)|| S_{\tau,\mathcal{N}, 2 }(x',\pi_k)| \\
 &+ \mathbbm{E}| Z^{\mathcal{Q}}_{1,N}(x, \pi_i)|| Z^{\mathcal{Q}}_{2,N}(x', \pi_l)| -\mathbbm{E}| S_{\tau,\mathcal{N}, 1 }(x,\pi_i)|| S_{\tau,\mathcal{N}, 2 }(x',\pi_l)|\\
 &+ \mathbbm{E}| Z^{\mathcal{Q}}_{1,N}(x, \pi_j)|| Z^{\mathcal{Q}}_{2,N}(x', \pi_k)| -\mathbbm{E}| S_{\tau,\mathcal{N}, 1 }(x,\pi_j)|| S_{\tau,\mathcal{N}, 2 }(x',\pi_k)|\\
  &+ \mathbbm{E}| Z^{\mathcal{Q}}_{1,N}(x, \pi_j)|| Z^{\mathcal{Q}}_{2,N}(x', \pi_l)| -\mathbbm{E}| S_{\tau,\mathcal{N}, 1 }(x,\pi_j)|| S_{\tau,\mathcal{N}, 2 }(x',\pi_l)|\Bigg)\\ & \cdot \sqrt{\rho_2(x,\pi_i,\pi_j)\rho_2(x',\pi_k,\pi_l)}
 \times h^{-d}\cdot\mathbbm{1}(h^{-1}(x-x')\in  [-1,1]^d)\cdot w(x,\pi_i, \pi_j) w(x',\pi_k, \pi_l)dxdx'\Bigg|\\
 =:&\Delta_{1N,\mathcal{Q}}(\pi_i,\pi_k)+\Delta_{1N,\mathcal{Q}}(\pi_i,\pi_l)+\Delta_{1N,\mathcal{Q}}(\pi_j,\pi_k)+\Delta_{1N,\mathcal{Q}}(\pi_j,\pi_l) \\
 &+\Delta_{2N,\mathcal{Q}}(\pi_i,\pi_k)+\Delta_{2N,\mathcal{Q}}(\pi_i,\pi_l)+\Delta_{2N,\mathcal{Q}}(\pi_j,\pi_k)+\Delta_{2N,\mathcal{Q}}(\pi_j,\pi_l).
\end{align*}
\normalsize
Following the proof of Lemma B.10 in \cite{chang2015nonparametric}, note that for all $\{\pi,\pi'\} \in \Pi,$
$\Delta_{1N\mathcal{Q}}(\pi,\pi')= o(1)$,\, and $\Delta_{2N\mathcal{Q}}(\pi,\pi')= o(1).$
 It is straightforward to show that
$|\bar{\sigma}^2_{N,\mathcal{Q}}-\bar{\sigma}^2_{N,0} |\to 0$ and 
$|\bar{\tau}^2_{N,\mathcal{Q}}-\bar{\tau}^2_{N,0} |\to 0$ as $\epsilon_0 \to 0.$ Hence the triangle inequality establishes (\ref{lastr}), and thus the lemma has been proved.
\end{proof}

Let $M$ be defined as in (\ref{M1})-(\ref{M3}) and let 

$$U_{_N}:= \frac{1}{\sqrt{N}}\left\{\sum_{i=1}^\mathcal{N}\mathbbm{1}[X_i \in \mathcal{B}(M)]-N \Pr(X \in \mathcal{B}(M))\right\}$$ and 
$$V_{_N}:= \frac{1}{\sqrt{N}}\left\{\sum_{i=1}^\mathcal{N}\mathbbm{1}[X_i \in \mathbbm{R}^{d}\setminus \mathcal{B}(M)]-N \Pr(X \in \mathbbm{R}^{d}\setminus\mathcal{B}(M))\right\}$$
Also define 
$$S_N:=\frac{T_{_{1N}}^P(B)}{\sigma_{1N}(B)}$$
\begin{lemma}
 Under the regularity conditions, we have that 
 $$(S_N, U_N)\overset{d}{\to}(\mathbbm{Z}_1, \sqrt{1-\alpha}\mathbbm{Z}_2)$$ where $\mathbbm{Z}_1$ and $\mathbbm{Z}_2$ are independent $N(0,1)$ random variables and $\alpha$ is defined as in (\ref{M2}).
\end{lemma}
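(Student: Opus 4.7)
The plan is to establish the joint convergence via a characteristic function / Cramér–Wold argument that leverages the independent-increment property of Poisson random measures on disjoint sets. First, decompose the reference set as the disjoint union $\mathcal{B}(M) = A_1 \cup A_2$, where $A_1 := B \oplus [-h,h]^d$ is the $h$-enlargement of $B$ and $A_2 := \mathcal{B}(M) \setminus A_1$. Since the kernel $K$ has support contained in $[-1/2, 1/2]^d$ and $B \subset \mathcal{B}(M, v)$ by (\ref{M1}), for all sufficiently small $h$ (specifically $h < v$, which holds eventually under Assumption \ref{regularity conditions}(g)), every Poisson point contributing to $T_{1N}^P(B)$ lies inside $A_1$; hence $S_N$ is measurable with respect to the restriction $\{W_i : X_i \in A_1\}$ of the Poisson sample. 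Correspondingly split $U_N = U_N^{(1)} + U_N^{(2)}$ with $U_N^{(j)} := N^{-1/2}(N_{A_j} - N\Pr(A_j))$. Because Poisson counts on disjoint sets are independent, $U_N^{(2)}$ is independent of $(S_N, U_N^{(1)})$, and the classical Poisson CLT gives $U_N^{(2)} \overset{d}{\to} N(0, \Pr(A_2))$.

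Next, I would prove the bivariate limit $(S_N, U_N^{(1)}) \overset{d}{\to} (\mathbbm{Z}_1, Z)$ with $Z \sim N(0, \Pr(A_1))$ independent of $\mathbbm{Z}_1$. The marginal convergence $S_N \overset{d}{\to} \mathbbm{Z}_1$ is the Poissonized CLT underlying Lemma \ref{asymptotic variance}, while $U_N^{(1)} \overset{d}{\to} Z$ is standard. By Cramér–Wold, it suffices to show $t S_N + s U_N^{(1)} \overset{d}{\to} N(0, t^2 + s^2 \Pr(A_1))$ for each $(t,s)$, i.e., that $\mathrm{Cov}(S_N, U_N^{(1)}) \to 0$. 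I would obtain the joint normality by partitioning $A_1$ into sub-cubes of side proportional to $h$, writing $t S_N + s U_N^{(1)}$ as a sum of (asymptotically) independent contributions — one per sub-cube — and applying a Lindeberg CLT using the moment conditions in Assumption \ref{regularity conditions}(e)–(f) and the bandwidth restrictions in Assumption \ref{regularity conditions}(g). The vanishing of the cross term would be verified by a Campbell/Mecke-formula computation: $U_N^{(1)}$ is the linear compensated functional of the Poisson measure on $A_1$, whereas $T_{1N}^P(B)$ is a nonlinear centered functional (integral of absolute deviations around their expectations), so its projection onto the first Poisson chaos is annihilated by the absolute-value centering, leaving a residual of order $O(h^{d/2}) = o(1)$.

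Finally, combining the independence of $U_N^{(2)}$ from $(S_N, U_N^{(1)})$ with the joint limit just established and noting $\Pr(A_1) + \Pr(A_2) = \Pr(\mathcal{B}(M)) = 1 - \alpha$ by (\ref{M2}), the decomposition $U_N = U_N^{(1)} + U_N^{(2)}$ yields $(S_N, U_N) \overset{d}{\to} (\mathbbm{Z}_1, \sqrt{1-\alpha}\mathbbm{Z}_2)$ with $\mathbbm{Z}_1 \indep \mathbbm{Z}_2$ standard normal, as claimed.

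The chief obstacle is the vanishing of $\mathrm{Cov}(S_N, U_N^{(1)})$: since both statistics are built from exactly the same Poisson points in $A_1$, their asymptotic independence is not automatic. The trick, paralleling arguments in the proof of Lemma B.11 of \cite{chang2015nonparametric} and Section 6 of \cite{gine2003bm}, is that the absolute-value centering $\mathbbm{E}|\Gamma_\mathcal{N}(x) - \mathbbm{E}\Gamma_\mathcal{N}(x)|$ subtracted inside $T_{1N}^P(B)$ orthogonalizes the nonlinear functional against the linear count functional $U_N^{(1)}$, with residual covariance controlled by $h^{d/2}$; verifying this uniformly over $x \in B$ and tying the estimate into a joint Lindeberg argument for the partition of $A_1$ is the technically demanding piece.
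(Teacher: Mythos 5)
Your overall architecture tracks the paper's proof closely --- a partition at scale $h$, the Cram\'er--Wold device, and asymptotic vanishing of the covariance between the nonlinear $L_1$ functional and the linear count functional --- but with one genuine organizational difference and one concrete gap. The difference: you first split $U_N = U_N^{(1)} + U_N^{(2)}$ over $A_1 = B \oplus [-h,h]^d$ and $A_2 = \mathcal{B}(M)\setminus A_1$, using independent increments of the Poisson process so that $U_N^{(2)}$ is independent of $(S_N, U_N^{(1)})$; this cleanly confines the hard bivariate problem to $A_1$. The paper never performs this split: it partitions all of $\mathcal{B}(M)$ into cubes $R_{\mathbf{i}}$ of side $h$, writes $\lambda_1 S_N + \lambda_2 U_N = \sum_{\mathbf{i}\in\mathcal{I}_i} y_{\mathbf{i},N}$, and handles everything in a single pass. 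Your split is legitimate (note only that $A_1, A_2$ vary with $N$ through $h$, so the limiting variances are $\Pr(X\in B)$ and $\Pr(X\in\mathcal{B}(M)) - \Pr(X\in B)$ rather than $\Pr(A_1)$ and $\Pr(A_2)$; they still sum to $1-\alpha$ by \eqref{M2}).

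The gap is in your central limit step on $A_1$: the per-cube contributions to $tS_N + sU_N^{(1)}$ are \emph{not} independent, and a plain Lindeberg CLT for independent summands does not apply. Since the kernel is supported on $[-1/2,1/2]^d$, the integral of the centered absolute deviation over a cube of side $h$ depends on Poisson points in the slightly enlarged cube, so adjacent-cube contributions are genuinely one-dependent --- your parenthetical ``(asymptotically) independent'' does not license the CLT you invoke. The paper resolves exactly this by appealing to the CLT of \cite{shergin1993central} for one-dependent random fields, and it must additionally verify the Lyapunov-type condition $\sum_{\mathbf{i}\in\mathcal{I}_i}\mathbbm{E}|y_{\mathbf{i},N}|^r = o(1)$ for some $2<r<3$ (the paper's condition \eqref{cond 2}), which your sketch omits entirely; you need the same theorem or an explicit big-block/small-block argument. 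Relatedly, your ``first-chaos annihilation'' heuristic for $\mathrm{Cov}(S_N, U_N^{(1)})\to 0$ is right in substance but not as stated: the cross-moment $\mathbbm{E}\left[|Z_{1N}|Z_{2N}\right]$ vanishes exactly only for the limiting bivariate Gaussian pair (by joint sign symmetry), while at finite $N$ the covariance is controlled by the quantitative normal comparison \eqref{Berry essen 3} together with the bound $\sup_{B}|\gamma^*_i(x,\pi_k,\pi_j)| = O(h^{d/2})$ on the correlation of the linearized statistic with $U_N$, yielding a residual of order $O\left(h^{d/2} + (Nh^{2d})^{-1/2}\right) = o(1)$ under Assumption \ref{regularity conditions}(g). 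The mechanism is thus a Berry--Esseen-type estimate as in \cite{gine2003bm} and \cite{chang2015nonparametric}, not a literal annihilation of the first Poisson chaos, and making your covariance step rigorous requires precisely that comparison.
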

\begin{proof}
Let
\begin{align*}
    \Delta_N(\pi_k,\pi_j, x)=&\sqrt{N}\Bigg\{\left| \Gamma_{\mathcal{N}}(x,\pi_k,\pi_j)-\mathbbm{E}[\Gamma_{\mathcal{N}}(x,\pi_k,\pi_j)] \right|\\
    &- \mathbbm{E}\left| \Gamma_{\mathcal{N}}(x,\pi_k,\pi_j)-\mathbbm{E}[\Gamma_{\mathcal{N}}(x,\pi_k,\pi_j)] \right| \Bigg\}\cdot w(x,\pi_k, \pi_j)
\end{align*}

Construct a partition of $\mathcal{B}(M).$ Consider  a regular grid
$G_{\textbf{i}}=(i_1h, (i_1 +1)h]\times\dots\times(i_dh, (i_d +1)h]$ where $\textbf{i}=(i_1, \dots, i_{d}),$ \,\,$i_1, \dots, i_{d}$ are integers.
Define $R_{\textbf{i}}=G_{\textbf{i}}\cap\mathcal{B}(M),$ $\mathcal{I}_i=\{\textbf{i}\in \mathbbm{Z}^d :(G_{\textbf{i}}\cap\mathcal{B}(M))\neq \emptyset \}.$ Then, we see that $\{R_{\textbf{i}}:\textbf{i}\in\mathcal{I}_i \subset  \mathbbm{Z}^d\}$ is a partition of $\mathcal{B}(M)$ with $\lambda(R_{\textbf{i}})\leq A_1h^d,$ $m_N:=\#(\mathcal{I}_i)\leq A_2h^{-d}$  for some positive constants $A_1$ and $A_2$.

In addition,set $$\alpha_{\textbf{i},N}=\frac{\int_{R_{\textbf{i}}}\mathbbm{1}(x \in B)\cdot\sum_{k=1}^K\sum_{j=1}^K\Delta_N(\pi_k,\pi_j, x) dx}{\sigma_N(B)}$$ and 
$$u_{\textbf{i},N}=\frac{1}{\sqrt{N}}\left\{ \sum_{j=1}^{\mathcal{N}}\mathbbm{1}(X_j \in R_{\textbf{i}} )- N\Pr(X \in R_{\textbf{i}}) \right\}.$$
Then, we have\,\,
$S_N= \sum_{\textbf{i}\in \mathcal{I}_i}\alpha_{\textbf{i},N}$ and $U_N= \sum_{\textbf{i}\in \mathcal{I}_i}u_{\textbf{i},N}.$ 
We can verify  that 
$\mathrm{Var}(S_N)= 1$ and $\mathrm{Var}(U_N)=1-\alpha.$
For arbitrary $\lambda_1$ and $\lambda_2 \in \mathbbm{R},$ let
$$y_{\textbf{i},N}=\lambda_1\alpha_{\textbf{i},N} +\lambda_2u_{\textbf{i},N}.$$ Notice that $\{y_{\textbf{i},N}: \textbf{i} \in \mathcal{I}_i \}$ is an array of mean zero one-dependent random fields. 

We need to show that
\begin{align}\label{cond 1}
\mathrm{Var}\left(\sum_{\textbf{i} \in \mathcal{I}_i}y_{_{\textbf{i},N}}\right)=& \mathrm{Var}(\lambda_1 S_N +\lambda_2 U_N) \to  \lambda_1^2 +\lambda_2^2 (1-\alpha)
\end{align}
and 
\begin{align}\label{cond 2}
\sum_{\textbf{i} \in \mathcal{I}_i}\mathbbm{E}|y_{_{\textbf{i},N}}|^r =o(1)\, \text{ for some } 2<r<3.
\end{align}
The  results of the Lemma follows from the central limit theorem of   \cite{shergin1993central} and the Cram\'er-Wold device.
 To show, \eqref{cond 1}, which holds if we have 
 \begin{align}
     &\mathrm{Cov}(S_N, U_N)=O\left(\frac{1}{\sqrt{Nh^{2d}}}\right),
      \end{align}
 which implies that
\begin{align}\label{cond 3}
     &\mathrm{Cov}\left(\int_{B}  \left\{\sqrt{N}\left|[(\Gamma_{\mathcal{N}}(x,\pi_k, \pi_j)]-\mathbbm{E}[(\Gamma_{\mathcal{N}}(x,\pi_k, \pi_j)]\right|\right\}w(x,\pi_j,\pi_k) dx, U_N\right)=O\left(\frac{1}{\sqrt{Nh^{2d}}}\right). 
 \end{align}

For any $(x, \pi_k, \pi_j) \in B \times \Pi^2$ we have 
\begin{align}\label{dist eq}
    \left(S_{_{\tau,\mathcal{N} }}(x,\pi_k, \pi_j), \frac{U_N}{\sqrt{\Pr(X \in B(M))}}\right)\overset{d}{=}\left(\frac{1}{\sqrt{N}}\sum_{i=1}^NQ^{(i)}_{_{\tau,N}}(x,\pi_k, \pi_j), \frac{1}{\sqrt{N}}\sum_{i=1}^NU^{(i)}\right)
\end{align}
where \,$(Q^{(i)}_{_{\tau,N}}(x,\pi_k, \pi_j),U^{(i)})$\, for $i=1,\dots, N$ are i.i.d. copies of $(Q_{_{\tau,N}}(x,\pi_k, \pi_j), U)$ with $Q_{_{\tau,N}}(x,\pi_k, \pi_j)$ is defined as 
\begin{align*}
    Q_{_{\tau,N}}(x,\pi_k, \pi_j)= \left[\sum_{i=1}^\eta \Theta(W_i,x,\pi_k, \pi_j)- \mathbbm{E}[\Theta(W,x,\pi_k, \pi_j)]\right]/ \sqrt{\mathbbm{E}[\Theta^2(W_i,x,\pi_k, \pi_j)]}
\end{align*}
and
\begin{align*}
   U= \left[\sum_{i=1}^\eta \mathbbm{1}[X_i \in \mathcal{B}(M)]-\Pr(X \in \mathcal{B}(M))\right]/ \sqrt{\Pr(X \in \mathcal{B}(M))}.
\end{align*}
where $\eta$ denote an independent Poisson random variable with mean 1 that is independent of $\{W_i:n\geq 1\}.$ 
Let $(Z_{1N}(x,\pi_k, \pi_j),Z_{2N})$\, for \,$(x,\pi_k, \pi_j) \in \mathbbm{R}\times\Pi^2$ be a mean zero Gaussian process such that, for each $(x,\pi_k, \pi_j)\in \mathbbm{R}\times\Pi^2,$\, $(Z_{1N}(x,\pi_k, \pi_j),Z_{2N})$ and the left-hand side of (\ref{dist eq}) has the same covariance structure. That is, 

$$(Z_{1N}(x,\pi_k, \pi_j),Z_{2N})\overset{d}{=}(\sqrt{1- (\gamma^*_i(x,\pi_k, \pi_j))^2}\mathbbm{Z}_1 + \gamma^*_i(x,\pi_k, \pi_j))\mathbbm{Z}_2,  \mathbbm{Z}_2),$$
where $\mathbbm{Z}_1$ and $\mathbbm{Z}_2$ are independent standard normal random variables and 
$$\gamma^*_i(x,\pi_k, \pi_j))=\mathbbm{E}\left[S_{\tau, \mathcal{N}}(x,\pi_k, \pi_j).\,\frac{U_N}{\Pr(X \in \mathcal{B}(M))}\right].$$

We can show that
$$\sup_{ B}\left|\mathbbm{E}\left[\frac{\tau_{\mathcal{N}}(x, \pi_k)-\mathbbm{E}[\tau_{\mathcal{N}}(x, \pi_k)]}{\sqrt{\kappa_{\Gamma,N}(x,\pi_k,\pi_j)}}\cdot \frac{U_N}{\Pr(X \in \mathcal{B}(M))}\right]\right|= O(h^{d/2}), \,\, \forall \pi\in \boldsymbol{\Pi}.$$

Using the triangle inequality, notice that we have 
\begin{align} \label{Bounded in Probability eq}
   \sup_{B }|\gamma^*_i(x,\pi_k, \pi_j))| =O(h^{d/2}),  \,\, \forall \pi\in \boldsymbol{\Pi},
\end{align}
which in turn is less than or equal to $\epsilon$ for all sufficiently large $N$ and any $0<\epsilon<1/2.$
This result and (\ref{Berry essen 3})  imply that 
\begin{align}
    \sup_{B }\left| \mathrm{Cov}\left(|S_{\tau, \mathcal{N}}(x,\pi_k, \pi_j)|, \frac{U_N}{\Pr(X \in \mathcal{B}(M))} \right) - \mathbbm{E}[|Z_{1N}(x,\pi_k, \pi_j)|Z_{2N}]\right|\leq O\left(\frac{1}{\sqrt{N h^{2d}}}\right)
\end{align}
Again, using the  triangle inequality, this implies that
\begin{align}\label{bounded in prob 1}
    &\sup_{B}\left| \mathrm{Cov}\left(\left|S_{\tau, \mathcal{N}}(x,\pi_k, \pi_j)\right|, \frac{U_N}{\Pr(X \in \mathcal{B}(M))} \right) - \mathbbm{E}\left[\left|\sum_{k=1}^K \sum_{j=1}^KZ_{1N}(x,\pi_k, \pi_j)\right|Z_{2N}\right]\right| \nonumber\\
    &\leq O\left(\frac{1}{\sqrt{N\cdot h^{2d}}}\right).
\end{align}

On the other hand, for all $\pi_k, \pi_j \in \boldsymbol{\Pi}$, 
\begin{align}\label{bounded in prob 2}
\sup_{B}|\mathbbm{E}[Z_{1N}(x,\pi_k, \pi_j)|Z_{2N}|=&\sup_{B }|\gamma^*_i(x,\pi_k, \pi_j)\mathbbm{E}[|Z_{1N}(x,\pi_k, \pi_j)|Z_{1N}(x,\pi_k, \pi_j)]\\
\leq& \sup_{B }|\gamma^*_i(x,\pi_k, \pi_j)|\mathbbm{E}[Z^2_{1N}(x,\pi_k, \pi_j)]\\
=& \sup_{B }|\gamma^*_i(x,\pi_k, \pi_j)|= O(h^{d/2}).
\end{align}
Using the law of iterated expectations and (\ref{Bounded in Probability eq}).

Therefore, (\ref{bounded in prob 1}) and (\ref{bounded in prob 2}) imply that 

$$\sup_{B}|\mathrm{Cov}(\sqrt{N}|[(\Gamma_{\mathcal{N}}(x,\pi_k, \pi_j)]-\mathbbm{E}[(\Gamma_{\mathcal{N}}(x,\pi_k, \pi_j)]|, U_N)|\leq O\left(\frac{1}{Nh^{2d}} + h^{d/2} \right)$$ 

which when combined with $\lambda(B)<\infty$ yields \eqref{cond 3} and  hence \eqref{cond 1} as desired.

Next we establish (\ref{cond 2}). \cite{chang2015nonparametric} shows that for any $\pi_k \in \Pi,$
\begin{align}\label{equation chang}
&\sum_{\textbf{i} \in \mathcal{I}_i}\mathbbm{E}\left|\frac{\int_{R_{\textbf{i}}}\mathbbm{1}(x \in B) \sqrt{N_k}\{|\tau_{\mathcal{N}}(x,\pi_k)-\mathbbm{E}\tau_{\mathcal{N}}(x,\pi_k) |- \mathbbm{E}[|\tau_{\mathcal{N}}(x,\pi_k)-\mathbbm{E}\tau_{\mathcal{N}}(x,\pi_k) |]\}w(x, \pi_k)}{\sigma_{_{N_{_k}}}(B)}\right|^{r}\nonumber\\
&\leq O(N_k\cdot h^{rd/2})=o(1)
\end{align}
Notice that,
\begin{align}\label{conc 1}
\begin{split}
\sum_{\textbf{i} \in \mathcal{I}_i}\mathbbm{E}|\alpha_{\textbf{i},N}|=&\sum_{\textbf{i} \in \mathcal{I}_i}\mathbbm{E}\left|\int_{R_{\textbf{i}}} \mathbbm{1}(x \in B)\cdot  \Delta_i(\pi_k, \pi_j, x)dx\right|^r\\
=& \sum_{\textbf{i} \in \mathcal{I}_i}\mathbbm{E}\left|\int_{R_{\textbf{i}}} \mathbbm{1}(x \in B)\cdot \Delta_i(\pi_k, \pi_j, x)dx\right|^r\\
=&o(1)
\end{split}
\end{align}
using (\ref{equation chang}). 

Also, from existing results  we can verify that
\begin{align}\label{conc 2}
\sum_{\textbf{i} \in \mathcal{I}_i}\mathbbm{E}|u_{_{\textbf{i},N}}|^r \to 0
\end{align}
Therefore, combining  \eqref{cond 1} and \eqref{cond 2}, we have (\ref{cond 2}). This now completes the proof of the Lemma
\end{proof}

We are now ready to prove asymptotic normality 
\begin{lemma}\label{equality }
 Under the regularity conditions, the following holds
 \begin{align*}
     &\lim_{N\to \infty}  \int_{B_0} \Bigg\{\sqrt{N}\mathbbm{E}\left[\left|[(\Gamma_{\mathcal{N}}(x,\pi_k, \pi_j)]-\mathbbm{E}[(\Gamma_{\mathcal{N}}(x,\pi_k, \pi_j)]\right|\right]- \mathbbm{E}|\mathbbm{Z}| \kappa^{1/2}_{\tau, N}(x,\pi_k,\pi_j)\Bigg\}w(x,\pi_j,\pi_k) dx=0
 \end{align*}

 and 
  \begin{align*}
     & \lim_{N\to \infty}  \int_{B_0}\Bigg\{\sqrt{N}\mathbbm{E}\left[\left|[(\Gamma_{\mathcal{N}}(x,\pi_k, \pi_j)]-\mathbbm{E}[(\Gamma_{\mathcal{N}}(x,\pi_k, \pi_j)]\right|\right]-\mathbbm{E}|\mathbbm{Z}_1|\cdot \kappa^{1/2}_{\tau, N}(x,\pi_k,\pi_j)\Bigg\}w(x,\pi_j,\pi_k) dx=0,
  \end{align*}
where $\mathbbm{Z}$  is a standard normal random variable.
\end{lemma}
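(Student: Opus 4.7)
\textbf{Proof Plan for Lemma \ref{equality }.} My plan is to establish the two limits by a pointwise Berry--Esseen argument on the Poissonized normalized sums, followed by integration over the compact set $B_0$. The key structural observation is that $\sqrt{N}(\Gamma_{\mathcal{N}}(x,\pi_k,\pi_j) - \mathbb{E}\Gamma_{\mathcal{N}}(x,\pi_k,\pi_j))$ is a centered compound Poisson sum of the iid variables $\{\Theta(W_i, x,\pi_k,\pi_j)\}$, with variance equal to $\kappa_{\Gamma,N}(x,\pi_k,\pi_j) = \mathbb{E}[\Theta^2(W,x,\pi_k,\pi_j)]$. Define the standardized variable
\begin{equation*}
S_{\tau,\mathcal{N}}(x,\pi_k,\pi_j) := \frac{\sqrt{N}\{\Gamma_{\mathcal{N}}(x,\pi_k,\pi_j) - \mathbb{E}\Gamma_{\mathcal{N}}(x,\pi_k,\pi_j)\}}{\sqrt{\kappa_{\Gamma,N}(x,\pi_k,\pi_j)}},
\end{equation*}
so that the integrand in the lemma factors as $\kappa_{\Gamma,N}^{1/2}(x,\pi_k,\pi_j)\cdot\bigl(\mathbb{E}|S_{\tau,\mathcal{N}}(x,\pi_k,\pi_j)|-\mathbb{E}|\mathbbm{Z}|\bigr)\cdot w(x,\pi_j,\pi_k)$. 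Reducing everything to controlling $|\mathbb{E}|S_{\tau,\mathcal{N}}|-\mathbb{E}|\mathbbm{Z}||$ uniformly in $x$ is the heart of the argument.

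Next, I would invoke a Berry--Esseen bound adapted to the compound Poisson setting. By conditioning on $\mathcal{N}$, applying the first inequality of the Fact from \cite{gine2003bm}, and integrating over the Poisson law of $\mathcal{N}$ (which has mean $N$ and bounded relative deviations by Bernstein's inequality), one obtains the bound
\begin{equation*}
\bigl|\mathbb{E}|S_{\tau,\mathcal{N}}(x,\pi_k,\pi_j)| - \mathbb{E}|\mathbbm{Z}|\bigr| \;\leq\; \frac{A\cdot \mathbb{E}|\Theta(W,x,\pi_k,\pi_j)|^3}{\sqrt{N}\,\kappa_{\Gamma,N}^{3/2}(x,\pi_k,\pi_j)},
\end{equation*}
uniformly for $x\in B_0$ and some universal constant $A>0$. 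Under Assumption \ref{regularity conditions}(e)--(f), the structure of $\Theta$ (a kernel-weighted term with factor $h^{-d}K((x-X_i)/h)$) gives the orders $\mathbb{E}|\Theta|^{r}=O(h^{-(r-1)d})$ for $r=2,3$, so that $\kappa_{\Gamma,N}(x,\pi_k,\pi_j)=O(h^{-d})$ and $\mathbb{E}|\Theta|^3=O(h^{-2d})$ uniformly on $B_0$. Substituting, the pointwise Berry--Esseen bound becomes $O((Nh^d)^{-1/2})$, and the integrand in the lemma is therefore of order $h^{-d/2}\cdot (Nh^d)^{-1/2}=O((N h^{2d})^{-1/2})$.

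Integration over the compact set $B_0$ against the bounded weight $w(x,\pi_j,\pi_k)$ then yields an integral of order $\lambda(B_0)\cdot O((Nh^{2d})^{-1/2})=o(1)$, since Assumption \ref{regularity conditions}(g) enforces $(Nh^{2d})^{1/2}/\log N\to\infty$, hence $Nh^{2d}\to\infty$. This settles the first claim. For the second claim, since the Gaussian process $\mathbbm{Z}_1$ used in the paper has been constructed (in the lead-up to \eqref{bivariate normal distribution generation}) so that $\mathbbm{E}|Z_{1n}(x,\pi_k,\pi_j)|=\mathbb{E}|\mathbbm{Z}_1|\cdot\kappa_{\Gamma,N}^{1/2}(x,\pi_k,\pi_j)$ up to the same $o(h^{d/2})$ discrepancy already controlled in Lemma \ref{asymptotic variance}, the identical pointwise-then-integrate chain applies verbatim; alternatively, one uses $\mathbb{E}|\mathbbm{Z}|=\mathbb{E}|\mathbbm{Z}_1|=\sqrt{2/\pi}$ and invokes the first claim directly.

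The main obstacle is the careful justification of the pointwise Berry--Esseen bound for the Poissonized sum with the stated third-moment scaling, in particular verifying that the Poisson fluctuation of $\mathcal{N}$ around $N$ does not spoil the $(Nh^d)^{-1/2}$ rate. This is handled by decomposing the event $\{|\mathcal{N}-N|>N^{2/3}\}$ (of negligible probability, contributing at most exponentially small error) from its complement, on which the conditional Berry--Esseen bound yields the advertised order uniformly. A secondary but routine obstacle is establishing the uniform lower bound $\inf_{x\in B_0}\kappa_{\Gamma,N}(x,\pi_k,\pi_j)\cdot h^{d}\geq c>0$, which follows from Assumption \ref{regularity conditions}(c) (strict positivity of $\rho_2$) combined with Lemma \ref{Gine}.
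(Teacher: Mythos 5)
Your proposal is correct in substance and, in its essentials, reconstructs exactly the argument that the paper delegates to Lemma 6.3 of \cite{gine2003bm}: factor the integrand as $\kappa_{\Gamma,N}^{1/2}(x,\pi_k,\pi_j)\bigl(\mathbbm{E}|S_{\tau,\mathcal{N}}(x,\pi_k,\pi_j)|-\mathbbm{E}|\mathbbm{Z}|\bigr)\,w(x,\pi_j,\pi_k)$; obtain a uniform-in-$x$ Berry--Esseen bound of order $(Nh^d)^{-1/2}$ from the first inequality of the Fact, using the moment orders $\kappa_{\Gamma,N}(x,\pi_k,\pi_j)=O(h^{-d})$ (bounded below after multiplication by $h^d$, via Assumption \ref{regularity conditions}(c) and Lemma \ref{Gine}) and $\mathbbm{E}|\Theta|^3=O(h^{-2d})$ (which indeed needs the third conditional moment in Assumption \ref{regularity conditions}(f)); multiply by $\kappa_{\Gamma,N}^{1/2}=O(h^{-d/2})$ and the bounded weight; and conclude over the compact set $B_0$ from $Nh^{2d}\to\infty$, guaranteed by Assumption \ref{regularity conditions}(g). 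Your observation that the second display follows from the first because $\mathbbm{E}|\mathbbm{Z}|=\mathbbm{E}|\mathbbm{Z}_1|=\sqrt{2/\pi}$ is also the honest reading: the two displays differ only in notation. The one place you genuinely depart from the route the paper invokes is the handling of the Poisson randomness. Giné and Mason exploit infinite divisibility, writing $S_{\tau,\mathcal{N}}(x,\pi_k,\pi_j)\overset{d}{=}N^{-1/2}\sum_{i=1}^N Q^{(i)}$ where the $Q^{(i)}$ are i.i.d.\ standardized compound-Poisson blocks built from $\eta\sim\mathrm{Poisson}(1)$ (this is exactly the role of $Q_{\tau,N}$ in the proof of Theorem \ref{main theorem}), so the Fact applies directly to an exact i.i.d.\ sum of $N$ mean-zero, variance-one terms with no conditioning. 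You instead condition on $\mathcal{N}$ and truncate the event $\{|\mathcal{N}-N|>N^{2/3}\}$.

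A caution on that variant: as sketched, your conditioning step does not produce a mean-zero, correctly standardized sum. Given $\mathcal{N}=m$, the centered variable equals $N^{-1/2}\sum_{i=1}^m(\Theta_i-\mathbbm{E}\Theta)+\frac{m-N}{\sqrt{N}}\,\mathbbm{E}\Theta$, with conditional variance $\frac{m}{N}\mathrm{Var}(\Theta)$ rather than $\kappa_{\Gamma,N}=\mathbbm{E}[\Theta^2]$; Poissonization is precisely what replaces the centered variance by the raw second moment, and conditioning undoes this. You must therefore control the random recentering term (manageable since $\mathbbm{E}|\mathcal{N}-N|=O(\sqrt{N})$ and $|\mathbbm{E}\Theta|=O(1)$, in fact of smaller order under the null) and the scale mismatch between $\mathrm{Var}(\Theta)$ and $\mathbbm{E}[\Theta^2]$, whose relative size is $O\bigl(h^{d}(\mathbbm{E}\Theta)^2\bigr)$. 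Both corrections are repairable and leave your final $(Nh^{2d})^{-1/2}$ rate intact, but they are real steps your plan omits; the infinite-divisibility representation avoids them entirely, which is why it is the cleaner (and the paper's intended) route.
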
 
\begin{proof}
Using Lemma \ref{Gine}, and similar arguments in the proof of Lemma 6.3 of \cite*{gine2003bm}, the results are established.
\end{proof}

Define 
\begin{align*}
    L_N(B)=& \frac{\sqrt{N}}{\sigma_N(B)}  \int_{B_0} \Bigg\{\left|[(\Gamma_{N}(x,\pi_k, \pi_j)]-\mathbbm{E}[(\Gamma_{N}(x,\pi_k, \pi_j)]\right| \\
    &- \mathbbm{E}\left|[(\Gamma_{N}(x,\pi_k, \pi_j)]-\mathbbm{E}[(\Gamma_{N}(x,\pi_k, \pi_j)]\right|\Bigg\}w(x,\pi_j,\pi_k) dx.
\end{align*}

\begin{lemma}
 Under the regularity conditions, we have 
  $$ L_N(B) \overset{d}{\to} \mathbbm{Z}$$ as $C \to \infty,$
where $\mathbbm{Z}$ is a standard normal random variable.
\end{lemma}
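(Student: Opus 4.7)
The plan is to prove this by de-Poissonization, leveraging the joint weak convergence of $(S_N,U_N)$ established in the preceding lemma together with the centering-equivalence result of Lemma \ref{equality }. The key geometric fact I will exploit is that $B\subset\mathcal{B}(M,v)$ lies strictly inside $\mathcal{B}(M)$, so for all $C$ large enough (hence $h$ small enough) the support $[-1,1]^d$ of the scaled kernel $K((x-\cdot)/h)$ for $x\in B$ is entirely contained in $\mathcal{B}(M)$. Consequently $T_{1N}^P(B)$ is a functional only of the Poisson point process restricted to $\mathcal{B}(M)$.

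First, I would relate $L_N(B)$ to the Poissonized object $S_N=T_{1N}^P(B)/\sigma_N(B)$. Conditioning on $\mathcal{N}=N$ reduces $\Gamma_{\mathcal{N}}$ to $\Gamma_N$ (iid sample of size $N$), and by the tower property $\mathbbm{E}[\Gamma_{\mathcal{N}}]=\mathbbm{E}[\Gamma_N]$, so the two random absolute deviations agree in that conditional law. The only discrepancy between $T_{1N}^P(B)\mid\{\mathcal{N}=N\}$ and $L_N(B)\sigma_N(B)$ is the deterministic centering term, $\sqrt{N}\,\mathbbm{E}|\Gamma_{\mathcal{N}}-\mathbbm{E}\Gamma_{\mathcal{N}}|$ versus $\sqrt{N}\,\mathbbm{E}|\Gamma_N-\mathbbm{E}\Gamma_N|$, and Lemma \ref{equality } shows both are asymptotically equal (each converges to the same integral of $\mathbbm{E}|\mathbbm{Z}|\cdot\kappa_{\tau,N}^{1/2}$ weighted by $w$). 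So $L_N(B)$ and $(S_N\mid\mathcal{N}=N)$ differ by a vanishing deterministic quantity.

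Second, I would upgrade the previous joint CLT to $(S_N,(\mathcal{N}-N)/\sqrt{N})\overset{d}{\to}(\mathbbm{Z}_1,\mathbbm{Z}_3)$ with $\mathbbm{Z}_1\perp\mathbbm{Z}_3$ and $\mathbbm{Z}_3\sim N(0,1)$. Decompose $(\mathcal{N}-N)/\sqrt{N}=U_N+V_N$. The previous lemma yields $(S_N,U_N)\overset{d}{\to}(\mathbbm{Z}_1,\sqrt{1-\alpha}\,\mathbbm{Z}_2)$ with $\mathbbm{Z}_1\perp\mathbbm{Z}_2$. Because $S_N$ depends only on Poisson points in $\mathcal{B}(M)$ while $V_N$ counts points of the independent Poisson restriction to $\mathbbm{R}^d\setminus\mathcal{B}(M)$, the independent-increments property of Poisson processes gives $S_N\perp V_N$, and the central limit theorem for $V_N$ yields $V_N\overset{d}{\to}\sqrt{\alpha}\,\mathbbm{Z}_4$ with $\mathbbm{Z}_4$ independent of $(\mathbbm{Z}_1,\mathbbm{Z}_2)$. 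Combining, $(S_N,U_N,V_N)\overset{d}{\to}(\mathbbm{Z}_1,\sqrt{1-\alpha}\mathbbm{Z}_2,\sqrt{\alpha}\mathbbm{Z}_4)$ jointly, so setting $\mathbbm{Z}_3:=\sqrt{1-\alpha}\mathbbm{Z}_2+\sqrt{\alpha}\mathbbm{Z}_4\sim N(0,1)$ delivers the required joint limit with $\mathbbm{Z}_1\perp\mathbbm{Z}_3$.

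Third, I would invoke the standard de-Poissonization lemma of Beirlant–Mason / Gin\'e–Mason–Zaitsev: given the joint convergence above with independent limits and $\mathbbm{Z}_3$ standard normal, the conditional law of $S_N$ given $\mathcal{N}=N$ converges to $\mathbbm{Z}_1$. Combined with step one and Slutsky's theorem (absorbing the $o(1)$ centering discrepancy), this yields $L_N(B)\overset{d}{\to}\mathbbm{Z}_1\sim N(0,1)$ as claimed. The main obstacle is the rigorous execution of the third step: the de-Poissonization lemma requires a uniform-integrability/tightness condition to pass from joint distributional convergence to convergence of conditional laws, and verifying this in the present weighted-integral setting — with $\sigma_N(B)$ playing a normalizing role — requires bounding third absolute moments of the integrand uniformly over the grid used in the previous lemma, exactly as in (\ref{equation chang}). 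Everything else is either a Poisson-independence argument or an appeal to results already proved.
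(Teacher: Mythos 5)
Your proposal is correct and takes essentially the same route as the paper: the paper's proof likewise conditions on $\mathcal{N}=N$, appeals to the de-Poissonization argument of \cite{gine2003bm} --- whose internals (the decomposition $(\mathcal{N}-N)/\sqrt{N}=U_N+V_N$, the independence of $S_N$ and $V_N$ via $B\subset\mathcal{B}(M,v)$ together with the independent increments of the Poisson process, and the joint limit $(S_N,U_N)\overset{d}{\to}(\mathbbm{Z}_1,\sqrt{1-\alpha}\,\mathbbm{Z}_2)$ from the preceding lemma) you simply make explicit --- and then removes the discrepancy between the Poissonized and empirical centering terms using the centering-equivalence lemma, exactly as in your first step. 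One minor quibble: the Beirlant--Mason/Gin\'e--Mason--Zaitsev de-Poissonization lemma requires no additional uniform-integrability or tightness condition beyond joint convergence, independence, and $\alpha\in(0,1)$; the third-moment bounds you flag as the main obstacle were already consumed in establishing the joint CLT via Shergin's theorem, so your step three is easier than you suggest.
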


\begin{proof}
Notice that
\begin{align*}
 S_N=& \frac{\sqrt{N}}{\sigma_N(B)} \int_{B}\left\{\sqrt{N}\left|[(\Gamma_{\mathcal{N}}(x,\pi_k, \pi_j)]-\mathbbm{E}[(\Gamma_{\mathcal{N}}(x,\pi_k, \pi_j)]\right|\right\}w(x,\pi_j,\pi_k)\\
 &-\frac{\sqrt{N}}{\sigma_N(B)} \int_{B} \left\{\sqrt{N}\mathbbm{E}\left|[(\Gamma_{\mathcal{N}}(x,\pi_k, \pi_j)]-\mathbbm{E}[(\Gamma_{\mathcal{N}}(x,\pi_k, \pi_j)]\right|\right\}w(x,\pi_j,\pi_k) dx   
\end{align*}

Adopting the de-poissonization arguments of \cite{gine2003bm} , we have 

\begin{align*}
 (S_N|\mathcal{N}=N)\overset{d}{=}&\frac{\sqrt{N}}{\sigma_N(B)} \int_{B} \left\{\sqrt{N}\left|[(\Gamma_{N}(x,\pi_k, \pi_j)]-\mathbbm{E}[(\Gamma_{N}(x,\pi_k, \pi_j)]\right|\right\}w(x,\pi_j,\pi_k)\\
 &-\frac{\sqrt{N}}{\sigma_N(B)} \int_{B}  \left\{\sqrt{N}\mathbbm{E}\left|[(\Gamma_{N}(x,\pi_k, \pi_j)]-\mathbbm{E}[(\Gamma_{N}(x,\pi_k, \pi_j)]\right|\right\}w(x,\pi_j,\pi_k) dx\\
 \to& \mathbbm{Z}  
\end{align*}

Now from  Lemma \ref{equality }, we know that 
\begin{align*}
&\lim_{C\to \infty}  \int_{B}\left\{\sqrt{N}\mathbbm{E}\left|[(\Gamma_{\mathcal{N}}(x,\pi_k, \pi_j)]-\mathbbm{E}[(\Gamma_{\mathcal{N}}(x,\pi_k, \pi_j)]\right|\right\}w(x,\pi_j,\pi_k) dx \\
-& \int_{B}  \left\{\sqrt{N}\mathbbm{E}\left|[(\Gamma_{\mathcal{N}}(x,\pi_k, \pi_j)]-\mathbbm{E}[(\Gamma_{\mathcal{N}}(x,\pi_k, \pi_j)]\right|\right\}w(x,\pi_j,\pi_k) dx = 0.
\end{align*}
Hence, we have 
$$ L_N(B) \overset{d}{\to} \mathbbm{Z}$$ as $C \to \infty,$ as required
\end{proof}

We are now ready to prove Theorem \ref{main theorem}.
We can show that for any $\pi_k \in \Pi,$
\begin{align} \label{Chang 2}
\begin{split}
    &\lim_{C\to \infty}\sup \mathbbm{E}\left(\sqrt{N_k}\int_{B_l^c}\{|\tau_{N_k}(x, \pi_k)-\mathbbm{E}\tau_{N_k}(x, \pi_k)|- \mathbbm{E}[|\tau_{N_k}(x, \pi_k)-\mathbbm{E}\tau_{N_k}(x, \pi_k)|]\}w(x,\pi_k)dx\right)^2\\
    &\leq C_2\lambda(\mathcal{X})\left( \sup_{x \in\mathcal{X} }\mathbbm{E}[|Y^2\hat{\phi}(\pi_k, x, T)^2||X=x]+E[|\chi(\pi_k, x,T)^2|]\right)\int_{B_l^c}f(x)dx,
\end{split}
\end{align}
where $C_2$ is a positive constant and $\{B_l:l>1\}$ is a sequence of Borel sets in $\mathbbm{R}^d$ that has a finite Lebesgue measure $\lambda(B_l)$ and satisfies \eqref{M1}-\eqref{M3} with $B_l=B$ for each $l$ and let 
\begin{align} \label{prere 1}
    \lim_{l\to \infty}\int_{B^c_l}f(x) dx =0.
\end{align}
Also,  for each $l \geq 1$, by Lemma \ref{asymptotic variance}, we have
\begin{align} \label{prere 2}
    \lim_{l\to \infty}\sigma^2_{1, B_l} = \sigma^2_1.
\end{align}

Following the proof of Theorem B.2 in \cite{chang2015nonparametric}, we can show that 
\footnotesize
\begin{align*}
    &\lim_{C\to \infty}\sup \mathbbm{E}\Bigg(\sqrt{N}\int_{B_l^c} \Bigg\{\left|[(\Gamma_{\mathcal{N}}(x,\pi_k, \pi_j)]-\mathbbm{E}[(\Gamma_{\mathcal{N}}(x,\pi_k, \pi_j)]\right|-\mathbbm{E}\left|[(\Gamma_{\mathcal{N}}(x,\pi_k, \pi_j)]-\mathbbm{E}[(\Gamma_{\mathcal{N}}(x,\pi_k, \pi_j)]\right|\Bigg\}w(x,\pi_j,\pi_k) dx\Bigg)^2\\ 
    &\leq  C_2\lambda(\mathcal{X})\cdot\Bigg\{\left( \sup_{x \in\mathcal{X} }\mathbbm{E}[|Y^2\hat{\phi}(\pi_k, x, T)^2||X=x]+E[|\chi(\pi_k, x,T)^2|]\right)\Bigg\}\cdot\int_{B_l^c}f(x)dx.\\   
\end{align*}
\normalsize

Using this result, with the results in \eqref{prere 1}--\eqref{prere 2} and Theorem 4.2 in \cite{billingsley1968convergence}, we conclude that 
$$
 \int_{\mathcal{X}}  \left\{\sqrt{N}\left|[\tau_N(x,\pi_k)-\tau_N(x,\pi_j)]-\mathbbm{E}[\tau_N(x,\pi_k)-\tau_N(x,\pi_j)]\right|\right\}w(x,\pi_j,\pi_k) dx \overset{d}{\to} \sigma_0 \mathbbm{Z}.
$$
The proof is complete because we can use Lemma (\ref{equality })
to show that
$$\lim_{C\to \infty}  \int_{B_0}\left\{\sqrt{N}\mathbbm{E}\left[\left|[(\Gamma_{\mathcal{N}}(x,\pi_k, \pi_j)]w(x,\pi_j,\pi_k)-\mathbbm{E}[(\Gamma_{\mathcal{N}}(x,\pi_k, \pi_j)]\right|\right]- a_i\right\} dx=0.$$

\end{alphasection}

\end{document}